\documentclass[11pt,a4paper]{article}

\usepackage{tikz,tikz-3dplot}
\usetikzlibrary{arrows.meta,calc}
\usepackage{authblk}

\usepackage[T1]{fontenc}
\usepackage{lmodern}
\usepackage[margin=27mm]{geometry}
\usepackage{amsmath,amssymb,amsthm,mathtools}
\usepackage{microtype,booktabs,array,enumitem,xcolor}
\usepackage{hyperref}
\hypersetup{colorlinks=true,linkcolor=blue!45!black,citecolor=blue!45!black,
 urlcolor=blue!45!black,pdftitle={Robust Approximation and the Arity Barrier at Width Two},
 bookmarksnumbered=true,bookmarksopen=true,bookmarksopenlevel=1}
\newtheorem{theorem}{Theorem}[section]
\newtheorem{lemma}[theorem]{Lemma}
\newtheorem{proposition}[theorem]{Proposition}
\newtheorem{corollary}[theorem]{Corollary}
\theoremstyle{definition}
\newtheorem{definition}[theorem]{Definition}
\theoremstyle{remark}
\newtheorem{remark}[theorem]{Remark}
\newcommand{\R}{\mathbb R}
\newcommand{\Q}{\mathbb Q}

\newcommand{\I}{\mathcal I}
\newcommand{\B}{\mathcal B}

\newcommand{\LM}{\operatorname{LM}}
\newcommand{\NF}{\operatorname{NF}}
\newcommand{\ml}{\operatorname{ml}}
\newcommand{\Maj}{\operatorname{Maj}}
\newcommand{\Typ}{\operatorname{Typ}}
\newcommand{\aff}{\operatorname{aff}}
\newcommand{\conv}{\operatorname{conv}}
\newcommand{\poly}{\operatorname{poly}}
\newcommand{\bits}[1]{\langle #1\rangle}
\newcommand{\supp}{\operatorname{supp}}

\setlist[enumerate]{itemsep=3pt,topsep=5pt}
\newcommand{\iprod}[2]{\langle #1, #2\rangle}

\newcommand{\pE}{\widetilde{\mathbb E}}

\title{Robust Approximation and the Arity Barrier at Width Two%
\thanks{Supported by the Swiss National Science Foundation projects
no.~200021\_207429/1 ``Ideal Membership Problems and the Bit Complexity
of Sum of Squares Proofs'' and no.~200021\_212929/1 ``Computational
methods for integrality gaps analysis''.}}
\author[1]{Nathan Benedetto Proen\c{c}a}
\author[2]{Kopp\'any Istv\'an Encz}
\author[1]{Monaldo Mastrolilli}
\affil[1]{Scuola Universitaria Professionale della Svizzera Italiana, IDSIA, Lugano, Switzerland}
\affil[2]{Universit\`a della Svizzera Italiana, IDSIA, Lugano, Switzerland}
\date{}
\begin{document}
\maketitle
\begin{abstract}
Zwick's algorithm for Horn satisfiability shows that constraints of
unbounded arity can admit a robust approximation guarantee independent
of the arity. For finite constraint languages, Barto and Kozik proved
that robust approximability is characterized by bounded width. We ask whether
arity-independent robustness persists beyond width one and show that
it fails already at width two. For Majority-closed Boolean linear
constraints of maximum arity $k\ge2$, we give a randomized polynomial-time
algorithm that, without knowing $\varepsilon$, violates an expected
$O(\sqrt{\varepsilon\log k})$ fraction of the constraint weight on
$(1-\varepsilon)$-satisfiable instances. Under the Unique Games
Conjecture (UGC), a matching NP-hardness lower bound of
$\Omega(\sqrt{\varepsilon\log k})$ holds in an explicit parameter
regime. 
Under this assumption, the bounded-width characterization therefore
does not extend uniformly to constraints of unbounded arity.

The algorithm rounds a degree-eight Sum-of-Squares (SoS) relaxation
with a single Gaussian threshold. Since polynomial size alone does not make
SoS solvable in polynomial bit complexity, we construct complete
truncated Gr\"obner bases for the soft Majority ideal and show that,
at every fixed degree $2d$, the relaxation can be optimized to any
rational accuracy in polynomial time with exactly feasible solutions,
and degree-$2d$ SoS proofs can be found after an additive perturbation
at degree at most $4d+10$. The lower bound combines Raghavendra's
gap-to-hardness theorem with an integrality gap on a Gaussian star,
analyzed via the Isaksson--Mossel theorem that parallel halfspaces
maximize the joint membership probability of exchangeable Gaussians.
\end{abstract}

\clearpage
\tableofcontents
\clearpage

\section{Introduction}
\label{sec:introduction}

In the context of Constraint Satisfaction Problems (CSPs), robust satisfiability asks to find, in polynomial time, an assignment satisfying
$1-g(\varepsilon)$ of the constraints whenever the instance has an
assignment satisfying $1-\varepsilon$, where
$g(\varepsilon)\to0$ as $\varepsilon\to0$. For fixed finite constraint
languages, bounded width is exactly the algebraic condition behind
this phenomenon: the same local consistency structure that decides
feasibility also supports robust
approximation~\cite{BartoK14,BartoK16,GuruswamiZ12}. This classification, however,
fixes every relation and hence fixes its arity. It does not answer the
uniform question in which a single constraint may
contain an unbounded number of variables.

There is a compelling positive precedent for asking whether bounded
width continues to guarantee robustness in this larger model. Zwick's
algorithm robustly approximates Horn satisfiability even for clauses of
unbounded arity~\cite{Zwick98}; in the terminology relevant here, this
is the paradigmatic width-one case. Its loss is independent of the
clause arity and tends to zero with the optimum deficit (the quantitative
form is $O(\log\log(1/\varepsilon)/\log(1/\varepsilon))$). Thus width one
can combine three properties that need not coincide once the arity
is unbounded: polynomial-time feasibility, a polynomial-size relaxation,
and a robust guarantee uniform over all arities. This naturally raises
the next question: does an analogous arity-independent guarantee hold
for every constant width, or at least at width two?

We work in a succinct input model: each constraint is a single linear
inequality $T_j-\sum_i a_{ji}x_i\ge0$ with rational coefficients in
binary (Definition~\ref{def:input}). A row of arity $k$ is specified by
$k$ variable indices and $k+1$ binary-encoded rationals; its encoding
length includes the bit lengths of these data. In contrast, an
explicit table of satisfying assignments may contain $2^k$ tuples.
Our lower bounds do not depend on
this choice: the hard instances have constant arity, so they apply
equally when relations are given explicitly. The input model affects
only the upper bound. With explicit relations the basic SDP already
supplies the penalties used by our rounding, while in the succinct
model we replace it by a polynomial-size degree-eight SoS relaxation,
whose exact solution is the subject of Part~\ref{part:foundations}.

\paragraph{The width-two barrier.}
This paper gives a negative answer already at width two. We study
Boolean linear relations preserved by coordinatewise majority. By the
Baker--Pixley theorem they are 2-decomposable~\cite{BakerPixley75}:
their unary and binary projections determine the whole relation, and
feasibility of conjunctions of such relations reduces to 2-SAT. They
therefore form a canonical width-two class~\cite{FederV98} with succinct constraints
of unbounded arity. Here we use the relational-width convention:
after replacing a Majority relation by its unary and binary
projections, consistency on pairs with extension through triples
decides feasibility, that is, width $(2,3)$. Width one refers to
consistency on singleton assignments with extension through a
constraint. These conventions do not bound the arity of the original
relations; see~\cite[Section~2.2]{BartoK16}.
Nevertheless, under UGC and
$\mathrm{NP}\not\subseteq\mathrm{BPP}$, there is no randomized
polynomial-time robust guarantee $g(\varepsilon)$ independent of arity.
Throughout, UGC alone yields the NP-hard promise distinctions;
the additional assumption is used when excluding randomized
algorithms (Remark~\ref{rem:hardness-quantifiers}).
Consequently, the arity-independent phenomenon
exhibited by Zwick at width one does not extend even to the next level
of the width hierarchy. Bounded width continues to explain robust
approximability for each fixed finite template, but width alone does
not control robust approximation uniformly over relations of
unbounded arity.

After establishing that a dependence on arity is necessary,
one may ask for its optimal rate.
The general construction of Barto and Kozik
\cite[Section~4.1, Proposition~4.1]{BartoK16} first replaces relations
of maximum arity $k$ over $D$ by unary and binary relations over
$D^k$. Its subsequent SDP has vectors indexed by the expanded domain
\cite[Section~4.2]{BartoK16}, whose size $|D|^k$ is exponential in
$k$. This concerns the size of that formulation in the uniform input
model; it is not a claim that every robust error constant must grow
exponentially with arity.
For the special case of Boolean linear relations, one may simply
collect all the 2-SAT clauses implied by the original instance
and promote a \(g(\varepsilon)\) robust approximation to 2-SAT
into a \(O(k^2 g(\varepsilon))\) robust approximation.
In particular, the algorithm by \cite{CMM2009} provides a
\(O(k^2 \sqrt{\varepsilon})\) robust approximation on constraints of
arity at most \(k\).

\paragraph{A tight quantitative answer.}
We determine exactly how arity enters. If $k$ is the maximum row
arity, the robust error is, up to universal constant factors,
\[
   \sqrt{\varepsilon\log k}
\]
in the common parameter range of the upper and lower bounds. For
$k\ge2$, Theorem~\ref{thm:algorithm} gives an unconditional randomized
polynomial-time algorithm with expected loss
$O(\sqrt{\varepsilon\log k})$. The algorithm does not require
$\varepsilon$ as input, returns a satisfying assignment when the
instance is satisfiable, and runs in time polynomial in the complete
binary encoding of the rational rows and weights, without an
exponential dependence on $k$.

In the other direction, Corollary~\ref{cor:arity-hardness} proves that
under the UGC, for $k\ge21$ and
\[
 0<\varepsilon<\min\left\{\frac9{64},
       \frac9{4\log(k-1)},\frac{\log(k-1)}{256}\right\},
\]
it is NP-hard to distinguish instances with optimum at least
$1-\varepsilon$ from instances with optimum at most
$1-\frac1{16}\sqrt{\varepsilon\log(k-1)}$. The two bounds therefore
match in their dependence on both the deficit and the arity. The
hardness proof gives finite rationally weighted instances, complete
basic-SDP solutions, and separated promise thresholds; the constants
in the rate are not optimized.

Taking $k=1+\lfloor e^{2/\varepsilon}\rfloor$ makes it UGC-hard to obtain
satisfiability above $1-5/64$ on $(1-\varepsilon)$-satisfiable
instances, for every fixed $0<\varepsilon\le1/20$. No function
$g(\varepsilon)\to0$ can therefore describe a randomized
polynomial-time guarantee uniformly over all arities under UGC and
$\mathrm{NP}\not\subseteq\mathrm{BPP}$. This does not conflict with robust
approximation at any fixed arity: for every fixed $k$, our upper bound
still tends to zero with $\varepsilon$. It identifies arity as the
parameter that separates the finite-template theorem and Zwick's
width-one result from the width-two setting considered here.

This perspective also distinguishes our result from fixed-predicate
rounding theorems. Charikar--Makarychev--Makarychev obtain
$O(\sqrt\varepsilon)$ robustness for Max-2SAT using Gaussian threshold
rounding~\cite{CMM2009}. Brakensiek--Ciardo--Guruswami--Potechin--{\v Z}ivn\'y
prove $O_\Gamma(\sqrt\varepsilon)$ for fixed Boolean promise templates
with Majority polymorphisms~\cite[Theorem~4.1, full version]{BCGPZ2026}.
The constants there may
depend on the fixed template. Here the relation itself is a succinct
part of the input, and the theorem exposes the optimal dependence on
its arity. Generic optimal-CSP
rounding~\cite{Raghavendra08,RaghavendraSteurer2009} does not by itself provide
this uniform algorithm or its bit complexity. We use Raghavendra's
gap-to-hardness theorem \cite{Raghavendra08} for the lower bound and
the Gaussian comparison of Isaksson--Mossel~\cite{IsakssonMossel2012}
in the gap construction.

\paragraph{The matching algorithm and its computational foundation.}
The upper bound uses a single Gaussian threshold assignment and a
degree-eight sum-of-squares relaxation. Unary and binary obstructions
to a row receive pseudoexpectation at most the row deficit. After
strongly biased literals are separated, the remaining conflict graph
of a row is a star. Conditioning on its center controls all conflicts
jointly through a Gaussian maximum; this replaces a union bound over
the binary description and produces precisely the factor
$\sqrt{\log k}$. At $d=4$, optimization gives a rational functional
whose normalized objective deficit is at most
$\varepsilon+\eta$, where $\eta>0$ is the requested optimization
accuracy. Finite-precision sampling contributes an arbitrarily small
$\nu>0$, and the explicit loss is
$128\sqrt{(\varepsilon+\eta)(1+\log k)}+\nu$.

Running this relaxation in polynomial time is a substantive part of
the result. Polynomial matrix dimension alone does not guarantee an
exactly feasible rational moment point with polynomial bit
complexity~\cite{ODonnell17}; see~\cite{raghavendra_weitz2017} for
sufficient conditions ensuring polynomial bit complexity of SoS proofs.
We overcome this by constructing, at every fixed degree, its complete
truncated reduced Gr\"obner basis and a rational basis of all truncated
vanishing identities, and by deriving both signs of the structured
basis elements in SoS. The hard Majority Gr\"obner structure is due to
Mastrolilli~\cite{Mastrolilli2021}; the soft extension and the required
derivations use clause extraction, implication paths, Boolean
refutations, and squared activation indicators.

The primal and dual criteria of~\cite{Mastrolilli2026} then yield
Theorem~\ref{thm:main}. At every fixed moment degree $2d$, the
strengthened body $K_{2d}$ admits polynomial-time rational optimization
to any positive rational additive accuracy, while remaining exactly
feasible. Degree-$2d$ SoS proofs can be found after an additive
perturbation at degree at most $4d+10$. At the rounding level,
\[
   \pi_8(S_{26})\subseteq K_8\subseteq S_8.
\]
The algorithm optimizes directly over degree-eight moments and does
not require constructing a degree-26 extension. The Gr\"obner basis
supplies the affine identities needed on the primal side; two-sided
SoS derivability supplies the transfer used on the dual side.

Taken together, the two halves of the paper answer both parts of the
question suggested by the width-one precedent. The algebraic half
shows that the relevant fixed-level relaxation can indeed be computed
in the succinct, unbounded-arity input model. The approximation half
shows exactly what its moments can deliver, and the matching UGC lower
bound proves that the resulting dependence on arity is unavoidable
for every randomized polynomial-time algorithm under the stated
hardness assumptions. Thus the failure of uniform
robustness at width two is not caused by an inability to solve the
relaxation: it is the sharp approximability boundary of the problem.

\paragraph{Organization.}
The introduction is common to both parts of the paper.
Part~\ref{part:approximation} presents the main approximation result
first. Section~\ref{sec:setup} defines the common soft Majority model,
and Section~\ref{sec:computational-preface} states the computational
input that supplies the degree-eight rational pseudoexpectation.
Section~\ref{sec:normalization} normalizes the rows and proves the
low-degree conflict penalties. Section~\ref{sec:rounding} gives the
Gaussian threshold algorithm, its joint star analysis, and its
finite-precision implementation. Section~\ref{sec:hardness} constructs
the Gaussian basic-SDP gap, discretizes it into a finite rational
instance, and derives the UGC lower bound and the arity-independent
barrier. Finally, Section~\ref{sec:combined} states the single theorem
combining primal--dual tractability, the upper bound, and hardness.

Part~\ref{part:foundations} proves the computational input used in
Part~\ref{part:approximation}. Section~\ref{sec:hard} develops the
hard-relation Gr\"obner structure, and Section~\ref{sec:softbasis}
proves the complete truncated basis construction for the soft system.
Section~\ref{sec:sos} gives the two-sided SoS simulation.
Section~\ref{sec:blackboxes} states the dual criterion and transfer
theorem used as black boxes. Section~\ref{sec:primal} states and applies
the primal black box to exact rational moment optimization, and then
combines both routes in the tractability theorem invoked in the first part.

\paragraph{Declaration on the use of AI tools}
The results of this paper, their proofs, and the overall approach are the
authors' own work. AI-based tools were used in a supporting role: to improve
the English exposition, to help check the statements and proofs, and to assist
in drafting this manuscript. All AI-assisted material was verified by the authors, who
take full responsibility for the correctness, originality, and presentation of
the paper.

\clearpage
\part{Matching robust approximation and hardness}
\label{part:approximation}

\section{Soft Majority systems and conventions}
\label{sec:setup}
We define the common input model and the soft system used by both
parts. We also record the algebraic conventions needed to state the
computational input precisely; its construction is proved in
Part~\ref{part:foundations}.

\begin{definition}\label{def:input}
For $a,b,c\in\{0,1\}^n$, their coordinatewise majority is
$\Maj(a,b,c)_i=1$ if and only if $a_i+b_i+c_i\ge2$.
A relation $R\subseteq\{0,1\}^n$ is \emph{Majority-closed} if
$\Maj(a,b,c)\in R$ for every $a,b,c\in R$; the empty relation is
allowed. A rational Boolean linear row
\[
 P_j(x)=T_j-\sum_{i=1}^n a_{ji}x_i\ge0
\]
is Majority-closed when $R_j=\{x\in\{0,1\}^n:P_j(x)\ge0\}$ is.
The input is a list of such rows, each Majority-closed, with rational
coefficients encoded in binary and dimensions $n\ge1$, $m\ge1$.
Each constraint comes equipped with a nonnegative rational weight
$w_j$, and these weights are normalized: $\sum_j w_j = 1$.
The algebraic constructions below also allow $m=0$, independently
of this normalization convention for weighted instances. Arity
is unrestricted. Majority-closedness of a row can be decided in
polynomial time (Remark~\ref{rem:recognition}), and our algorithm
checks it first.
\end{definition}

Choose rational numbers satisfying
\begin{equation}\label{eq:M}
 M_j\ge\max\left\{0,\sum_{i:a_{ji}>0}a_{ji}-T_j\right\}.
\end{equation}
They may be given as input or chosen equal to the right side. The soft
system, in $N=n+m$ variables $z=(x,y)$, is
\begin{equation}\label{eq:soft}
 \widehat P_j(x,y_j)=P_j(x)+M_j(1-y_j)\ge0,
 \qquad x_i^2-x_i=0,\quad y_j^2-y_j=0.
\end{equation}
Zero rows may be discarded as inequalities while their indicator
variables are retained. Constant rows are permitted. Condition
\eqref{eq:M} is precisely a sufficient nonnegative big-$M$ condition
making an inactive row valid on the whole Boolean cube. Thus, with
\[
 Y_K=\prod_{j\in K}y_j,\qquad
 F_K=\bigcap_{j\in K}R_j,\qquad F_\varnothing=\{0,1\}^n,
\]
the feasible set is
\begin{equation}\label{eq:S}
 S=\{(x,\mathbf1_H):H\subseteq[m],\ x\in F_H\}.
\end{equation}
It is nonempty, since $y=0$ is always feasible. No additional coupling
constraints on the indicators are imposed. Such constraints would
invalidate a central decomposition below.
The rational vanishing ideal is
\[
 \I(S)=\{f\in\Q[x,y]:f(s)=0\text{ for all }s\in S\}.
\]
Over $\R$ the truncated identity spaces are the real spans of their
rational versions. Let $I_{\mathrm{bool}}=\langle z_i^2-z_i:i\in[N]\rangle$.

Fix \emph{graded lexicographic} order with variable priority
$x_1>\cdots>x_n>y_1>\cdots>y_m$. The same priority restricted to
$x$ is used for every hard subsystem. Let $G_t$ be the elements of
degree at most $t$ in the \emph{full reduced} basis of $\I(S)$.
This definition does not mean a degree-bounded run of Buchberger's
algorithm. Degrees throughout are ordinary total degrees before
Boolean multilinearization. Write
\[
 W_s=\binom{N+s}{s},\qquad
 E_t(m)=\sum_{k=0}^{\min(t,m)}\binom{m}{k}.
\]
The input length $L$ includes dimensions, coefficients, and the chosen
$M_j$. A complexity bound $\poly(W_s,L)$ is a uniform Turing bit bound;
it is polynomial in the input size for each fixed $s$.

A degree-$D$ SoS proof of $h\ge0$ from \eqref{eq:soft} is an identity
\begin{equation}\label{eq:proof}
 h=\sigma_0+\sum_j\sigma_j\widehat P_j
       +\sum_{i=1}^N q_i(z)(z_i^2-z_i),
\end{equation}
where each $\sigma_j$ is a sum of squares and every displayed summand
has degree at most $D$. Rational PSD Gram matrices are the certificate
encoding; arbitrary rational ideal multipliers $q_i$ are allowed.
All certificates constructed in Section~\ref{sec:sos} can also be
written as sums of squares of rational polynomials.

\section{From rational moment optimization to robust approximation}
\label{sec:computational-preface}
We define the weighted instance by assigning to the rows nonnegative
rational weights $w_j$ summing to one.
Discard zero-weight rows, and let $k\ge1$ bound the number of
variables in a row.
Constant rows are allowed. Write
\[
 \operatorname{OPT}_{\rm int}
 =\max_{x\in\{0,1\}^n}\sum_jw_j\mathbf1_{\{P_j(x)\ge0\}},
 \qquad F_w(y)=\sum_jw_jy_j.
\]
The big-$M$ condition~\eqref{eq:M} implies that maximizing $F_w$ on
the Boolean soft feasible set gives exactly $\operatorname{OPT}_{\rm int}$:
set $y_j$ to the satisfaction indicator of row $j$. A general feasible
point may have $y_j=0$ even when that row is satisfied.
All logarithms in this part are natural unless a base is displayed.

Let $S_8$ denote the ordinary degree-eight soft-system moment body:
its elements are normalized degree-eight functionals on the Boolean
quotient that are nonnegative on $q^2$ for $\deg q\le4$ and on
$q^2\widehat P_j$ whenever $\deg(q^2\widehat P_j)\le8$.
Equivalently, its moment matrix has order four and the localizing
matrix of every nonconstant linear row has order three. The complete
matrix definition at arbitrary degree appears in
Section~\ref{sec:primal}.

\begin{theorem}[Computational input to rounding]
\label{thm:rounding-input}
For rational $\eta>0$, there is a polynomial-time Turing algorithm
that computes a rational degree-eight pseudoexpectation $\pE$ exactly
feasible for
\begin{equation}\label{eq:rounding-K8}
 K_8=\{u\in S_8:L_u(h)=0\text{ for all }h\in\I(S)_{\le8}\}
\end{equation}
such that
\begin{equation}\label{eq:rounding-opt}
 \pE[F_w]\ge\max_{u\in K_8}L_u(F_w)-\eta
             \ge\operatorname{OPT}_{\rm int}-\eta.
\end{equation}
Time and output bit length are bounded by
\begin{equation}\label{eq:rounding-time}
 \poly\!\left(\binom{n+m+8}{8},L,\bits w,\bits\eta\right).
\end{equation}
Its deficits satisfy
\begin{equation}\label{eq:rounding-deficits}
 e_j=\pE[1-y_j]\in[0,1],\qquad
 \delta=\sum_jw_je_j=1-\pE[F_w]\le\varepsilon+\eta
\end{equation}
whenever $\operatorname{OPT}_{\rm int}\ge1-\varepsilon$.
\end{theorem}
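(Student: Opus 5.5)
This theorem is essentially the degree-eight specialization of the tractability theorem proved in Part~\ref{part:foundations}, together with two elementary observations about the objective. The plan is to invoke the primal black box of Section~\ref{sec:primal} at $d=4$. For every fixed degree $2d$, it optimizes a rational linear functional over $K_{2d}$ to additive accuracy $\eta$ in time $\poly(W_{2d},L,\bits{\text{objective}},\bits\eta)$, and it returns an exactly feasible rational point. Its hypotheses are a rational basis of $\I(S)_{\le 2d}$ computable in polynomial time and the relevant boundedness/interiority conditions. The basis is supplied by the complete truncated reduced Gr\"obner basis $G_{2d}$ of Section~\ref{sec:softbasis}, whose size is controlled by $W_{2d}$ and $E_{2d}(m)$. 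Applying this with objective $F_w$, whose bit length is $\bits w$, gives a rational $\pE\in K_8$ with $\pE[F_w]\ge\max_{u\in K_8}L_u(F_w)-\eta$, within the bound \eqref{eq:rounding-time}, since $W_8=\binom{n+m+8}{8}$.

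The second inequality in \eqref{eq:rounding-opt} follows from the fact that integral points lie in $K_8$. Take $x^\ast$ attaining $\operatorname{OPT}_{\rm int}$, and set $y^\ast_j=\mathbf1_{\{P_j(x^\ast)\ge0\}}$. By \eqref{eq:M} and \eqref{eq:S}, the point $s=(x^\ast,y^\ast)$ lies in $S$. Let $L_s$ be evaluation at $s$, reduced modulo the Boolean ideal. It is nonnegative on $q^2$ and on $q^2\widehat P_j$, because $\widehat P_j(s)\ge0$. It also annihilates every $h\in\I(S)_{\le8}$ by the definition of the vanishing ideal. Hence $L_s\in K_8$, and $L_s(F_w)=\sum_j w_jy^\ast_j=\operatorname{OPT}_{\rm int}$.

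For \eqref{eq:rounding-deficits}, we show $e_j\in[0,1]$ by writing $y_j$ and $1-y_j$ as squares modulo the Boolean ideal: $y_j\equiv y_j^2$ and $1-y_j\equiv(1-y_j)^2$. Since $\deg\le 2\le8$ and $\pE$ is normalized, both $\pE[y_j]\ge0$ and $\pE[1-y_j]\ge0$. Because $\sum_jw_j=1$ after discarding zero-weight rows, linearity gives $\delta=\sum_jw_j\pE[1-y_j]=1-\pE[F_w]$. Then \eqref{eq:rounding-opt} together with $\operatorname{OPT}_{\rm int}\ge1-\varepsilon$ gives $\delta\le\varepsilon+\eta$.

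The only substantive step is verifying that the primal black box applies: its hypotheses must be met with polynomial bit complexity. This is exactly where the Gr\"obner basis construction is needed, because without exact affine identities a polynomial-size SDP need not have polynomial-bit feasible points~\cite{ODonnell17}. I would cite the combined tractability theorem of Section~\ref{sec:primal} rather than reprove it. I would check only two points. First, the objective $F_w$ enters with bit length $\bits w$, and the output length stays polynomial in it. Second, $K_8$ as defined in \eqref{eq:rounding-K8} coincides with the body $K_{2d}$ there at $d=4$.
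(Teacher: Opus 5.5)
Your proposal is correct and follows the paper's proof: apply Theorem~\ref{thm:main}(iii) at $d=4$ with objective $F_w$, observe that the evaluation functional at an optimal integral point (with $y$ set to the satisfaction indicators) lies in $K_8$, and obtain the deficit bounds from positivity on $y_j^2$ and $(1-y_j)^2$ together with the Boolean identities. You spell out the membership check and the deficit computation in more detail than the paper does. The argument itself is the same.
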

\begin{proof}
Apply Theorem~\ref{thm:main}(iii) with $d=4$ and objective $F_w$.
Every feasible evaluation belongs to $K_8$. Positivity on the squares
$y_j^2$ and $(1-y_j)^2$, and the Boolean identities, give the deficit bounds.
\end{proof}

The computed moment matrix has order four, and each linear soft row
has a localizing matrix of order three. The functional lies in $S_8$, the body used by the rounding
analysis below. Its relation to higher levels of the hierarchy is not
needed in this part. For completeness we record that
Theorem~\ref{thm:main}(iv) also gives
\begin{equation}\label{eq:rounding-levels}
 \pi_8(S_{26})\subseteq K_8\subseteq S_8,\qquad
 \max_{S_{26}}F_w\le\max_{K_8}F_w\le\max_{S_8}F_w,
\end{equation}
where $S_{26}$ is the degree-26 soft moment body of
Section~\ref{sec:primal}. The algorithm optimizes only over $K_8$; it
neither optimizes over $S_{26}$ nor extends the computed moments to
that body. Feasibility is exact, and $\eta$ affects only the objective.

We will show that the resulting expected
loss is at most
\begin{equation}\label{eq:announced-approximation}
 \min\{1,\,128\sqrt{(\varepsilon+\eta)\log(ek)}+\nu\},
\end{equation}
where $\nu>0$ is a separate, arbitrarily prescribed sampling tolerance.

\section{Literal normalization and low-degree penalties}
\label{sec:normalization}
\subsection{Unary obstructions and binary conflicts}
After collecting coefficients of repeated variables, a signed row can
be written as
\begin{equation}\label{eq:literal-row}
 P_j=T_j-\sum_{\ell\in\Lambda_j}a_{j\ell}\,\ell(x),\qquad a_{j\ell}>0.
\end{equation}
Here \(V_j\subseteq[n]\) is the set of variables occurring in row \(j\),
and \(\Lambda_j\) is its set of \emph{literal occurrences}: one literal
\(\ell\in\{x_i,1-x_i\}\) for each \(i\in V_j\).
Each variable occurs at most once in a row, so the literals and the
variables of a row correspond bijectively and \(|\Lambda_j|=|V_j|\le k\).
Throughout, a subscript \(\ell,\ell'\) or \(\ell''\) ranges over
\(\Lambda_j\), while the letters \(i,i'\) are reserved for variables.
The threshold is
$T_j=T_j^{\text{old}}+\sum_{i:a_{ji}^{\rm old}<0}|a_{ji}^{\rm old}|$ and the
new coefficients are the absolute values of the nonzero old ones.
Complementation commutes with majority, so the normalized relation
is still Majority-closed. Its soft row is
$\widehat P_j=P_j+M_j(1-y_j)$ with
\(M_j\ge\max\{0,\sum_\ell a_{j\ell}-T_j\}\).
Occurrences in different rows refer to the same global variables;
the literal normalization does not make their choices independent.

Let row~\eqref{eq:literal-row} be Majority-closed with \(T_j\ge0\).
Call a singleton \(\{\ell\}\subseteq\Lambda_j\) \emph{forbidden}, or a unary
obstruction, when \(a_{j\ell}>T_j\), and a pair
\(\{\ell,\ell'\}\subseteq\Lambda_j\) \emph{forbidden}, or a binary
obstruction, when \(a_{j\ell}+a_{j\ell'}>T_j\).

\begin{lemma}[Conflicts in a linear Majority relation]
\label{lem:linear-conflicts}

Let row~\eqref{eq:literal-row} be Majority-closed with \(T_j\ge0\).
An assignment satisfies the row if and only if, in every forbidden
singleton and in every forbidden pair, at least one literal is false.
If \(T_j<0\) the relation is empty; if \(T_j\ge\sum_\ell a_{j\ell}\) it is
universal.
The descriptions can be constructed in polynomial bit time.
\end{lemma}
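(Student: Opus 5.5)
The argument rests on the Baker--Pixley theorem: a Majority-closed relation is determined by its binary projections, and we read those projections off directly from the linear row. Fix a Majority-closed row \eqref{eq:literal-row} with \(T_j\ge0\), and let \(R_j\) be its relation.

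\textbf{Necessity.} Suppose an assignment satisfies the row. The map \(\ell\mapsto\ell(x)\) is monotone in the literal, and all coefficients are positive. So if a forbidden singleton \(\{\ell\}\) had \(\ell\) true, the row would satisfy \(P_j(x)\le T_j-a_{j\ell}<0\). The same holds for a forbidden pair with both literals true, since \(P_j(x)\le T_j-a_{j\ell}-a_{j\ell'}<0\). This direction does not use Majority-closedness at all.

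\textbf{Sufficiency.} This is the direction that uses Majority, and I will prove it in three steps.

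\emph{Step 1: the all-false point.} Setting every literal false gives \(P_j=T_j\ge0\), so this assignment lies in \(R_j\). Call it \(\mathbf 0\) in literal coordinates.

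\emph{Step 2: binary projections.} For any \(\ell\), the assignment with only \(\ell\) true lies in \(R_j\) exactly when \(a_{j\ell}\le T_j\). Likewise, the assignment with only \(\ell,\ell'\) true lies in \(R_j\) exactly when the pair is not forbidden. Consequently, for literal variables \(u,v\), the projection of \(R_j\) onto \((u,v)\) consists of \((0,0)\), together with \((1,0)\) and \((0,1)\) when the corresponding singletons are allowed, and \((1,1)\) exactly when the pair is allowed (which also forces both singletons to be allowed). Each such pattern is witnessed by an element of \(R_j\) that is zero outside \(\{u,v\}\).

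\emph{Step 3: apply Baker--Pixley.} Now take an assignment \(x\) that avoids every forbidden singleton and pair. For each pair of coordinates, \((x_u,x_v)\) lies in the projection of \(R_j\) just computed. Coordinates outside \(V_j\) are unconstrained. By Baker--Pixley, a Majority-closed relation equals the set of tuples all of whose binary projections lie in the corresponding projections of the relation, so \(x\in R_j\).

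If one prefers not to cite the theorem, the same conclusion follows by induction on the number of true literals in \(x\). Let \(S\) be that set, with \(|S|\ge3\). For three distinct elements \(p,q,r\in S\), the sets \(S\setminus\{p\}\), \(S\setminus\{q\}\), \(S\setminus\{r\}\) each satisfy the hypothesis, so the corresponding assignments lie in \(R_j\) by induction. Their coordinatewise majority is the indicator of \(S\), since each element of \(S\) is missing from at most one of the three sets. Hence \(S\) lies in \(R_j\). The base cases \(|S|\le2\) are exactly Steps 1 and 2.

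\textbf{Degenerate cases.} If \(T_j<0\), then \(P_j(x)\le T_j<0\) for every \(x\), because every term \(a_{j\ell}\ell(x)\) is nonnegative; so the relation is empty. If \(T_j\ge\sum_\ell a_{j\ell}\), then \(P_j(x)\ge T_j-\sum_\ell a_{j\ell}\ge0\) for every \(x\), so the relation is universal.

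\textbf{Complexity.} Literal normalization adds and compares binary rationals. Listing the forbidden singletons takes \(k\) comparisons and listing the forbidden pairs takes \(\binom k2\) comparisons, each of polynomial bit cost. The whole construction therefore runs in polynomial bit time.

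The only delicate point is making sure that Majority-closedness is preserved under literal complementation, so that the induction can work in literal coordinates. The paper has already noted this: complementation commutes with majority.
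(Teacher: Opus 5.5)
Your proof is correct and is essentially the paper's argument. The paper notes that the relation is downward closed in literal coordinates, and that a minimal forbidden support with at least three members would be the majority of the three allowed supports obtained by deleting one member each; this is your induction step read top-down, and the paper likewise presents the lemma as the downward-closed linear case of Baker--Pixley 2-decomposability, while your necessity direction, degenerate cases and base cases $|S|\le2$ spell out what it leaves implicit.
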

\begin{proof}
In literal coordinates the relation is downward closed. If a minimal
forbidden support $A$ had at least three members, deleting three
different members would give three allowed supports whose majority
is $A$, a contradiction. Thus every minimal forbidden support has
size one or two. Their tests are exactly the displayed inequalities.
The construction uses $O(k^2)$ rational comparisons per row. This is
the downward-closed linear specialization of the 2-decomposability
property in Lemma~\ref{lem:2decomp}~\cite{BakerPixley75}.
\end{proof}

\begin{remark}[Recognizing Majority-closed rows]\label{rem:recognition}
Majority-closedness of a row can be tested directly. Call
$\sum_{\ell\in A}a_{j\ell}$ the weight of a support $A\subseteq\Lambda_j$. A row
\eqref{eq:literal-row} with $T_j\ge0$ is Majority-closed if and only if
every support avoiding its unary and binary obstructions has weight at
most $T_j$: one direction is Lemma~\ref{lem:linear-conflicts}, and
conversely the relation is then defined by unary and binary clauses,
hence Majority-closed by Lemma~\ref{lem:2decomp}. A row with $T_j<0$
defines the empty relation, which is Majority-closed. Among the literals
with $a_{j\ell}\le T_j$, call those with $a_{j\ell}>T_j/2$ heavy and the others
light. Two heavy literals always conflict and two light literals never
do. Since the relation is downward closed in literal coordinates, it
suffices to check the support of all light literals and, for each heavy
literal, that literal together with the light literals compatible with
it. This takes $O(k^2)$ rational comparisons per row.
\end{remark}

The conflict graph has vertex set $\Lambda_j$ and an edge
$(\ell,\ell')$ when $a_{j\ell}+a_{j\ell'}>T_j$.
A useful property of this threshold graph is that every triangle-free
induced subgraph is a star together with isolated vertices. Indeed,
choose a vertex $\ell_c$ of largest coefficient in that induced subgraph.
An edge $(\ell,\ell')$ avoiding $\ell_c$ would force both $(\ell_c,\ell)$
and $(\ell_c,\ell')$, forming a triangle. The center is chosen within the induced subgraph.
Unary obstructions are kept separately.

\subsection{Penalties from the original soft inequalities}
All (in)equalities in this subsection are modulo the Boolean equations; the degrees stated below also bound the equality multiples used in
those congruences.

By Lemma~\ref{lem:linear-conflicts}, a row is violated only when every
literal of some forbidden singleton or pair is true.
What the rounding needs is therefore a bound on the pseudo-correlation of
the literals in a forbidden set, that is, on the value the degree-eight
functional assigns to their product.
The next lemma bounds it by the deficit of the row,
and~\eqref{eq:pair-moments} below turns that bound into a statement about
the corresponding Gram vectors.

\begin{lemma}[Degree-eight conflict penalties]\label{lem:penalties}
For a degree-eight feasible pseudoexpectation of the original soft
system, every forbidden singleton or pair $A\subseteq\Lambda_j$ satisfies
\begin{equation}\label{eq:penalty}
 0\le\pE\left[\prod_{\ell\in A}\ell\right]\le e_j.
\end{equation}
An identically false row has $e_j=1$.
\end{lemma}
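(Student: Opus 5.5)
We prove the lower bound by exhibiting $\prod_{\ell\in A}\ell$ as a square modulo the Boolean equations. For the upper bound, we split this product along $y_j$ and $1-y_j$ and show that the localizing constraint of row $j$ forces the $y_j$ part to vanish. Throughout, write $\ell_A=\prod_{\ell\in A}\ell$. This has degree at most $2$, and each literal $\ell\in\{x_i,1-x_i\}$ is idempotent modulo $I_{\mathrm{bool}}$. Hence $\ell_A^2\equiv\ell_A$ with ideal multipliers of degree at most $4$. Here the row polynomial $P_j$ is the same polynomial before and after the literal normalization~\eqref{eq:literal-row}; only its presentation changes.

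\emph{Lower bound.} Since $\ell_A\equiv\ell_A^2$ and $\deg\ell_A\le2\le4$, positivity of the moment matrix gives $\pE[\ell_A]\ge0$.

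\emph{Upper bound.} Write $\pE[\ell_A]=\pE[\ell_A(1-y_j)]+\pE[\ell_Ay_j]$ and treat the two terms separately.

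For the first term, $(1-\ell_A)(1-y_j)$ is a product of idempotents, so it is congruent to its own square, which has degree at most $6$. Therefore $\pE[(1-y_j)-\ell_A(1-y_j)]\ge0$, that is, $\pE[\ell_A(1-y_j)]\le\pE[1-y_j]=e_j$.

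It remains to show $\pE[q]\le0$ for $q=\ell_Ay_j$, which has degree at most $3$. Apply the localizing constraint with this $q$: $\deg(q^2\widehat P_j)\le7\le8$, so $\pE[q^2\widehat P_j]\ge0$. Modulo the Boolean equations we have $q^2\equiv q$ and $q\,y_j(1-y_j)\equiv0$, hence $q^2\widehat P_j\equiv qP_j$. Moreover $q\ell\equiv q$ for $\ell\in A$. Together these give
\[
 q^2\widehat P_j\equiv\Bigl(T_j-\sum_{\ell\in A}a_{j\ell}\Bigr)q-\sum_{\ell\in\Lambda_j\setminus A}a_{j\ell}\,q\ell .
\]
Each $q\ell$ is a product of at most four idempotents, so it is congruent to its own square of degree at most $8$. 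Thus $\pE[q\ell]\ge0$, and since $a_{j\ell}>0$,
\[
 0\le\Bigl(T_j-\sum_{\ell\in A}a_{j\ell}\Bigr)\pE[q].
\]
Forbiddenness of $A$ means precisely that $T_j-\sum_{\ell\in A}a_{j\ell}<0$. Combined with $\pE[q]\ge0$ (again because $q$ is congruent to a square), this forces $\pE[q]=0$, and the bound $\pE[\ell_A]\le e_j$ follows.

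\emph{Identically false rows.} In this case $T_j<0$ in literal form, and the same computation with $q=y_j$ and $A=\varnothing$ gives $0\le\pE[y_j^2\widehat P_j]=T_j\pE[y_j]-\sum_\ell a_{j\ell}\pE[y_j\ell]\le T_j\pE[y_j]$. Hence $\pE[y_j]=0$ and $e_j=1$.

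\emph{Main obstacle.} The only delicate point is bookkeeping. Every congruence used above must be witnessed by Boolean multipliers whose products stay within degree $8$, and the terms $q\ell$ have degree $4$, so their squares reach exactly degree $8$. I would verify these degrees explicitly for pairs $A$, since that is the tightest case.
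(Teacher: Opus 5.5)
Your proof is correct and is essentially the paper's argument. The paper localizes row $j$ at the square $(y_jL_A)^2$ and reduces it to $-\gamma(y_jL_A)^2-\sum_{\ell\notin A}a_{j\ell}(y_jL_A\ell)^2$, which forces $\pE[y_jL_A]=0$; it then uses the square $\bigl((1-y_j)(1-L_A)\bigr)^2$ to obtain the bound $e_j$, and takes $A=\varnothing$ to handle $T_j<0$, exactly as in your steps, with the same degree counts ($7$ for the localized term, $8$ for the squares) that settle the bookkeeping you flagged.
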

\begin{proof}
Suppress $j$, put $L_A=\prod_{\ell\in A}\ell$, and set
$\gamma=\sum_{\ell\in A}a_\ell-T>0$. The identity needed is
\begin{equation}\label{eq:penalty-identity}
 (yL_A)^2\widehat P
 \equiv-\gamma(yL_A)^2
       -\sum_{\ell\notin A}a_\ell(yL_A\ell)^2.
\end{equation}
The big-$M$ term vanishes because $y^2(1-y)\equiv0$.
The localized term on the left has degree $2(|A|+1)+1\le7$;
every square on the right has degree at most $2|A|+4\le8$.
Localizing positivity on the left and positivity of all the squares
force $\pE[yL_A]=0$. Moreover,
\[
 L_A\equiv L_A^2,\qquad
 (1-y)(1-L_A)\equiv\big((1-y)(1-L_A)\big)^2.
\]
The last square has degree $2(|A|+1)\le6$. Its expectation is
$e-\pE[L_A]$, giving~\eqref{eq:penalty}. For $T<0$, take
$A=\varnothing$ and $\gamma=-T$ in~\eqref{eq:penalty-identity};
the degrees are at most four and $\pE[y]=0$.
Boolean reductions never increase these degree bounds.
\end{proof}

Set $s_0=1$ and $s_i=2x_i-1$.
The rational first-order moment matrix with entries $\pE[s_i
s_{i'}],\,\,0\le i,i'\le n$ is positive semidefinite with diagonal one.
Choose unit Gram vectors $v_0,v_1,\ldots,v_n$, and put $b_i=\langle
v_0,v_i\rangle$.
These are indexed by variables.
For a literal \(\ell=x_i\) put \((v_\ell,b_\ell)=(v_i,b_i)\), and for
\(\ell=1-x_i\) put \((v_\ell,b_\ell)=(-v_i,-b_i)\).
For each conflict \((\ell,\ell')\),
\begin{equation}\label{eq:pair-moments}
 \pE[\ell\ell']
 =\frac{1+b_\ell+b_{\ell'}+\langle v_\ell,v_{\ell'}\rangle}4
 =\frac14\langle v_0+v_\ell,v_0+v_{\ell'}\rangle\in[0,e_j].
\end{equation}
The two expressions agree because \(v_0\) is a unit vector.

\section{Gaussian threshold rounding with logarithmic arity dependence}
\label{sec:rounding}
\subsection{The global rounding rule}
Let $\Delta\in(0,1]$ be rational with $\delta\le\Delta$, set
$H=\lceil\log_2(2k)\rceil$, and choose the largest dyadic $s=2^{-q}$
such that $s^2\le\Delta/(128H)$. Then
\begin{equation}\label{eq:rounding-scale}
 \frac{\Delta}{512H}<s^2\le\frac{\Delta}{128H},\qquad
 \log(ek)\le H\le2\log(ek).
\end{equation}
Draw a single standard Gaussian vector $g$ and return
\begin{equation}\label{eq:threshold-rule}
 x_i=\mathbf1_{\{\langle g,v_i\rangle+b_i/s\ge0\}}
 \qquad(i\in[n]).
\end{equation}
This is exactly the Gaussian threshold scheme of
Charikar--Makarychev--Makarychev~\cite{CMM2009}, with its scale adapted
to the arity. Indeed, since $b_i=\langle v_0,v_i\rangle$ and $s>0$,
\[
 \mathbf 1_{\{\langle g,v_i\rangle+b_i/s\ge0\}}
 =
 \mathbf 1_{\{\langle v_i,v_0+s g\rangle\ge0\}}.
\]
Write \(p\coloneqq v_0+sg\) for the rounded point, so that a literal
\(\ell\) is true exactly when \(\langle v_\ell,p\rangle\ge0\).
It is also the form used for Majority in
Brakensiek--Ciardo--Guruswami--Potechin--{\v Z}ivn\'y~\cite[Section~4.1, full version]{BCGPZ2026}. The analysis below bounds
the failure of an entire row.
Opposite literal vectors have opposite scores (for $v_i$ and $-v_i$, the indicator expressions in \eqref{eq:threshold-rule} are exactly the opposite), so the assignments are globally consistent, outside null tie events.

\begin{theorem}[Rounding a feasible degree-eight functional]
\label{thm:matching-rounding}
Given any degree-eight feasible pseudoexpectation of the soft system,
the rule~\eqref{eq:threshold-rule} has expected unsatisfied weight at most
\begin{equation}\label{eq:rounding-bound}
 \min\{1,\,128\sqrt{\Delta\log(ek)}\}.
\end{equation}
The bound holds for every $\Delta\in(0,1]$ with $\delta\le\Delta$,
including when $\delta=0$.
\end{theorem}

\subsection{A joint estimate for all conflicts of a row}

The key estimate is Lemma~\ref{lem:joint-star}: all conflicts of a
row are controlled by one Gaussian window. We first reduce to this
star configuration, charging two exceptional events.

Call row $j$ \emph{bad} if $e_j\ge1/16$ and \emph{good} otherwise.
Bad rows have total weight at most
\[
 \sum_{j:e_j\ge1/16}w_j\le16\sum_jw_je_j=16\delta.
\]
Fix a good row. Retain its literal occurrences with bias
$b_\ell\ge-1/8$ and call the others \emph{excluded}. Every forbidden
singleton is excluded: Lemma~\ref{lem:penalties} gives
$\pE[\ell]\le e_j$, hence $b_\ell=2\pE[\ell]-1<-7/8$.
For an excluded literal, the rounding makes it true only if the
standard normal $Z_\ell=\langle g,v_\ell\rangle$ exceeds $1/(8s)$.
Thus
\[
 \Pr[\text{some excluded literal of row }j\text{ is true}]
 \le k\exp\!\left(-\frac1{128s^2}\right).
\]
The rounding itself is unchanged; this is only a partition of the
failure event. After charging a total expected loss of at most
$16\delta+k\exp(-1/(128s^2))$, Lemma~\ref{lem:linear-conflicts}
reduces every remaining violation to a forbidden pair of retained
literals.

Consider their conflict graph. By~\eqref{eq:pair-moments} and
Lemma~\ref{lem:penalties}, every retained edge satisfies
\begin{equation}\label{eq:r}
 \langle v_\ell,v_{\ell'}\rangle
 =4\pE[\ell\ell']-1-b_\ell-b_{\ell'}
 \le4e_j-1+\frac14<-\frac12.
\end{equation}
A triangle would therefore give three unit vectors whose sum has
squared norm strictly less than $3+2(-3/2)=0$, a contradiction.
Moreover, this is a threshold graph: $\ell\ell'$ is an edge exactly
when $a_{j\ell}+a_{j\ell'}>T_j$. Choose a retained vertex $c$ with
largest coefficient. Any edge $\ell\ell'$ not incident with $c$
would also force the edges $c\ell$ and $c\ell'$, creating a triangle.
Consequently every edge meets $c$: the graph is a star, possibly
with isolated vertices.

If there are no edges, the row is satisfied outside the events already
charged. Otherwise it remains to bound jointly the event that the
center and at least one leaf are both true.

\begin{lemma}[Joint star bound]\label{lem:joint-star}
For the retained star of good row $j$, with deficit $e_j$, the probability
that any edge has both endpoints true is at most
\begin{equation}\label{eq:star-bound}
 \frac1{\sqrt{2\pi}}
 \left(\frac{8e_j}{s}+\frac{16}{7}s\log(ek)\right).
\end{equation}
\end{lemma}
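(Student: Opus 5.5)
The plan is to condition on the Gaussian coordinate along the center direction, show that once the center is true every leaf can be true only inside a short window above the center's threshold, and then integrate that window against the bounded Gaussian density. Write $c$ for the center and $\ell$ for the leaves, at most $k$ of them. Put $\rho_\ell=\langle v_c,v_\ell\rangle<-\tfrac12$, which holds by~\eqref{eq:r}. Decompose $v_\ell=\rho_\ell v_c+w_\ell$ with $w_\ell\perp v_c$, so that $\sigma_\ell^2:=\|w_\ell\|^2=1-\rho_\ell^2\le 2(1+\rho_\ell)$. Set $Z_c=\langle g,v_c\rangle$ and $W_\ell=\langle g,w_\ell\rangle=\sigma_\ell N_\ell$, where the $N_\ell$ are standard normals independent of $Z_c$, but correlated among themselves arbitrarily. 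The center is true iff $Z_c\ge -b_c/s$. We write $Z_c=-b_c/s+\tau$ with $\tau\ge0$.

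\emph{Step 1: the leaf condition, given the center.} A leaf is true iff $b_\ell+s\rho_\ell Z_c+sW_\ell\ge0$. Substituting $Z_c$ turns this into
\[
sW_\ell\ \ge\ s|\rho_\ell|\tau-(b_\ell-\rho_\ell b_c).
\]
From~\eqref{eq:pair-moments} and Lemma~\ref{lem:penalties} we have $b_c+b_\ell\le 4e_j-1-\rho_\ell$, and therefore
\[
b_\ell-\rho_\ell b_c=(b_c+b_\ell)-(1+\rho_\ell)b_c\le 4e_j-(1+\rho_\ell)(1+b_c).
\]
Using the retention bound $b_c\ge-1/8$, the bound $1+\rho_\ell\ge\sigma_\ell^2/2$, and $|\rho_\ell|\ge\tfrac12$, a true leaf forces
\[
s\sigma_\ell N_\ell-\tfrac{7}{16}\sigma_\ell^2\ \ge\ \tfrac{s\tau}{2}-4e_j .
\]
Maximizing the left side over $\sigma_\ell\ge0$ gives the value $\tfrac{4}{7}s^2N_\ell^2$. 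Hence, on the event that the center and some leaf are both true,
\[
\tau\ \le\ \frac{8e_j}{s}+\frac{8}{7}\,s\max_\ell N_\ell^2 .
\]
The point of keeping the term $(1+\rho_\ell)(1+b_c)$ rather than discarding it is exactly this: it turns a linear Gaussian tail into a quadratic one with a factor $s$ in front. That gain is what produces $s\log(ek)$ instead of $\sqrt{\log k}$.

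\emph{Step 2: integration.} Condition on the orthogonal part $(N_\ell)_\ell$. Then $Z_c$ is independent of it and has density at most $1/\sqrt{2\pi}$. So the conditional probability that $\tau$ lies in $[0,\,8e_j/s+\tfrac87 s\max_\ell N_\ell^2]$ is at most $\frac1{\sqrt{2\pi}}$ times the window length. Taking expectations, the probability is at most
\[
\frac{1}{\sqrt{2\pi}}\Bigl(\frac{8e_j}{s}+\frac87 s\,\mathbb E\max_\ell N_\ell^2\Bigr).
\]

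\emph{Step 3: the maximum of squares.} It remains to prove $\mathbb E\max_{\ell}N_\ell^2\le 2\log(ek)$ for at most $k$ standard normals with arbitrary correlations. For this I would use Jensen with $t<\tfrac12$:
\[
\mathbb E\max_\ell N_\ell^2\le \frac1t\log\bigl(k(1-2t)^{-1/2}\bigr).
\]
Choosing for instance $t=\tfrac12(1-1/e)$, or a similar rational choice, should give at most $2\log k+2$. This requires checking the constants, and it is the step I expect to need the most care. The rest is exact algebra. The main conceptual obstacle is Step~1, namely extracting the sign-favourable term $(1+\rho_\ell)(1+b_c)\ge\tfrac{7}{16}\sigma_\ell^2$ so that the orthogonal fluctuation enters only quadratically. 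Ties have probability zero and do not matter.
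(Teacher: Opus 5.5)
Your Steps 1 and 2 are correct and are the paper's argument in different coordinates. Your $\tau$ is $\langle v_c,p\rangle/s$, and your $\sigma_\ell,N_\ell$ are the paper's $\rho_\ell$ and $\langle g,u_\ell\rangle$. Maximizing $s\sigma_\ell N_\ell-\frac7{16}\sigma_\ell^2$ over $\sigma_\ell\ge0$ is its AM--GM step with $\alpha\ge 7/8$, and conditioning on the orthogonal part is the same density argument.

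\textbf{The gap is Step 3, and it cannot be closed by checking constants.} The Jensen/moment-generating-function route cannot give $\mathbb E\max_\ell N_\ell^2\le 2\log(ek)$ for any $k\ge2$. Write $\epsilon=1-2t\in(0,1)$. Your bound becomes $\frac{2}{1-\epsilon}\bigl(\log k+\frac12\log\frac1\epsilon\bigr)$. Requiring this to be at most $2\log k+2$ is equivalent to $\log\frac1\epsilon+2\epsilon(\log k+1)\le 2$. The left side is minimized at $\epsilon=1/(2\log k+2)$, where it equals $1+\log(2\log k+2)$, and this exceeds $2$ as soon as $k\ge2$. For example, your suggested $t=\frac12(1-1/e)$ gives $\frac{2e}{e-1}\bigl(\log k+\frac12\bigr)\approx 3.16\log k+1.58$. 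More generally, any fixed $t<\frac12$ puts the coefficient $1/t>2$ in front of $\log k$. So your argument proves a bound of the right order, but not the displayed constant $\frac{16}{7}s\log(ek)$.

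\textbf{The repair is the tail-integration argument the paper uses.} First, your maximization actually yields $\frac47 s^2 (N_\ell^+)^2$, because for $N_\ell\le 0$ the maximum over $\sigma_\ell\ge0$ is $0$. So you may replace $N_\ell^2$ by $(N_\ell^+)^2$ throughout. Then
\[
\Pr\bigl[\max_\ell (N_\ell^+)^2>u\bigr]\le\sum_\ell\Pr\bigl[N_\ell>\sqrt u\bigr]\le k e^{-u/2},
\]
and hence
\[
\mathbb E\max_\ell (N_\ell^+)^2\le\int_0^{2\log k}du+\int_{2\log k}^{\infty}k e^{-u/2}\,du=2\log k+2 .
\]
If you keep the two-sided $N_\ell^2$, the crude bound $\Pr[|N|>x]\le 2e^{-x^2/2}$ only gives $2\log(2ek)$. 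You would then need the sharper tail $\Pr[N>x]\le\frac12 e^{-x^2/2}$ for $x\ge0$ to recover $2\log(ek)$.
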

\begin{proof}
Let \(c\) denote the literal at the center of the star, and
\(L \subseteq \Lambda_j\) its set of leaves, \(|L| \le k\)
(otherwise the event is empty).
Edge \((c,\ell)\) is violated exactly when \(\iprod{v_c}{p} \ge 0\) and
\(\iprod{v_\ell}{p} \ge 0\).
Since \(c\) is retained, \(b_c=\iprod{v_0}{v_c}\ge-\frac18\); set
\(\alpha \coloneqq 1 + b_c \ge \frac{7}{8}\).
For a vector \(w\) write \(w^\perp \coloneqq w - \iprod{w}{v_c} v_c\) for its
component orthogonal to \(v_c\).
For \(\ell \in L\) write \(v_\ell^\perp = \rho_\ell u_\ell\) with
\(\rho_\ell \ge 0\) and \(u_\ell \perp v_c\) a unit vector or \(0\), and
\(d_\ell = v_c + v_\ell = t_\ell v_c + v_\ell^\perp\) with
\(t_\ell = 1 + \iprod{v_c}{v_\ell}\).
Then
\begin{equation}\label{eq:js-sagitta}
  \rho_\ell^2
  = \| v_\ell - \iprod{v_\ell}{v_c}v_c \|^2
  = 1 - \iprod{v_\ell}{v_c}^2
  = t_\ell(2 - t_\ell) \le 2t_\ell
 \qquad \text{(Figure~\ref{fig:amgm}(a))}.
\end{equation}
By~\eqref{eq:pair-moments} and Lemma~\ref{lem:penalties},
\(\pE[c\,\ell] = \frac{1}{4}\iprod{v_0 + v_c}{v_0 + v_\ell} \le e_j\).
Since \(v_0 + v_\ell = (v_0 - v_c) + d_\ell\) and
\(\iprod{v_0 + v_c}{v_0 - v_c} = 0\),
\begin{equation}\label{eq:js-rhombus}
 4\pE[c\,\ell] = \iprod{v_0 + v_c}{d_\ell} = \alpha\, t_\ell + \iprod{v_0}{v_\ell^\perp},
 \qquad \text{so} \qquad
 \iprod{v_0}{v_\ell^\perp} \le 4e_j - \alpha\, t_\ell
 \qquad \text{(Figure~\ref{fig:amgm}(a))}.
\end{equation}
If edge \((c,\ell)\) is violated, then with \(z_\ell = \iprod{g}{u_\ell}^+\),
\begin{equation}
  \label{eq:js-chain}
  \iprod{v_c}{p}
  \le 8 e_j + \frac{8(sz_\ell)^2}{7}.
\end{equation}
Indeed,
\begin{align*}
 \tfrac{1}{2}\iprod{v_c}{p}
 &\le -\iprod{v_c}{v_\ell}\,\iprod{v_c}{p}
 &&\text{\(\iprod{v_c}{p} \ge 0\), \(-\iprod{v_c}{v_\ell} > \tfrac{1}{2}\)}\\
 &\le \iprod{v_\ell}{p} - \iprod{v_c}{v_\ell}\,\iprod{v_c}{p}
 &&\text{\(\iprod{v_\ell}{p} \ge 0\)}\\
 &= \iprod{v_\ell^\perp}{p}\\
 &= \iprod{v_0}{v_\ell^\perp} + s\iprod{g}{v_\ell^\perp}\\
 &\le 4e_j - \alpha\, t_\ell + s\rho_\ell z_\ell
   &&\text{by~\eqref{eq:js-rhombus};
          \(\iprod{g}{\rho_\ell u_\ell} \le \rho_\ell z_\ell\)}\\
 &\le 4e_j + s\rho_\ell z_\ell - \tfrac{\alpha}{2}\rho_\ell^2
   &&\text{by~\eqref{eq:js-sagitta}, \(\alpha > 0\)}\\
 &= 4e_j + \frac{s^2 z_\ell^2}{2\alpha}
   - \frac{\alpha}{2}\Bigl(\rho_\ell - \frac{s z_\ell}{\alpha}\Bigr)^2\\
 &\le 4e_j + \frac{s^2 z_\ell^2}{2\alpha}.
\end{align*}
The last three lines are the AM--GM step of Figure~\ref{fig:amgm}(b).
Doubling and using \(\alpha \ge \frac{7}{8}\), we
conclude~\eqref{eq:js-chain} for each violated edge \((c,\ell)\).

Since \(u_\ell \perp v_c\) we have \(z_\ell = \iprod{g}{u_{\ell}}^+ =
\iprod{g^\perp}{u_\ell}^+\), so each \(z_\ell\) depends on \(g\) only
through \(g^\perp\).
Hence, with \(\omega(g^\perp) =
8e_j + \frac{8}{7} s^2 \max_{\ell \in L} z_\ell^2\),
\[
 \{\text{some edge is violated}\} \subseteq
 \bigl\{0 \le \iprod{v_c}{p} \le \omega(g^\perp)\bigr\}.
\]
The decomposition \(g = \iprod{g}{v_c} v_c + g^\perp\) is orthogonal, so
\(g^\perp\) and \(\iprod{v_c}{g}\) are independent.
Conditionally on \(g^\perp\) the window \([0, \omega(g^\perp)]\) is therefore
fixed, while \(\iprod{v_c}{p} = b_c + s\iprod{v_c}{g}\) is normal with
variance \(s^2\) and so has density at most \(1/(s\sqrt{2\pi})\).
Thus
\begin{align*}
 \Pr[\text{some edge is violated}]
 &\le \Pr\bigl[0 \le \iprod{v_c}{p} \le \omega(g^\perp)\bigr]
 \\
 &= \mathbb E_{g^\perp}\Bigl[\Pr\bigl[0 \le \iprod{v_c}{p} \le \omega(g^\perp)
   \,\big|\, g^\perp\bigr]\Bigr]\\
 &\le \mathbb E_{g^\perp}\Bigl[\frac{\omega(g^\perp)}{s\sqrt{2\pi}}\Bigr]
  \\
 &= \frac{1}{\sqrt{2\pi}}\Bigl(\frac{8e_j}{s}
   + \frac{8}{7}\, s\, \mathbb E \max_{\ell \in L} z_\ell^2\Bigr).
\end{align*}
Moreover, using the definition of \(z_\ell\) and the Chernoff bound,
\[
  \Pr\left[\max_{\ell \in L} z_\ell^2 > u\right]
  = \Pr\left[\max_{\ell \in L} \iprod{g}{u_\ell} > \sqrt{u}\right]
  \le \sum_{\ell \in L} \Pr\left[\iprod{g}{u_\ell} > \sqrt{u}\right]
  \le k \exp(-\tfrac{u}{2}).
\]
Hence
\begin{align*}
  \mathbb E \max_{\ell \in L} z_\ell^2
  &= \int_{0}^{2 \log k} \Pr\Bigl[\max_{\ell \in L} z_\ell^2 > u\Bigr]du
  + \int_{2 \log k}^{+\infty} \Pr\Bigl[\max_{\ell \in L} z_\ell^2 > u\Bigr]du\\
  &\le 2\log k
    + \int_{2\log k}^\infty k \exp(-u/2)\, du\\
  &= 2\log(ek).
  &&\qedhere
\end{align*}
\end{proof}

\begin{figure}[!ht]
\centering
\pgfmathsetmacro{\tI}{0.25}
\pgfmathsetmacro{\rh}{sqrt(1-(1-\tI)^2)}
\pgfmathsetmacro{\ux}{-0.0946095}
\pgfmathsetmacro{\uy}{0.9955145}
\pgfmathsetmacro{\vox}{0.9987492}
\pgfmathsetmacro{\voz}{0.05}
\tdplotsetmaincoords{66}{124}
\resizebox{\linewidth}{!}{%
\begin{tikzpicture}[
  >={Stealth[length=2.4mm]},
  vec/.style={->,thick},
  lab/.style={font=\small},
  ]
\begin{scope}[scale=3.3,tdplot_main_coords]
  \filldraw[blue!8,draw=blue!40]
     (-1.1,-1.0,-1) -- (1.1,-1.0,-1) -- (1.1,1.35,-1) -- (-1.1,1.35,-1) -- cycle;
\end{scope}
\begin{scope}[scale=3.3]
  \shade[ball color=gray!10,opacity=0.3] (0,0) circle (1);
  \draw[gray!60] (0,0) circle (1);
\end{scope}
\begin{scope}[scale=3.3,tdplot_main_coords]
  \draw[gray!50,densely dotted] (1,0,0) arc[start angle=0,end angle=360,radius=1];
  \coordinate (O)  at (0,0,0);
  \coordinate (N)  at (0,0,1);
  \coordinate (S)  at (0,0,-1);
  \coordinate (V0) at (\vox,0,\voz);
  \coordinate (V0c) at (\vox,0,{\voz+1});
  \coordinate (Vi) at ({\rh*\ux},{\rh*\uy},{-(1-\tI)});
  \coordinate (P)  at ({\rh*\ux},{\rh*\uy},-1);
  \draw[black!55,very thick,domain=-40:62,samples=60,variable=\p]
     plot ({sin(\p)*\ux},{sin(\p)*\uy},{-cos(\p)});
  \draw[red!75!black,semithick,dashed,domain=-0.6:0.98,samples=60,variable=\r]
     plot ({\r*\ux},{\r*\uy},{-1+\r*\r/2});
  \draw[vec] (O) -- (N)  node[lab,above] {\(v_c\)};
  \draw[vec] (O) -- (V0) node[lab,below left] {\(v_0\)};
  \draw[vec,densely dashed,gray!70!black] (O) -- (V0c)
     node[lab,left] {\(v_0 + v_c\)};
  \draw[vec] (O) -- (Vi);
  \node[lab,anchor=south west] at ($(O)!0.8!(Vi)$) {\(v_\ell\)};
  \draw[gray,densely dotted] (O) -- (S);
  \draw[vec,violet] (S) -- (Vi)
     node[lab,midway,above left=-2pt] {\(d_\ell\)};
  \draw[vec,blue!70!black] (S) -- (P)
     node[lab,midway,below=1pt] {\(v_\ell^\perp = \rho_\ell u_\ell\)};
  \draw[vec,green!45!black] (P) -- (Vi);
  \node[lab,green!45!black,anchor=west] at ($(P)!0.45!(Vi)+(0,0.03,0)$)
     {\(t_\ell v_c\)};
  \fill (S)  circle (0.018) node[lab,below left] {\(-v_c\)};
  \fill (Vi) circle (0.018);
  \fill (O)  circle (0.012) node[lab,left] {\(0\)};
\end{scope}
\node[font=\small\bfseries] at (-3.6,3.9) {(a)};
\begin{scope}[xshift=4.9cm,yshift=-2.4cm,x=6.2cm,y=8.2cm]
  \draw[->] (0,0) -- (1.0,0) node[lab,right] {\(\rho_\ell\)};
  \draw[->] (0,0) -- (0,0.6);
  \draw[thick,blue!70!black,domain=0:0.92] plot (\x,{0.5*\x});
  \node[lab,blue!70!black,anchor=south east] at (0.3,0.17) {gain \(s \rho_\ell z_\ell\)};
  \draw[thick,green!45!black,domain=0:0.86,samples=80] plot (\x,{1-sqrt(1-\x*\x)});
  \node[lab,green!45!black,anchor=south] at (0.86,0.49) {cost \(\alpha\, t_\ell\) (sphere)};
  \draw[thick,red!70!black,dashed,domain=0:0.92] plot (\x,{\x*\x/2});
  \node[lab,red!70!black,anchor=west] at (0.93,0.42) {\(\tfrac{\alpha}{2} \rho_\ell^2\) (paraboloid)};
  \draw[densely dotted] (0.5,0) node[lab,below] {\(\rho_\ell = \dfrac{s z_\ell}{\alpha}\)}
      -- (0.5,0.25);
  \draw[<->,thick,orange!85!black] (0.5,0.125) -- (0.5,0.25);
  \node[lab,orange!85!black] (gap) at (0.2,0.36)
     {\(\dfrac{s^2 z_\ell^2}{2\alpha}\)};
  \draw[orange!85!black,thin] (gap.south east) -- (0.495,0.19);
  \node[font=\small\bfseries] at (-0.06,0.66) {(b)};
\end{scope}
\end{tikzpicture}}
\caption{Geometry of the proof of Lemma~\ref{lem:joint-star}, for
example data satisfying its hypotheses.
(a)~The leaf \(v_\ell\) near the antipode \(-v_c\). Its displacement
\(d_\ell = v_c + v_\ell\) splits into the tangential part
\(v_\ell^\perp = \rho_\ell u_\ell\), which lies in the tangent plane at \(-v_c\)
(blue, drawn below the sphere, which rests on it), and the normal part
\(t_\ell v_c\). The grey arc is the great circle through \(-v_c\) and
\(v_\ell\); the dashed red curve is the osculating paraboloid
\(-v_c + \rho u_\ell + \frac{\rho^2}{2} v_c\), which lies below the sphere,
so \(t_\ell \ge \rho_\ell^2/2\)~\eqref{eq:js-sagitta}. The penalty says that
\(d_\ell\) is almost orthogonal to \(v_0 + v_c\)~\eqref{eq:js-rhombus}.
(b)~The AM--GM step in~\eqref{eq:js-chain}, in normalised units
(\(\alpha = 1\), \(s z_\ell = \frac{1}{2}\)). The spherical cost
\(\alpha t_\ell\) is at least \(\frac{\alpha}{2}\rho_\ell^2\), so the
gain minus the cost is bounded above by the quadratic
\(s\rho_\ell z_\ell-\frac{\alpha}{2}\rho_\ell^2\).
This upper bound has maximum \(s^2z_\ell^2/(2\alpha)\) at
\(\rho_\ell=sz_\ell/\alpha\); the orange segment shows this gap.}
\label{fig:amgm}
\end{figure}

\begin{proof}[Proof of Theorem~\ref{thm:matching-rounding}]
Bad rows have total weight at most $16\delta$. They include all
identically false rows by Lemma~\ref{lem:penalties}. For good rows,
Lemma~\ref{lem:linear-conflicts} accounts for every possible violation.
The exclusion estimate and Lemma~\ref{lem:joint-star} therefore give
\begin{equation}\label{eq:rounding-average}
 \mathbb E[\mathrm{loss}]
 \le16\delta+k e^{-1/(128s^2)}
       +\frac{8\delta}{s\sqrt{2\pi}}
       +\frac{16s\log(ek)}{7\sqrt{2\pi}}.
\end{equation}
For $H\ge1$, the function $e^{-H/u}/u$ is nondecreasing on $(0,1]$.
Thus~\eqref{eq:rounding-scale} implies
\[
 k e^{-1/(128s^2)}\le k e^{-H/\Delta}
 \le k\Delta e^{-H}\le\Delta/e.
\]
Use $\delta\le\Delta$, $\Delta\le\sqrt{\Delta H}$, and both bounds
on $s$. The coefficient of $\sqrt{\Delta H}$ in
\eqref{eq:rounding-average} is at most
\[
 16+e^{-1}+\frac{8\sqrt{512}}{\sqrt{2\pi}}
       +\frac{16}{7\sqrt{256\pi}}<90.
\]
Finally $H\le2\log(ek)$ and $90\sqrt2<128$. The loss is also at most one.
\end{proof}

\subsection{Finite precision and a polynomial-time algorithm}
The cited analyses formulate the rounding using real Gaussian vectors.
The following statement records the finite-bit implementation needed
for our explicit randomized Turing-model guarantee.

\begin{proposition}[Rational implementation]\label{prop:rounding-bits}
From rational moment data, rational $\Delta>0$, and rational
$\nu\in(0,1)$, a randomized Turing algorithm implements the rounding
with expected loss at most
$\min\{1,128\sqrt{\Delta\log(ek)}+\nu\}$.
Time and random-bit usage are polynomial in the input encoding length,
$n,m,k$, $\bits\Delta$, and $\bits\nu$.
For a dyadic tolerance the precision dependence is polynomial in
\(\log_2(1/\nu)+1\).
\end{proposition}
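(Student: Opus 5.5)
Our plan is to simulate the real rounding of Theorem~\ref{thm:matching-rounding} with finite precision and to charge every discrepancy to an event of total probability at most $\nu$. We work with the Gram data directly and never form exact Gram vectors. The rational moment matrix $\Sigma$ with entries $\pE[s_is_{i'}]$ is PSD with unit diagonal, and the rule~\eqref{eq:threshold-rule} depends on $g$ only through the jointly Gaussian scores $Z_i=\langle g,v_i\rangle$, which have covariance $\Sigma_{ii'}=\langle v_i,v_{i'}\rangle$ for $i,i'\ge1$. The offsets are $b_i/s=\Sigma_{0i}/s$, which are rational because $s$ is dyadic and is computed exactly from $\Delta$ and $H$ by comparing rationals.

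\textbf{Sampling step.} We compute an approximate factorization $\widetilde V$ with $\|\widetilde V\widetilde V^{\top}-\Sigma\|\le\tau$, for example by an LDL$^\top$/Cholesky-type factorization of $\Sigma+\tau' I$ with square roots rounded to $\poly(n,\log(1/\tau))$ bits. We then draw a vector $\widetilde g$ of independent standard normals, each truncated to $|g_r|\le B$ and discretized to precision $2^{-p}$ by inverse-CDF sampling from random bits. Finally we set $\widetilde Z=\widetilde V\widetilde g$ and $x_i=\mathbf 1\{\widetilde Z_i+b_i/s\ge0\}$. Comparisons with rationals are exact.

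\textbf{Coupling.} We couple $\widetilde Z$ with an exact Gaussian vector $Z$ having covariance $\Sigma$. Write $\Sigma=VV^{\top}$, use the same continuous $g$, and take $\widetilde g$ to be its rounding. Then $\|\widetilde Z-Z\|_\infty\le\|\widetilde V-V'\|\,\|g\|+\|V\|\,\|\widetilde g-g\|$, where $V'$ is a suitable exact factor. Because the perturbation comes from the PSD shift $\tau'I$, a factor matching $\widetilde V$ up to $O(\sqrt{\tau'}+\text{rounding})$ exists. The assignments differ only if $\|g\|>B$, or if some $|Z_i+b_i/s|\le\theta$ for the coupling error $\theta$. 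Each $Z_i$ is standard normal, so the second event has probability at most $n\cdot 2\theta/\sqrt{2\pi}$, and the first has probability at most $n e^{-B^2/4}$ for $B^2\ge 4n$ by Gaussian norm concentration. Choosing $\theta\le\nu/(4n)$ and $B=O(\sqrt{n+\log(1/\nu)})$ makes the total mismatch probability at most $\nu$. On the complement, the loss equals the real-rounding loss, which is bounded by Theorem~\ref{thm:matching-rounding}. The loss lies in $[0,1]$, so the expected loss is at most $128\sqrt{\Delta\log(ek)}+\nu$. Taking the minimum with $1$ is trivial.

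\textbf{Main obstacle.} The delicate step is getting the factor error bound in polynomial bits when $\Sigma$ is singular or ill-conditioned, since Cholesky perturbation bounds depend on conditioning. Our first attempt uses the trick above: factor $\Sigma+\tau'I$, which has smallest eigenvalue at least $\tau'$, so that Cholesky is stable with bit cost polynomial in $\log(1/\tau')$. We then sample $Z'$ with covariance $\Sigma+\tau'I$ as $Z+\sqrt{\tau'}\xi$ with an independent Gaussian $\xi$, which adds only $O(\sqrt{\tau'}\|\xi\|_\infty)$ to the coupling error. This adds a $\log n$ tail term, and we absorb everything by setting $\tau'=(\nu/n)^{O(1)}$. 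All precisions are then $O(\log(n/\nu))$ bits beyond the input sizes, which gives time and random-bit usage polynomial in the encoding, $n,m,k$, $\bits\Delta$, and $\log_2(1/\nu)+1$.
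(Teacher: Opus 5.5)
Your proposal is correct, and its overall architecture is the paper's. You factor the variable Gram matrix, generate rational Gaussian coordinates by inverse-CDF sampling from random bits, and couple with an exact Gaussian vector. You then charge every disagreement to a norm-tail event plus the event that some score lies within the coupling error of zero. That event has probability at most $2n\theta/\sqrt{2\pi}$, because each exact score is a translated standard normal, and losses in $[0,1]$ turn the mismatch probability into an additive $\nu$.

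The genuine difference is how the factor is obtained. The paper never meets your ``main obstacle''. It computes $LDL^\top$ of the possibly singular matrix \emph{exactly} in rational arithmetic, skipping zero pivots, since a zero diagonal entry of a PSD residual forces a zero row. Then $L\sqrt D$ is an exact factor, minor bounds give $L,D$ polynomial bit length, and the only approximation is in the square roots of the pivots; conditioning plays no role. You instead regularize to $\Sigma+\tau'I$, which needs a perturbation lemma that you assert without proof. It does hold: if $AA^\top=\Sigma+\tau'I$, the polar decomposition $A=(\Sigma+\tau'I)^{1/2}U$ and $V'=\Sigma^{1/2}U$ give $V'V'^\top=\Sigma$. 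They also give $\|A-V'\|\le\sqrt{\tau'}$, since $\sqrt{\lambda+\tau'}-\sqrt\lambda\le\sqrt{\tau'}$. Your alternative coupling $Z'=Z+\sqrt{\tau'}\xi$ also works. Your route is more modular: via $\|M^{1/2}-N^{1/2}\|\le\|M-N\|^{1/2}$ it accepts any rational $\widetilde V$ with a certified small residual, e.g.\ from floating-point Cholesky. The price is an extra parameter $\tau'$ that must be quadratically small in $\theta/B$. The paper's exact-elimination route is more direct.

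There are a few minor slips.
\begin{itemize}
\item Your bound $ne^{-B^2/4}$ for $B^2\ge4n$ fails for large $n$. At $B^2=4n$ the $\chi^2_n$ tail is $e^{-cn+o(n)}$ with $c=\tfrac12(3-\log4)<1$. The estimate $\Pr[\|g\|\ge\sqrt n+t]\le e^{-t^2/2}$ with $B=O(\sqrt{n+\log(1/\nu)})$ repairs it.
\item Clipping each coordinate at the norm level $B$ makes the quantile derivative $e^{\Theta(B^2)}$. You therefore need $O(n+\log(1/\nu))$ bits per coordinate, not the $O(\log(n/\nu))$ you claim. This is still polynomial. The paper instead clips the uniform at $\nu_0=\nu/(32n)$, where $(\Phi^{-1})'\le1/\nu_0$.
\item You leave implicit how $\Phi^{-1}$ is evaluated to the required accuracy in polynomial bit time. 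The paper does this with a power series for $\Phi$ and bisection.
\end{itemize}
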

\begin{proof}
$H$ and $q$ are obtained by binary integer and rational comparisons;
$q=O(\bits\Delta+\log H)$. All thresholds $-b_i/s$ are rational
with polynomial bit length. Factor the rational variable Gram matrix
$G$, with entries $G_{ii'}=\pE[s_is_{i'}]$ for $1\le i,i'\le n$,
as $LDL^\top$ by exact rational
elimination, omitting zero pivots. A zero diagonal pivot in a positive
semidefinite residual matrix has a zero row. Bounds on minors give
polynomial bit length for $L,D$. The real matrix $B=L\sqrt D$ has
unit row norms and realizes precisely the Gaussian scores needed.

Set $h=\nu/(10n)$. Choose a rational
$R=O(\sqrt{n\log(2n/\nu)})$ such that
$\Pr(\|g\|_2>R)\le\nu/4$, and compute a rational $\bar B$ with
$\|\bar B-B\|_F\le h/(2R)$. Rational square roots admit bisection
to this accuracy with polynomial bit cost.
For completeness, the following explicit inverse-CDF construction
produces a finite rational $\bar g$ coupled to the independent standard
Gaussian coordinates with $\|\bar g-g\|_2\le h/4$, except with
probability at most $\nu/4$. Put
\[
 \nu_0=\frac{\nu}{32n},\qquad \zeta=\frac{h}{16n},\qquad
 b=\left\lceil\log_2\frac1{\nu_0\zeta}\right\rceil.
\]
For each coordinate use $b$ independent random bits to sample a
uniform integer $J\in\{0,\ldots,2^b-1\}$, and clip $J/2^b$ to
$[\nu_0,1-\nu_0]$. Approximate its standard normal quantile by a
rational within $\zeta$. Write $\Phi$ and $\phi$ for the standard
normal CDF and density. To verify the coupling, let $U$ be uniform
on $[0,1]$ with $J=\lfloor2^bU\rfloor$, and let $g_i=\Phi^{-1}(U)$.
On $U\in[\nu_0,1-\nu_0]$, clipping cannot increase the grid error.
The derivative of $\Phi^{-1}$ on this interval is at most $1/\nu_0$:
for $|x|\ge1$ this follows from the Gaussian tail bound
$\Phi(-|x|)\le\phi(x)/|x|$, and for $|x|\le1$ from
$\phi(x)\ge\phi(1)>\nu_0$. Hence each coordinate error is at most
$2^{-b}/\nu_0+\zeta\le2\zeta$, so
\[
 \|\bar g-g\|_2\le2\sqrt n\,\zeta\le h/4.
\]
There are at most $n$ coordinates, and the excluded uniform tails
have total probability at most $2n\nu_0=\nu/16<\nu/4$.

This construction has polynomial bit cost. The quantiles lie in an
interval of radius $O(\sqrt{\log(1/\nu_0)})$. On that interval the
Gaussian CDF can be evaluated to absolute error at most
$\nu_0\zeta/8$ by its convergent power series, with polynomially many
terms and polynomial working precision in
$\log(1/\nu_0)+\log(1/\zeta)$. Bisection with these error bounds
computes the required quantile approximation: if the comparison is
inconclusive, the midpoint's CDF value is within $\nu_0\zeta/4$ of
the target. Both probabilities then lie in
$[\nu_0/2,1-\nu_0/2]$, where the inverse-CDF derivative is at most
$2/\nu_0$, so the midpoint is within $\zeta/2$ of the target
quantile. Otherwise the enclosing
interval is halved until its width is at most $\zeta$.
The number of random bits per coordinate is
$b=O(\log(n/\nu))$.

On the successful coupling the error in every score is at most $h$,
using $\|g\|_2\le R$ and $\|\bar B_{i,*}\|_2\le2$.
Each exact score is a translated standard normal, even if $G$ is
singular. The probability that any score lies within $h$ of zero is
at most $2nh/\sqrt{2\pi}$. Together with the two exceptional events,
the approximate and exact assignments differ with probability less
than $\nu$. Their losses lie in $[0,1]$, so their expectations differ
by at most $\nu$. Global variables are rounded once and negative
literals are complemented, including at numerical ties.
\end{proof}

\begin{theorem}[Uniform robust approximation]\label{thm:algorithm}
For rational weighted Majority-closed Boolean linear systems of
maximum arity at most $k\ge1$, there is a randomized polynomial-time
algorithm with expected loss at most
\begin{equation}\label{eq:uniform-upper}
 \min\{1,160\sqrt{\varepsilon\log(ek)}\}
\end{equation}
whenever $\operatorname{OPT}_{\rm int}\ge1-\varepsilon$.
It does not need to know $\varepsilon$, and returns a satisfying
assignment when the positive-weight rows are satisfiable.
More precisely, arbitrary rational tolerances $\eta,\nu\in(0,1)$ give
the bound~\eqref{eq:announced-approximation}, in time
\begin{equation}\label{eq:algorithm-complexity}
 \poly\!\left(\binom{n+m+8}{8},L,\bits w,\bits\eta,\bits\nu\right).
\end{equation}
The only optimized moment body is $K_8$.
\end{theorem}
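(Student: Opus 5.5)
The proof assembles three earlier results: Theorem~\ref{thm:rounding-input} for the moments, Theorem~\ref{thm:matching-rounding} for the rounding analysis, and Proposition~\ref{prop:rounding-bits} for the finite-precision implementation. The two remaining points are that $\varepsilon$ is unknown and that satisfiable instances must be solved exactly.

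\emph{Preprocessing.} The algorithm first discards zero-weight rows. It then checks Majority-closedness of every row via Remark~\ref{rem:recognition}. It normalizes literals and builds, by Lemma~\ref{lem:linear-conflicts}, the unary and binary clause description of every row, using $O(k^2)$ comparisons per row. The conjunction of these clauses is a 2-SAT instance equivalent to the positive-weight system. If it is satisfiable, the algorithm outputs a satisfying assignment found by the standard 2-SAT algorithm, with loss $0$. This settles the second claim and also the case $\varepsilon=0$ of \eqref{eq:uniform-upper}.

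\emph{General bound~\eqref{eq:announced-approximation}.} For given rational $\eta,\nu\in(0,1)$, the algorithm proceeds in three steps.
\begin{enumerate}
\item Apply Theorem~\ref{thm:rounding-input} to obtain a rational $\pE\in K_8\subseteq S_8$. Compute its deficit $\delta=1-\pE[F_w]$ exactly; this is a rational number.
\item Set $\Delta=\max\{\delta,\eta\}$, which is rational and lies in $(0,1]$. Note that $\varepsilon$ is never used. Whenever $\operatorname{OPT}_{\rm int}\ge1-\varepsilon$, \eqref{eq:rounding-deficits} gives $\delta\le\Delta\le\varepsilon+\eta$.
\item Theorem~\ref{thm:matching-rounding} applies to any degree-eight feasible functional, so it applies to this $\pE$ with this $\Delta$. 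Proposition~\ref{prop:rounding-bits} implements the rounding with expected loss at most $\min\{1,128\sqrt{\Delta\log(ek)}+\nu\}$. Monotonicity in $\Delta$ then yields \eqref{eq:announced-approximation}.
\end{enumerate}
The running time is the sum of \eqref{eq:rounding-time} and the polynomial bound of Proposition~\ref{prop:rounding-bits}. Here $n,m,k$ and the bit lengths of the moment data are all bounded by \eqref{eq:rounding-time}, which gives \eqref{eq:algorithm-complexity}. The only body optimized is $K_8$.

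\emph{Obtaining~\eqref{eq:uniform-upper} without knowing $\varepsilon$.} This is the main obstacle, and I would resolve it with an instance-dependent choice of tolerances. Let $w_{\min}>0$ be the smallest positive weight, a rational with polynomial bit length. If the preprocessing found the system unsatisfiable, then some positive-weight row fails under every assignment. Hence any valid $\varepsilon$ with $\operatorname{OPT}_{\rm int}\ge1-\varepsilon$ satisfies $\varepsilon\ge1-\operatorname{OPT}_{\rm int}\ge w_{\min}$.

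Take $\eta=\nu=w_{\min}/16$, so that $\bits\eta,\bits\nu=O(\bits w)$. Then $\varepsilon+\eta\le\frac{17}{16}\varepsilon$. Moreover $\nu\le\varepsilon/16\le\frac1{16}\sqrt{\varepsilon\log(ek)}$, using $\varepsilon\le1$ and $\log(ek)\ge1$. The resulting loss is therefore at most
\[
 \Bigl(128\sqrt{17/16}+\tfrac1{16}\Bigr)\sqrt{\varepsilon\log(ek)}<160\sqrt{\varepsilon\log(ek)}.
\]
Combined with the trivial cap of $1$, this gives \eqref{eq:uniform-upper}.
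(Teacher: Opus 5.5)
Your proof is correct and follows the paper's argument: 2-SAT preprocessing, then Theorem~\ref{thm:rounding-input}, Theorem~\ref{thm:matching-rounding} and Proposition~\ref{prop:rounding-bits}, and finally instance-dependent tolerances $\eta=\nu$ proportional to $w_{\min}$ (the paper uses $w_{\min}/2$). The one small difference is your choice $\Delta=\max\{\delta,\eta\}$, which still gives $\delta\le\Delta\le\varepsilon+\eta$ and avoids the paper's separate argument that $\delta>0$ on unsatisfiable instances.
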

\begin{proof}
First check, by Remark~\ref{rem:recognition}, that every row is
Majority-closed, and reject the input otherwise. Then solve the
conjunction of the rows by the polynomial-size 2-CNF
descriptions of Lemma~\ref{lem:linear-conflicts}; empty rows are
recognized immediately. If satisfiable, return a satisfying assignment.
Otherwise apply Theorem~\ref{thm:rounding-input}. Its rational
deficit $\delta$ is positive: if it were zero,
Theorem~\ref{thm:matching-rounding} with arbitrarily small positive
$\Delta$ would imply that the minimum loss on the finite Boolean cube
is zero, contradicting the failed satisfiability test. Take
$\Delta=\delta\le1$, and use Proposition~\ref{prop:rounding-bits}.
Since $\delta\le\varepsilon+\eta$, this proves the tolerance-dependent
bound and~\eqref{eq:algorithm-complexity}. The encoding length of
$\delta$ is polynomial by Theorem~\ref{thm:rounding-input}.

For a guarantee without a supplied $\varepsilon$, let $w_{\min}>0$
be the minimum row weight and choose $\eta=\nu=w_{\min}/2$.
On an unsatisfiable instance the optimal loss is at least $w_{\min}$,
so $\eta,\nu\le\varepsilon/2$. For $0<\varepsilon\le1$,
\[
 128\sqrt{(\varepsilon+\eta)\log(ek)}+\nu
 \le(128\sqrt{3/2}+1/2)\sqrt{\varepsilon\log(ek)}
 <160\sqrt{\varepsilon\log(ek)}.
\]
All choices have polynomial binary length.
\end{proof}

\subsection{Comparison with fixed-template Majority rounding}
\label{subsec:comparison-rounding}

Our random assignment is the Gaussian threshold rounding of
Charikar--Makarychev--Makarychev~\cite{CMM2009}. Brakensiek--Ciardo--%
Guruswami--Potechin--{\v Z}ivn\'y use the same rule, with scale
\(\sqrt{\varepsilon}\) in place of \(s \in \Theta(\sqrt{\delta/\log(k)})\),
to obtain \(O_\Gamma(\sqrt{\varepsilon})\) robustness for every fixed Boolean
template with Majority polymorphisms~\cite[Section~4 and
Theorem~4.1]{BCGPZ2026}. Their Theorem~4.4 bounds the probability of
one fixed forbidden output, after which a union bound over forbidden
outputs introduces a factor exponential in the constraint arity. Although
harmless for a fixed template, this does not give the best dependency
on the arity \(k\).

The new ingredient here is a joint analysis of an entire linear row.
After strongly biased literals are separated, the retained conflict
graph is a star. Conditional on the Gaussian component \(g^\perp\)
orthogonal to the center vector, each edge
violation provides an upper bound on the same inner product
\(\iprod{v_{c}}{p}\) via~\eqref{eq:js-chain}.
By considering the weakest upper bound, Lemma~\ref{lem:joint-star} obtains
a finer control than a union bound would.

The relaxation being solved is also different, and by necessity.
Without assuming bounded arity, forcing the existence of local distributions
may require exponentially-many constraints.
Here the degree-eight soft moment formulation has polynomial size;
Part~\ref{part:foundations} supplies exactly feasible rational moments, and
Proposition~\ref{prop:rounding-bits} gives a finite-precision
implementation. For fixed arity our relations fall within the scope of
the fixed-template theorem; the contribution here is the explicit
dependence on \(k\) matching the robust approximation algorithm.

\section{Gaussian gaps and UGC hardness}
\label{sec:hardness}
We use $q$ for the number of leaves in a hardness star, so that its
arity is $q+1$. This distinguishes it from the maximum arity $k$ in
the upper bound. All hardness parameters are fixed constants during
the corresponding reduction.

\subsection{The basic SDP and the gap-to-hardness theorem}
For a finite Boolean predicate language, the basic SDP has a local
probability distribution $\mu_C$ on the assignments of each constraint
scope, and a global positive semidefinite matrix $Y$ indexed by $0$
and $(i,a)$ for variables $i$ and labels $a\in\{-1,1\}$. It satisfies
$Y_{00}=1$, $\sum_aY_{ia,0}=1$, and
\[
 Y_{ia,0}=\Pr_{\mu_C}[z_i=a],\qquad
 Y_{ia,i'b}=\Pr_{\mu_C}[z_i=a,z_{i'}=b]
\]
whenever the indicated variables occur in $C$; same-variable entries
encode mutually exclusive labels. Equivalently one can impose the
usual Gram-vector normalization $\sum_a u_{ia}=u_0$,
$\|u_0\|=1$, and $u_{i,+}\perp u_{i,-}$.
The objective is the weighted probability of satisfying the predicates
under the local distributions. Denote its maximum by
$\operatorname{SDP}_{\rm bas}$; this is the standard relaxation
of~\cite{Raghavendra08,RaghavendraSteurer2009}, distinct from the soft
moment bodies $S_8$ and $K_8$.

We use the following consequence of
\cite[Theorem~1.1]{Raghavendra08}: for a fixed finite predicate
language, a finite instance with basic SDP value at least $c$ and
integral value at most $s<c$ implies, under UGC, NP-hardness of
distinguishing integral value at least $c-\xi$ from at most $s+\xi$,
for every fixed $0<\xi<(c-s)/2$. An arbitrarily small additive error
in the gap instance can be absorbed by choosing it smaller than the
hardness slack. We prove all the properties of the gap instance below.

\subsection{An ideal Gaussian star}
Fix $q\ge20$, $0<\rho\le1/4$, and put
$\tau=\sqrt{1-\rho^2}$, $\sigma=\rho\sqrt{\log q}\le1$.
Let \(X_0,X_1,\ldots,X_q\) be independent standard Gaussians in
\(\R^d\), and let \(Y_i=-(\tau X_0+\rho X_i)\).
To a center with \(p\) distinct leaves attach the two predicates below
with equal weights. In the ideal star \(p=q\); after discretization,
\(p\le q\) denotes the number of distinct leaves after merging repeats.
\begin{equation}\label{eq:star-predicates}
 p x_c+\sum_{i=1}^p x_i\le p,
 \qquad p x_c+\sum_{i=1}^p x_i\ge p.
\end{equation}
These are the respective conjunctions of
$(\neg x_c\vee\neg x_i)$ and $(x_c\vee x_i)$, hence are
Majority-closed linear relations. For any Boolean assignment, the
sum of their satisfaction indicators is one plus the indicator that
every leaf is opposite to the center. Their average loss is therefore
one half of the indicator of a failed opposite-label star.

We use the exchangeable-Gaussians theorem of
Isaksson--Mossel~\cite[Theorem~1.2]{IsakssonMossel2012}: for jointly
standard Gaussian vectors with common nonnegative pairwise covariance
$aI$, and measurable sets of prescribed Gaussian measures, the
probability of simultaneous membership is maximized by parallel
halfspaces of those measures. This theorem will be applied only to
the leaves, whose pairwise covariance is $\tau^2 I$.
Write $\phi,\Phi$ for the standard normal density and CDF.

\begin{lemma}[Gaussian star failure]\label{lem:gaussian-failure}
For every measurable \(F:\R^d\to\{0,1\}\),
\begin{equation}\label{eq:gaussian-failure}
 \Pr\big[\exists i\in[q]:F(Y_i)\ne1-F(X_0)\big]
 \ge\frac15\rho\sqrt{\log q}.
\end{equation}
\end{lemma}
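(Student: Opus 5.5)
Write $A=\{F=1\}$ and $\mu=\Pr[X_0\in A]$; note that each $Y_i$ is standard Gaussian, so $\Pr[Y_i\in A]=\mu$ as well. The complement of the event in \eqref{eq:gaussian-failure} is the \emph{success} event. It is the disjoint union of $\{X_0\in A,\ Y_i\notin A\ \forall i\}$ and $\{X_0\notin A,\ Y_i\in A\ \forall i\}$. I bound the first piece by $\min\{\mu,\Pr[Y_i\in A^c\ \forall i]\}$ and the second by $\min\{1-\mu,\Pr[Y_i\in A\ \forall i]\}$. The leaves $Y_1,\dots,Y_q$ are jointly standard Gaussian with common covariance $\tau^2I\succeq0$. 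Hence Isaksson--Mossel~\cite[Theorem~1.2]{IsakssonMossel2012} applies to each ``all leaves'' probability separately, and replaces $A$ and $A^c$ by halfspaces of measures $\mu$ and $1-\mu$, which may be taken parallel with the same boundary $\{y_1=t\}$, where $\Phi(t)=1-\mu$. Conditioning on $X_0$, whose first coordinate is $-Z$ with $Z$ standard normal, the leaves are independent. This reduces everything to the one-dimensional quantities
\[
 \Pr[Y_i\in A^c\ \forall i]\le\mathbb E\big[\Phi(W)^q\big],\qquad
 \Pr[Y_i\in A\ \forall i]\le\mathbb E\big[\Phi(-W)^q\big],\qquad
 W=\frac{t+\tau Z}{\rho}.
\]

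The core estimate is pointwise in $W$. If $|W|\le\sqrt{\log q}$, then $\Phi(-|W|)\ge\Phi(-\sqrt{\log q})\gtrsim q^{-1/2}/\sqrt{\log q}$. Therefore both $\Phi(W)^q$ and $\Phi(-W)^q$ are at most $\exp(-c\sqrt q/\sqrt{\log q})$, which for $q\ge20$ I expect to verify is at most some explicit constant below $1/2$. On this window the sum $\Phi(W)^q+\Phi(-W)^q$ falls short of $1$ by a constant, say at least $1/2$. Outside the window I use the trivial bound $\Phi(W)^q+\Phi(-W)^q\le1$. This gives the first case: if $|t|$ is bounded (say $|t|\le 1$, i.e.\ $\mu$ bounded away from $0$ and $1$), then
\[
 \Pr[\text{failure}]\ \ge\ \tfrac12\Pr\big[|t+\tau Z|\le\rho\sqrt{\log q}\big].
\]
The right-hand side is the Gaussian measure of an interval of length $2\sigma/\tau$ centred at $-t/\tau$. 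Since $\sigma\le1$, $\tau\ge\sqrt{15}/4$ and $|t|\le1$, the density on that interval is bounded below by a constant, so the failure probability is at least $c\,\rho\sqrt{\log q}$.

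For extreme $\mu$, say $\mu$ small (the case $t$ large), the window sits in the tail and the argument above is useless, so I use the minima instead. The first piece is at most $\mu$. The second piece is at most $\mathbb E[\Phi(-W)^q]\le\Pr[Y_1\in A]=\mu$, and in fact much smaller, since all $q$ leaves must land in a set of small measure. Thus the success probability is at most $2\mu$, and the failure probability is at least $1-2\mu$. This is a constant exceeding $\frac15\ge\frac15\rho\sqrt{\log q}$, because $\rho\sqrt{\log q}=\sigma\le1$. The case $\mu$ close to $1$ is symmetric. I will choose the cutoff between the two regimes (a threshold on $|t|$, equivalently on $\min\{\mu,1-\mu\}$) so that both cases deliver the constant $\frac15$.

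I expect the main obstacle to be purely quantitative. The work is in pinning down explicit constants for the window estimate at $q=20$, that is, a concrete bound on $(1-\Phi(-\sqrt{\log q}))^q$ via Mills' ratio, together with a lower bound on the Gaussian mass of the window. The cutoff on $|t|$ must then be balanced so that the intermediate regime still yields $\frac15\rho\sqrt{\log q}$. If the crude choices do not reach $\frac15$, I would first try shrinking the window to $|W|\le\sqrt{(\log q)/2}$ or sharpening the Mills-ratio bound, before refining the case split.
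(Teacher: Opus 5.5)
Your proposal is correct and is essentially the paper's argument. It has the same three ingredients: the center/one-leaf marginal bound (your $1-2\min\{\mu,1-\mu\}$ is exactly the paper's $2|\beta-\tfrac12|$), Isaksson--Mossel applied separately to the two all-leaves events, and a window of width $\Theta(\sigma)$ for the common Gaussian component, where you condition on that component and bound $1-\Phi(W)^q-\Phi(-W)^q$ pointwise while the paper uses the $g$-independent event $\{\min_iZ_i\le-R,\ \max_iZ_i\ge R\}$ with $R=\tfrac35\sqrt{\log q}$.

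Two details are needed to actually reach $\tfrac15$. First, bound the smaller of $\Phi(\pm W)^q$ by $2^{-q}$; bounding both by $e^{-q\Phi(-\sqrt{\log q})}\approx0.43$ leaves too small a shortfall at $q=20$. Second, use the paper's $\sigma$-dependent cutoff $|\mu-\tfrac12|\ge\sigma/10$, which forces $|t|<0.3\sigma$; then with window radius $\tfrac35\sqrt{\log q}$ in $W$ the common component stays in $[-1,1]$ and the bound closes with room.
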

\begin{proof}
Let $\beta=\Pr[F(X_0)=1]$. The probability of opposite labels on the
whole star is at most $2\min\{\beta,1-\beta\}$, by its center and
one-leaf marginals. Thus failure is at least $2|\beta-1/2|$.
We may assume $|\beta-1/2|<\sigma/10$, and set $T=\Phi^{-1}(\beta)$.
The inverse CDF is convex above $1/2$; since $\sigma\le1$ and
$\Phi^{-1}(0.6)<0.3$, this gives $|T|<0.3\sigma$.

Failure is also at least one minus the probabilities that all leaf
labels are zero or all are one. Applying the exchangeable-Gaussians
theorem separately to these two events, and reflecting one of the
halfspaces, lower bounds it by
\begin{equation}\label{eq:one-dimensional-crossing}
 \Pr[\tau g+\rho\min_i Z_i<T<\tau g+\rho\max_i Z_i],
\end{equation}
where $g,Z_1,\ldots,Z_q$ are independent standard normals.
Boundary choices have probability zero.

Put $R=(3/5)\sqrt{\log q}$. The Gaussian tail bound
\begin{equation}\label{eq:mills}
 1-\Phi(R)\ge\frac{R\phi(R)}{1+R^2}
\end{equation}
is from \cite{Gordon1941}, and
follows by differentiating the difference of the two sides: its
derivative is $-2\phi(R)/(1+R^2)^2$ and its limit at infinity is zero.
Hence
\[
 \Pr[\min_iZ_i\le-R,\ \max_iZ_i\ge R]
 \ge1-2\exp\!\left(-\frac{qR\phi(R)}{1+R^2}\right)>\frac34.
\]
For clarity, the exponent exceeds $23/10$ already at $q=20$ and
increases with $q$: writing $u=\log q$ and $\alpha=3/5$, its logarithmic
derivative is
$1-\alpha^2/2+1/(2u)-\alpha^2/(1+\alpha^2u)
\ge1-3\alpha^2/2>0$.
The endpoint estimate follows, for example, from
$2.9957<\log20<2.9958$ and the defining formula for $\phi$.

Let $r=\rho R=3\sigma/5$. On this event,~\eqref{eq:one-dimensional-crossing}
holds whenever $\tau g\in(T-r,T+r)$. Since $|T|<r/2$,
$\tau\ge\sqrt{15}/4>0.9$, and $\sigma\le1$, this interval for $g$
lies in $[-1,1]$ and has length $2r/\tau\ge6\sigma/5$.
Independence and the lower bound $\phi(g)\ge\phi(1)$ there give
failure at least $(3/4)(6/5)\phi(1)\sigma>\sigma/5$.
Together with the imbalance case this proves the lemma.
\end{proof}

\subsection{Finite scopes and rational weights}
\begin{lemma}[A quantitative finite discretization]\label{lem:discretization}
Fix $0<\zeta\le1/2$ and $0<a\le1/2$. There is a finite collection
of unit vectors $u_1,\ldots,u_M$ and a probability distribution on
scopes $(c,L)$ with $1\le|L|\le q$ and $c\notin L$, such that
\begin{equation}\label{eq:scope-correlations}
 |\langle u_c,u_i\rangle+\tau|\le\zeta,\qquad
 |\langle u_i,u_{i'}\rangle-\tau^2|\le\zeta
 \quad(i\ne i'\in L),
\end{equation}
and the corresponding paired predicates~\eqref{eq:star-predicates}
have integral value at most $1-\sigma/10+a$.
\end{lemma}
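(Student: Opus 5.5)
We discretize the ideal Gaussian star of Lemma~\ref{lem:gaussian-failure} by sampling, and transfer the failure bound from measurable labellings to Boolean assignments of the finite instance. Concretely, we draw $N$ independent copies of the Gaussian star. The $j$-th copy consists of a center $X_0^{(j)}$ and leaves $Y_i^{(j)}=-(\tau X_0^{(j)}+\rho X_i^{(j)})$, $i\in[q]$, in $\R^d$.

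The first step fixes the dimension and handles normalization. We take $d$ large and work with normalized vectors $X/\|X\|$. By concentration of norms and inner products of Gaussian vectors in high dimension, with probability close to one every vector has $\|X\|^2/d$ within $\zeta/10$ of its mean. Under the same event the inner products satisfy $\langle X_0,Y_i\rangle/d\approx-\tau$ and $\langle Y_i,Y_{i'}\rangle/d\approx\tau^2$. Since $\|Y_i\|^2/d\approx\tau^2+\rho^2=1$, these are within $\zeta$ of the targets in~\eqref{eq:scope-correlations}.

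The second step builds the instance. Let $u_1,\ldots,u_M$ be the normalized centers and leaves of the $N$ samples. The scope distribution is uniform over the $N$ stars. The leaf set $L$ of a star is its set of distinct leaf indices, so $1\le|L|\le q$, and $c\notin L$ holds almost surely. Rational weights are then uniform, equal to $1/N$ per star, split into the two paired predicates~\eqref{eq:star-predicates}.

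The main step bounds the integral value. A Boolean assignment on the finite point set is a labelling, and we must show that \emph{every} such labelling fails on at least a $\sigma/5-2a$ fraction of the stars. By the computation after~\eqref{eq:star-predicates}, the integral loss is half the failure fraction, which yields the value bound $1-\sigma/10+a$. The obstacle is a union bound over all $2^M$ labellings, which sampling cannot handle directly. We therefore use a two-stage design.
\begin{enumerate}
\item First fix a fine finite partition of $\R^d$, or of the sphere after normalization, into cells $C_1,\ldots,C_r$ of small diameter, together with a bounded region outside of which the Gaussian mass is less than $a/10$.
\item Then work with the finite instance whose variables are the cells, choosing one representative unit vector per cell. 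Each cell-level labelling extends to a measurable $F$ that is constant on cells. There are only $2^r$ such labellings.
\item The star failure indicator depends only on the cells containing the $q+1$ points. By Lemma~\ref{lem:gaussian-failure}, each cell-measurable $F$ has failure probability at least $\sigma/5$ under the cell-pushforward of the Gaussian star distribution.
\item Take the scope distribution to be exactly this pushforward, a finite distribution on cell tuples. Truncate its atoms to rationals, which costs at most $a$.
\end{enumerate}
This construction avoids sampling altogether. For every assignment of the finite variables, the failure fraction is at least $\sigma/5-a$.

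Correlations are controlled as follows. With representatives at cell centers, a star's tuple has pairwise inner products that differ from those of the true Gaussian points by a small amount, since the cells are small. To obtain~\eqref{eq:scope-correlations} for every scope in the support, not merely on average, we restrict the scope distribution to stars whose actual normalized vectors already satisfy the correlations within $\zeta/2$. In high dimension $d$, this typical set has probability at least $1-a/10$, and on it the cell approximation adds at most $\zeta/2$. Restricting to it changes each failure probability by at most $a/10$. After renormalization the failure remains at least $\sigma/5-2a$, which gives the value bound up to adjusting $a$. Finally, merging repeated cells among the leaves only shrinks $L$ and leaves the failure indicator unchanged, since the indicator concerns labels.
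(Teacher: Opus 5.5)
Your final construction, after you drop the sampling of $N$ stars, is the paper's argument: push the Gaussian star forward to finitely many measurable direction cells, condition on a high-dimensional typical event so that every scope in the support satisfies \eqref{eq:scope-correlations}, apply Lemma~\ref{lem:gaussian-failure} to cell-constant labellings with the conditioning costing at most the atypical probability, and merge repeated leaves. The paper additionally makes $d$ explicit via Chebyshev, uses a net on the sphere restricted to an annulus (your $\R^d$ variant would need the same restriction, so that typical points never land in the outer cell), and rationalizes the weights after the lemma; the one step you leave implicit is $c\notin L$, which in the cell construction is no longer almost sure but follows from typicality, since a shared cell would give representative inner product $1$, far from $-\tau$.
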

\begin{proof}
Use dimension \(d\ge64(q+1)(q+2)/(a\zeta^2)\).
For the \(q+1\) vectors \(X_0,\widetilde X_i=\tau X_0+\rho X_i\), require
every squared norm divided by \(d\), and every distinct pair inner product
divided by \(d\), to be within \(\zeta/8\) of its expectation.
Each of these quantities has variance at most \(2/d\).
Chebyshev's inequality and a union bound over \((q+1)(q+2)/2\) quantities
show that this event, denoted \(\Typ(\zeta)\), has probability at least
\(1-a\).
On \(\Typ(\zeta)\) the squared norms lie in \([d/2,2d]\); normalizing inner
products introduces total error at most
\(2(\zeta/8)/(1-\zeta/8)<\zeta/2\).

Choose a maximal $h/2$-separated net of the unit sphere with
$0<h\le\zeta/4$, and label each direction by its closest net point,
breaking ties by the smallest index. This is a finite measurable
partition and a covering of radius $h/2$.
Restrict its cells to the annulus with squared norms in \([d/2,2d]\); put
all other points in an extra cell.
Replacing two directions by their net points changes
their inner product by at most $h$.

Condition on \(\Typ(\zeta)\), let \(c\) be the cell of \(X_0\), and let
\(L\) be the set of distinct cells among \(Y_1,\ldots,Y_q\).
The resulting correlations satisfy~\eqref{eq:scope-correlations}.
The center is distinct from every leaf, since a coincident net point
would have inner product one, incompatible with distance at most
$\zeta\le1/2$ from $-\tau$. Repeated leaves can be merged: the
opposite-label event depends only on their distinct labels.
Keep scopes of positive conditional probability.

Extend an assignment to the net cells to a measurable Boolean function
\(F\) on \(\R^d\), using an arbitrary value in the extra cell.
By Lemma~\ref{lem:gaussian-failure}, its unconditioned paired-star
value is at most $1-\sigma/10$. Conditioning on an event of
probability at least $1-a$ changes the expectation of a $[0,1]$-valued
function by at most $a$. This proves the stated integral bound,
uniformly over all assignments.
\end{proof}

For any prescribed $\omega>0$, take $a\le\min\{1/2,\omega/2\}$ and
approximate the finite scope-weight vector in $\ell_1$ by a rational
probability vector within $\omega/2$. Give each predicate in a pair
half its scope weight. Every integral value changes by at most
$\omega/2$, so the integral upper bound is $1-\sigma/10+\omega$.
The SDP construction below satisfies every individual predicate
with the same lower bound, which is therefore unaffected by this
change of weights. The discretization may be very large as a function
of $q,\rho,\omega$; these are fixed parameters for hardness, not input
dimensions of the approximation algorithm.

\subsection{A feasible solution of the full basic SDP}
Consider a \emph{reference star}: three \(\pm1\) random variables
\(z_c,z_1,z_2\) with zero means and correlations
\(\mathbb E[z_cz_1]=\mathbb E[z_cz_2]=-\tau^3\) and
\(\mathbb E[z_1z_2]=\tau^4\).
These are the values of \(\tau^2\langle u_c,u_i\rangle\) and
\(\tau^2\langle u_i,u_{i'}\rangle\) at the ideal correlations \(-\tau\) and
\(\tau^2\) of~\eqref{eq:scope-correlations}, and the construction below
realizes such a star.
Let \(A_1=\{z_1=z_c\}\) and \(A_2=\{z_2=z_c\}\) be the events that a leaf
agrees with the centre, which is what a star predicate forbids, and set
\begin{equation}\label{eq:gap-parameters}
\begin{gathered}
 p_0=\Pr[A_1]=\frac{1-\tau^3}{2},\qquad
 I_0=\Pr[A_1\cap A_2]=\frac{1-2\tau^3+\tau^4}{4},\\
 \lambda=\frac{p_0^2}{I_0},\qquad
 \chi=\frac{I_0}{p_0}=\Pr[A_2\mid A_1].
\end{gathered}
\end{equation}
Thus \(\chi\) is the conditional probability that a second leaf agrees, and
\(\lambda\) is the unique scaling for which \((p_0/\lambda,I_0/\lambda)\) is
the moment pair \((\chi,\chi^2)\) of independent Bernoulli-\(\chi\)
coordinates.
The reference star is never used directly: its role is to fix the point
\((\chi,\chi^2)\) at which Lemma~\ref{lem:correlation-interior} is applied
below, and the discretization is calibrated so that every scope stays close
enough to the star for that lemma to supply a genuine distribution on
subsets of its leaves, which is then the local distribution of the scope.
We will need
\begin{equation}\label{eq:gap-parameter-bounds}
 \frac{\rho^2}{2}\le p_0\le\frac{3\rho^2}{4},\quad
 \frac{\rho^2}{4}\le I_0\le\frac{\rho^2}{2},\quad
 \frac{\rho^2}{2}\le\lambda\le\frac{9\rho^2}{4}<1,
 \quad\frac13\le\chi\le\frac23.
\end{equation}
For the first bounds use $1-u\ge(1-u)^{3/2}\ge1-3u/2$,
$u=\rho^2$, and
$4I_0=\rho^2+(1-\rho^2)(1-\tau)^2$.
The bounds on $\lambda$ and the lower bound on $\chi$ follow.
Finally $\chi=1-(1-\tau^4)/(4p_0)\le2/3$, since
$1-\tau^4\ge\rho^2$ and $4p_0\le3\rho^2$.

\begin{lemma}[An interior neighborhood of product moments]
\label{lem:correlation-interior}
For $p\ge1$, let
\[
 \operatorname{COR}_p
 =\conv\{(\mathbf1_{i\in A},\mathbf1_{\{i,i'\}\subseteq A})_{i,i<i'}:
                          A\subseteq[p]\}.
\]
If $\chi\in[1/3,2/3]$, the cube of $\ell_\infty$ radius
\begin{equation}\label{eq:interior-radius}
 r_p=\frac{3^{-p}}{p+3\binom p2}
\end{equation}
around $(\chi,\chi^2)_{i,i<i'}$ lies in $\operatorname{COR}_p$.
\end{lemma}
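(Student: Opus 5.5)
The plan is to write an explicit probability distribution on subsets $A\subseteq[p]$ whose moment vector is any prescribed point of the cube. We start from the product distribution $\mu_\chi$ of independent Bernoulli-$\chi$ coordinates and add a signed correction $\nu$ of total mass zero. The moments of $\mu_\chi$ are exactly $(\chi,\chi^2)_{i,i<i'}$. Since $\chi\in[1/3,2/3]$, every atom has mass
\[
 \mu_\chi(A)=\chi^{|A|}(1-\chi)^{p-|A|}\ge3^{-p}.
\]
So it suffices to realize every perturbation of $\ell_\infty$ norm at most $r_p$ by a signed measure $\nu$ with $\nu([p]\text{-subsets})=0$ and $|\nu(A)|\le3^{-p}$ for every $A$. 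Then $\mu_\chi+\nu$ is a genuine probability distribution with the prescribed moments, and its moment vector lies in $\operatorname{COR}_p$ by definition of the convex hull.

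For the correction we use finite-difference measures supported on sets of size at most two. For a singleton coordinate $i$, take
\[
 \nu_i=\delta_{\{i\}}-\delta_{\varnothing}.
\]
Its singleton moment is $1$ at $i$ and $0$ elsewhere, and all its pair moments vanish. For a pair $i<i'$, take the Möbius difference
\[
 \nu_{ii'}=\delta_{\{i,i'\}}-\delta_{\{i\}}-\delta_{\{i'\}}+\delta_{\varnothing}.
\]
Its pair moment at $(i,i')$ is $1$. Its singleton moments at $i$ and at $i'$ are $1-1=0$, and all other moments are $0$. Both measures have total mass zero. For a target perturbation $(\varepsilon_i,\varepsilon_{ii'})$ with all entries bounded by $r$ in absolute value, set
\[
 \nu=\sum_i\varepsilon_i\nu_i+\sum_{i<i'}\varepsilon_{ii'}\nu_{ii'}.
\]
By linearity of moments, $\mu_\chi+\nu$ has exactly the target moment vector.

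It remains to bound $\nu$ atom by atom. The empty set receives contributions from every term, so
\[
 |\nu(\varnothing)|\le r\Bigl(p+\tbinom p2\Bigr).
\]
A singleton $\{i\}$ receives contributions from $\varepsilon_i$ and from the $p-1$ pairs containing $i$, giving at most $rp$. A pair receives only $|\varepsilon_{ii'}|\le r$, and larger sets receive nothing. Every atom therefore satisfies $|\nu(A)|\le r\,(p+3\binom p2)$; the factor $3$ also absorbs any cruder accounting of the Möbius coefficients. With $r=r_p$ from~\eqref{eq:interior-radius} this is at most $3^{-p}\le\mu_\chi(A)$, so $\mu_\chi+\nu\ge0$ everywhere.

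The only delicate point is this per-atom bookkeeping. The construction itself is routine once one observes that the correction need only touch sets of size at most two, whose moment map is unitriangular. I would also note the degenerate case $p=1$, where there are no pairs and the statement reduces to $[\chi-r_1,\chi+r_1]\subseteq[0,1]$.
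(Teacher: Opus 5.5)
Your proof is correct and is essentially the paper's argument. You start from the Bernoulli-$\chi$ product measure, whose atoms are all at least $3^{-p}$, and correct its moments by a zero-mass signed measure supported on sets of size at most two. Your combination of $\nu_i$ and $\nu_{ii'}$ produces exactly the paper's adjustments: $h_{ii'}$ on pairs, $h_i-\sum_{i'\ne i}h_{ii'}$ on singletons, and the compensating mass on $\varnothing$.

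The only difference is the bookkeeping. You bound each atom separately, which gives the sharper per-atom bound $r\,(p+\binom{p}{2})$. The paper instead bounds the total absolute nonempty adjustment by $(p+3\binom{p}{2})\|h\|_\infty$, and this single bound also controls the empty set.
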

\begin{proof}
The independent Bernoulli-$\chi$ distribution has mass at least
$3^{-p}$ at every subset. For a perturbation $(h_i,h_{ii'})$, adjust
the mass of pair $\{i,i'\}$ by $h_{ii'}$ and that of singleton $\{i\}$
by $h_i-\sum_{i'\ne i}h_{ii'}$. Adjust the empty-set mass by the negative
sum of these adjustments. The changes to first and second moments
are exactly the prescribed perturbation. The sum of absolute values
of all nonempty-set adjustments is at most
$(p+3\binom p2)\|h\|_\infty\le3^{-p}$.
Consequently every mass, including that of the empty set, remains
nonnegative, and the total mass stays one.
\end{proof}

Choose in Lemma~\ref{lem:discretization}
\begin{equation}\label{eq:hardness-zeta}
 0<\zeta\le\min\left\{\frac12,
              \frac23\rho^2\min_{1\le p\le q}r_p\right\}.
\end{equation}
The first bound is the hypothesis of Lemma~\ref{lem:discretization}.
The second is what the transfer below needs: correlations within \(\zeta\)
of the reference star give agreement moments within \(3\zeta/4\) of
\((p_0,I_0)\), and dividing by \(\lambda\ge\rho^2/2\) turns that into
\(3\zeta/(2\rho^2)\), which is at most \(r_p\) precisely when
\(\zeta\le\frac23\rho^2r_p\).
For a scope with $p$ leaves, write
$\beta_{0i}=\langle u_c,u_i\rangle$,
$\beta_{ii'}=\langle u_i,u_{i'}\rangle$, and set
\[
 p_i=\frac{1+\tau^2\beta_{0i}}2,\qquad
 d_{ii'}=\frac{1-\tau^2\beta_{ii'}}2,\qquad
 I_{ii'}=\frac{p_i+p_{i'}-d_{ii'}}2.
\]
By~\eqref{eq:scope-correlations},
$|p_i-p_0|\le\zeta/2$ and $|I_{ii'}-I_0|\le3\zeta/4$.
Since $p_0/\lambda=\chi$ and $I_0/\lambda=\chi^2$,
\eqref{eq:gap-parameter-bounds} and~\eqref{eq:hardness-zeta} imply
\[
 \left\|(p_i/\lambda,I_{ii'}/\lambda)_{i,i<i'}
                   -(\chi,\chi^2)_{i,i<i'}\right\|_\infty
 \le\frac{3\zeta}{2\rho^2}\le r_p.
\]
Lemma~\ref{lem:correlation-interior} supplies a distribution on subsets
$T$ of the leaves with first moments $p_i/\lambda$ and second moments
$I_{ii'}/\lambda$. Independently draw an unbiased sign $z_c$.
With probability $1-\lambda$ let $T=\varnothing$; otherwise use that
subset distribution. Set $z_i=z_c$ for $i\in T$ and $z_i=-z_c$
otherwise. This local distribution has zero means and correlations
\begin{equation}\label{eq:local-correlations}
 \mathbb E[z_cz_i]=2p_i-1=\tau^2\beta_{0i},\qquad
 \mathbb E[z_iz_{i'}]=1-2p_i-2p_{i'}+4I_{ii'}=\tau^2\beta_{ii'}.
\end{equation}
All violations occur in the branch of probability $\lambda$.
Each of the two predicates additionally requires a particular center
sign to be violated, so its violation probability is at most
$\lambda/2\le9\rho^2/8$.

To check global SDP feasibility, take orthonormal private vectors
$e_1,\ldots,e_M$, set $v_i=\tau u_i\oplus\rho e_i$, and choose a
unit vector $e_0$ orthogonal to all $v_i$. These $v_i$ have norm one,
and for distinct variables $\langle v_i,v_{i'}\rangle
=\tau^2\langle u_i,u_{i'}\rangle$. Define
\begin{equation}\label{eq:hardness-Gram}
 u_{i,a}=\frac{e_0+a v_i}{2},\qquad
 Y_{ia,i'b}=\langle u_{i,a},u_{i',b}\rangle
             =\frac{1+ab\langle v_i,v_{i'}\rangle}{4},
 \quad Y_{ia,0}=\frac12.
\end{equation}
Together with $Y_{00}=1$, this is a positive semidefinite Gram matrix.
It agrees with the local distributions~\eqref{eq:local-correlations}
because both sides are the same affine function of the correlation.
Indeed, for distinct variables \(i,i'\) of a scope and labels
\(a,b\in\{\pm1\}\) we have \(\mathbf1_{\{z_i=a\}}=\tfrac12(1+az_i)\), so
\begin{align*}
 4\Pr[z_i=a,\ z_{i'}=b]
 &=\mathbb E\bigl[(1+az_i)(1+bz_{i'})\bigr]\\
 &=1+ab\,\mathbb E[z_iz_{i'}]
 &&\text{\(\mathbb{E}[z_i] = \mathbb{E}[z_{i'}] = 0\)}\\
 &=1+ab\langle v_i,v_{i'}\rangle
 &&\text{by~\eqref{eq:local-correlations}}\\
 &=4Y_{ia,i'b}
 &&\text{by~\eqref{eq:hardness-Gram}},
\end{align*}
and likewise \(2\Pr[z_i=a]=\mathbb E[1+az_i]=1=2Y_{ia,0}\).
For equal variables it gives diagonal $1/2$ and probability zero
for opposite labels; moreover $u_{i,+}+u_{i,-}=e_0$.
Thus all constraints of the full basic SDP, including consistency
between scopes, hold.

\begin{theorem}[Finite basic-SDP gaps]\label{thm:gap}
For $q\ge20$, $0<\rho\le1/4$, and $\rho\sqrt{\log q}\le1$, and for
every $\omega>0$, there is a finite rationally weighted instance of
Majority-closed Boolean linear constraints of arity at most $q+1$
such that
\begin{equation}\label{eq:gap-values}
 \operatorname{SDP}_{\rm bas}\ge1-\frac98\rho^2,
 \qquad \operatorname{OPT}_{\rm int}
              \le1-\frac1{10}\rho\sqrt{\log q}+\omega.
\end{equation}
It uses only the paired-star predicates~\eqref{eq:star-predicates}
with $1\le p\le q$.
\end{theorem}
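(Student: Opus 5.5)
The theorem follows by assembling the pieces already built in this section; the plan is to check that all the parameter constraints are compatible and that each bound transfers to the finite rational instance.

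\emph{Construction.} Fix $q\ge20$, $0<\rho\le1/4$ with $\sigma=\rho\sqrt{\log q}\le1$, and $\omega>0$. Take $a=\min\{1/2,\omega/2\}$ and choose $\zeta$ as in~\eqref{eq:hardness-zeta}; this is positive because $\min_{1\le p\le q}r_p>0$ is a minimum of finitely many positive numbers. Lemma~\ref{lem:discretization} with these $\zeta,a$ then gives:
\begin{itemize}
\item finitely many unit vectors $u_1,\dots,u_M$;
\item a scope distribution on pairs $(c,L)$ with $1\le|L|\le q$;
\item the correlation bounds~\eqref{eq:scope-correlations};
\item integral value at most $1-\sigma/10+a$.
\end{itemize}
Next, replace the scope weights by a rational probability vector within $\omega/2$ in $\ell_1$, and give each of the two predicates~\eqref{eq:star-predicates} half of its scope's weight. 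As noted after Lemma~\ref{lem:discretization}, every assignment's value moves by at most $\omega/2$. Hence $\operatorname{OPT}_{\rm int}\le1-\sigma/10+a+\omega/2\le 1-\frac1{10}\rho\sqrt{\log q}+\omega$. Each predicate has arity $p+1\le q+1$, and each is Majority-closed and linear, being a conjunction of 2-clauses.

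\emph{SDP side.} For each scope, use the local distribution built from Lemma~\ref{lem:correlation-interior}. It applies because $\chi\in[1/3,2/3]$ by~\eqref{eq:gap-parameter-bounds}, and the moment vector lies within $3\zeta/(2\rho^2)\le r_p$ of $(\chi,\chi^2)$. Under this distribution each predicate is violated with probability at most $\lambda/2\le 9\rho^2/8$. Feasibility of the full basic SDP, including consistency across scopes, comes from the global Gram vectors~\eqref{eq:hardness-Gram} together with the matching identity~\eqref{eq:local-correlations} already verified. The objective is a weighted average of per-predicate satisfaction probabilities, each at least $1-\frac98\rho^2$, so $\operatorname{SDP}_{\rm bas}\ge1-\frac98\rho^2$ whatever the rational weights are.

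\emph{Main obstacle.} The delicate point is compatibility: the correlations in~\eqref{eq:scope-correlations} must be accurate enough, namely to within the tiny radius $r_p\sim 3^{-p}$, for Lemma~\ref{lem:correlation-interior} to produce genuine distributions for every $p\le q$. I would verify that one choice of $\zeta$, independent of the scope, serves all $p$ at once. It must also be checked that merging repeated leaves leaves a scope with $p\ge1$ distinct leaves whose pairwise correlations are still controlled, which holds since merged leaves are simply identified. Everything else is bookkeeping already done in the preceding lemmas.
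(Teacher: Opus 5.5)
Your proposal is correct and follows the paper's proof essentially step for step. You choose $\zeta$ by \eqref{eq:hardness-zeta} and $a\le\min\{1/2,\omega/2\}$, apply Lemma~\ref{lem:discretization}, rationalize the scope weights within $\omega/2$ for the integral bound, and obtain the weight-independent SDP bound from the local distributions of Lemma~\ref{lem:correlation-interior} together with the Gram vectors \eqref{eq:hardness-Gram}; the uniform-in-$p$ compatibility you flag is handled, as you note, by the minimum over $1\le p\le q$ in \eqref{eq:hardness-zeta}.
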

\begin{proof}
Choose~\eqref{eq:hardness-zeta}, apply the finite discretization, and
rationalize its weights as described above. The local distributions
and Gram matrix~\eqref{eq:hardness-Gram} prove the SDP bound for every
choice of weights. Lemma~\ref{lem:gaussian-failure} and the
discretization prove the integral bound.
\end{proof}

\subsection{Separated hardness thresholds and the arity barrier}
\begin{corollary}[UGC gap hardness]\label{cor:ugc-gap}
Assume UGC. In addition to the hypotheses of Theorem~\ref{thm:gap},
suppose
\begin{equation}\label{eq:hardness-separation}
 \rho<\frac4{45}\sqrt{\log q},\qquad
 0<\xi<\frac12\left(\frac1{10}\rho\sqrt{\log q}
                                      -\frac98\rho^2\right).
\end{equation}
It is NP-hard to distinguish instances of arity at most $q+1$ with
\[
 \operatorname{OPT}_{\rm int}\ge1-\frac98\rho^2-\xi
 \quad\text{from}\quad
 \operatorname{OPT}_{\rm int}\le1-\frac1{10}\rho\sqrt{\log q}+\xi.
\]
\end{corollary}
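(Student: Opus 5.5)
The plan is to combine Theorem~\ref{thm:gap} with the gap-to-hardness consequence of \cite[Theorem~1.1]{Raghavendra08} recalled above. The predicate language is fixed and finite: it consists of the paired-star predicates~\eqref{eq:star-predicates} for $1\le p\le q$, each of arity at most $q+1$. Set $c=1-\frac98\rho^2$ and $s_0=1-\frac1{10}\rho\sqrt{\log q}$.

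The first step is to check that the hypotheses give a genuine gap, $c>s_0$. This is equivalent to $\frac98\rho^2<\frac1{10}\rho\sqrt{\log q}$, that is, $\rho<\frac{4}{45}\sqrt{\log q}$, which is exactly the first condition in~\eqref{eq:hardness-separation}. The second condition then says $0<\xi<(c-s_0)/2$, so the admissible range for $\xi$ is nonempty and open.

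The second step chooses the slack $\omega$ in Theorem~\ref{thm:gap} and applies Raghavendra's theorem. Since $\xi<(c-s_0)/2$ is strict, fix a rational $\omega>0$ small enough that $\xi<(c-s_0-\omega)/2$; for instance, take $\omega$ below $c-s_0-2\xi$. Theorem~\ref{thm:gap} then gives a finite rationally weighted instance with
\[
 \operatorname{SDP}_{\rm bas}\ge c,\qquad \operatorname{OPT}_{\rm int}\le s:=s_0+\omega<c .
\]
Here $s$ denotes the integral upper bound, not the rounding scale. Because $0<\xi<(c-s)/2$, the cited consequence of \cite[Theorem~1.1]{Raghavendra08} yields, under UGC, NP-hardness of distinguishing $\operatorname{OPT}_{\rm int}\ge c-\xi$ from $\operatorname{OPT}_{\rm int}\le s+\xi=s_0+\omega+\xi$.

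The main obstacle is that this soundness threshold still carries the extra $\omega$, whereas the corollary claims soundness exactly $s_0+\xi$. The fix is to apply the theorem with a slightly smaller slack $\xi'=\xi-\omega$ (taking $\omega<\xi$), so the NO threshold becomes $s_0+\xi$. The YES threshold then becomes $c-\xi+\omega$, which is stronger than the required $c-\xi$, so the promise problem stated in the corollary is harder and inherits NP-hardness. Two conditions must hold simultaneously: $0<\xi'<(c-s)/2$, which reduces to $\xi-\omega<(c-s_0-\omega)/2$, and $\omega<\xi$. Both are satisfied by any sufficiently small positive rational $\omega$, which exists because all the relevant inequalities are strict.

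Finally, the reduction produces instances over the same predicates, which are Majority-closed Boolean linear rows of arity at most $q+1$. Their rational weights can be normalized to sum to one as in Definition~\ref{def:input}, and this normalization leaves all values unchanged.
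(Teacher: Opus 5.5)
Your proposal is correct and follows essentially the same route as the paper. You apply Raghavendra's gap-to-hardness consequence to the instance of Theorem~\ref{thm:gap}, split the slack so that $\omega$ plus the transfer slack is at most $\xi$ (you make them sum exactly to $\xi$), and use the first condition in~\eqref{eq:hardness-separation} to get strictly separated thresholds. The paper differs only in removing rational weights by replication, whereas you keep them normalized as Definition~\ref{def:input} permits.
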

\begin{proof}
For fixed $q$ the star predicates form a finite language. Apply the
gap-to-hardness theorem to Theorem~\ref{thm:gap}, choosing $\omega$
and the transfer slack smaller than $\xi$, with their sum at most
$\xi$. Condition~\eqref{eq:hardness-separation} ensures strictly
separated completeness and soundness thresholds. Rational weights
can be removed by replication for fixed hardness parameters.
\end{proof}

\begin{corollary}[Explicit arity-dependent lower bound]
\label{cor:arity-hardness}
Assume UGC. For every fixed $q\ge20$ and
\begin{equation}\label{eq:epsilon-regime}
 0<\varepsilon<
 \min\left\{\frac9{64},\frac9{4\log q},\frac{\log q}{256}\right\},
\end{equation}
it is NP-hard to distinguish Majority-closed linear systems of arity
at most $q+1$ with integral value at least $1-\varepsilon$ from those
with integral value at most
$1-\frac1{16}\sqrt{\varepsilon\log q}$.
\end{corollary}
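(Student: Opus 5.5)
The plan is to instantiate Corollary~\ref{cor:ugc-gap} with $\rho$ proportional to $\sqrt\varepsilon$. The constant is chosen so that the completeness loss $\frac98\rho^2$ uses only half of the budget $\varepsilon$, which leaves the other half for the slack $\xi$. Concretely, I would take
\[
 \rho=\tfrac23\sqrt{\varepsilon},\qquad\text{so that}\qquad \tfrac98\rho^2=\tfrac{\varepsilon}{2},\qquad \tfrac1{10}\rho\sqrt{\log q}=\tfrac1{15}\sqrt{\varepsilon\log q}.
\]
Writing $\Sigma=\sqrt{\varepsilon\log q}$, the completeness threshold of Corollary~\ref{cor:ugc-gap} becomes $1-\varepsilon/2-\xi$. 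The soundness threshold becomes $1-\Sigma/15+\xi$.

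First I would check that the hypotheses of Theorem~\ref{thm:gap} hold; $q\ge20$ is given.
\begin{itemize}
\item The condition $\rho\le1/4$ reads $\frac23\sqrt\varepsilon\le\frac14$, that is, $\varepsilon\le 9/64$.
\item The condition $\rho\sqrt{\log q}\le1$ reads $\frac49\varepsilon\log q\le1$, that is, $\varepsilon\le 9/(4\log q)$.
\end{itemize}
Both follow from~\eqref{eq:epsilon-regime}.

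Next I would check the first separation condition in~\eqref{eq:hardness-separation}. The inequality $\rho<\frac4{45}\sqrt{\log q}$ is equivalent to $\varepsilon<\frac{4}{225}\log q$. This holds because $\varepsilon<\log q/256$ and $1/256<4/225$. The same inequality gives $\varepsilon/2<\Sigma/15$, since $\sqrt\varepsilon<\frac2{15}\sqrt{\log q}$. Hence the admissible interval for $\xi$ in~\eqref{eq:hardness-separation},
\[
0<\xi<\tfrac12\bigl(\tfrac1{15}\Sigma-\tfrac{\varepsilon}2\bigr),
\]
is nonempty.

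I would then fix the slack as
\[
\xi=\min\Bigl\{\tfrac{\varepsilon}{2},\ \tfrac{\Sigma}{240},\ \tfrac14\bigl(\tfrac{\Sigma}{15}-\tfrac{\varepsilon}{2}\bigr)\Bigr\}>0.
\]
With this choice the two thresholds compare as required:
\begin{itemize}
\item Completeness: $1-\varepsilon/2-\xi\ge1-\varepsilon$.
\item Soundness: $1-\Sigma/15+\xi\le1-\Sigma/15+\Sigma/240=1-\Sigma/16$.
\end{itemize}
Corollary~\ref{cor:ugc-gap} makes it NP-hard to distinguish the two promise classes it states. The class with optimum at least $1-\varepsilon$ is contained in its completeness class, and the class with optimum at most $1-\frac1{16}\Sigma$ contains its soundness class. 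The distinction in the statement is therefore also NP-hard, at arity at most $q+1$.

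There is no real obstacle; the argument is bookkeeping. The one delicate point is that $\xi$ must be a fixed constant for the reduction. That is fine, because $q$ and $\varepsilon$ are fixed, and $\xi$ can be replaced by any smaller positive rational if rationality is needed by the gap-to-hardness step.
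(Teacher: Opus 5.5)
Your proposal is correct and is essentially the paper's proof. It uses the same choice $\rho=\frac23\sqrt\varepsilon$, the same checks of the hypotheses of Theorem~\ref{thm:gap} and of~\eqref{eq:hardness-separation}, and the same three-way bound on $\xi$ fed into Corollary~\ref{cor:ugc-gap}. Two small repairs are needed:
\begin{itemize}
\item The inclusion you write is reversed. The one you need, and the one your inequality $1-\varepsilon/2-\xi\ge1-\varepsilon$ actually proves, is that the completeness class of Corollary~\ref{cor:ugc-gap} is contained in the class with optimum at least $1-\varepsilon$.
\item You should check, as the paper does, that the two target classes are disjoint, i.e.\ $\varepsilon<\frac1{16}\sqrt{\varepsilon\log q}$. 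This is exactly where the full hypothesis $\varepsilon<\log q/256$ is needed; your other steps only use $\varepsilon<\frac4{225}\log q$.
\end{itemize}
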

\begin{proof}
Take $\rho=(2/3)\sqrt\varepsilon$. The first two restrictions give
$\rho<1/4$ and $\rho\sqrt{\log q}<1$. The third gives
$\sqrt\varepsilon<\sqrt{\log q}/16$, so
$\rho<\sqrt{\log q}/24<(4/45)\sqrt{\log q}$.
The completeness loss before slack is $\varepsilon/2$, while the
soundness loss is $\sqrt{\varepsilon\log q}/15$.
Choose positive $\xi$ smaller than $\varepsilon/2$,
$\sqrt{\varepsilon\log q}/240$, and the upper bound
in~\eqref{eq:hardness-separation}. Corollary~\ref{cor:ugc-gap}
then gives the claimed thresholds. They are disjoint because
$\varepsilon<\sqrt{\varepsilon\log q}/16$.
\end{proof}

\begin{corollary}[No arity-independent robustness]\label{cor:no-robustness}
Assume UGC. For every fixed $0<\varepsilon\le1/20$, it is NP-hard,
on the promise of satisfiability at least $1-\varepsilon$, to produce
an assignment satisfying more than a $1-5/64$ fraction of the rows,
even with arity at most $1+\lfloor e^{2/\varepsilon}\rfloor$.
Consequently, assuming also $\mathrm{NP}\not\subseteq\mathrm{BPP}$,
the class over all arities admits no randomized polynomial-time
robust guarantee $g(\varepsilon)\to0$ independent of arity.
\end{corollary}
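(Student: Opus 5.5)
The plan is to specialize Corollary~\ref{cor:arity-hardness} to the arity $q=\lfloor e^{2/\varepsilon}\rfloor$, so that rows have arity at most $1+\lfloor e^{2/\varepsilon}\rfloor$. Then check that its soundness threshold lies strictly below $1-5/64$. Fix $0<\varepsilon\le1/20$.

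First I verify the hypotheses.
\begin{itemize}
\item Since $2/\varepsilon\ge40$, we have $q\ge e^{40}-1\ge20$.
\item Since $q\le e^{2/\varepsilon}$, we get $\log q\le2/\varepsilon<9/(4\varepsilon)$, hence $\varepsilon<9/(4\log q)$.
\item Since $q\ge e^{2/\varepsilon}-1$, we get $\log q\ge 2/\varepsilon+\log(1-e^{-2/\varepsilon})$. The correction term is at most about $e^{-40}$ in absolute value, so $\varepsilon\log q\ge2-\varepsilon e^{-39}>25/16$.
\item From $\varepsilon\log q>25/16$ and $\varepsilon\le1/20$ we get $\log q>31.25>256\varepsilon$, so $\varepsilon<\log q/256$.
\item Finally $\varepsilon\le1/20<9/64$.
\end{itemize}
So the regime~\eqref{eq:epsilon-regime} holds. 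Corollary~\ref{cor:arity-hardness} then says it is NP-hard under UGC to distinguish integral value at least $1-\varepsilon$ from integral value at most $1-\frac1{16}\sqrt{\varepsilon\log q}$. By the bound $\varepsilon\log q>25/16$, the latter is strictly less than $1-\frac1{16}\cdot\frac54=1-5/64$.

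The search-to-decision step is the usual one. Suppose an algorithm, on the promise $\operatorname{OPT}_{\rm int}\ge1-\varepsilon$, produces an assignment of value greater than $1-5/64$. Run it on the given instance and evaluate the output's weighted value exactly; this is polynomial, since the weights are rational and the rows are linear. On yes-instances the value exceeds $1-5/64$. On no-instances every assignment has value strictly below $1-5/64$. The check therefore decides the NP-hard promise problem, which proves the first claim. I will note, as in Remark~\ref{rem:hardness-quantifiers}, that the instances from the reduction are finite with constant arity once $\varepsilon$ is fixed.

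For the consequence, suppose a randomized polynomial-time algorithm has expected loss at most $g(\varepsilon)$ on $(1-\varepsilon)$-satisfiable instances of every arity, with $g(\varepsilon)\to0$. Pick $\varepsilon\le1/20$ with $g(\varepsilon)<5/128$. By Markov's inequality, a single run has loss below $5/64$ with probability greater than $1/2$. Repeating the run polynomially many times and keeping the best assignment drives the failure probability below $1/3$. Feeding the arity-$(q+1)$ instances above to this algorithm and applying the exact-evaluation test yields a BPP algorithm for an NP-hard problem, so $\mathrm{NP}\subseteq\mathrm{BPP}$, a contradiction.

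The only delicate point is the strict numeric separation $\varepsilon\log q>25/16$ despite the floor in $q$, which the estimates above handle with ample margin ($\approx2$ versus $1.5625$).
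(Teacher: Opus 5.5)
Your proof is correct and follows the paper's route: you specialize Corollary~\ref{cor:arity-hardness} to $q=\lfloor e^{2/\varepsilon}\rfloor$, check the regime~\eqref{eq:epsilon-regime} using $\varepsilon\log q\approx2$ (you use $\varepsilon\log q>25/16$, the paper uses $\varepsilon\log q\ge2-\varepsilon^2$), and conclude that the soundness threshold lies strictly below $1-5/64$. Your Markov-plus-repetition step with $g(\varepsilon)<5/128$ spells out what the paper leaves to Remark~\ref{rem:hardness-quantifiers}. The paper itself takes $g(\varepsilon)<5/64$ there and relies on the strict gap to the soundness threshold.
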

\begin{proof}
Put $q=\lfloor e^{2/\varepsilon}\rfloor$. Then
$2-\varepsilon^2\le\varepsilon\log q\le2$ for
$0<\varepsilon\le1/20$: use
$\log(1-t)\ge-2t$ for $0\le t\le1/2$ and
$2e^{-2/\varepsilon}\le\varepsilon$ for the lower bound.
These inequalities imply all three restrictions
in~\eqref{eq:epsilon-regime}, including
$256\varepsilon^2<2-\varepsilon^2$.
Moreover
\[
 \frac1{16}\sqrt{\varepsilon\log q}
 \ge\frac1{16}\sqrt{2-\varepsilon^2}
 \ge\frac{\sqrt2-\varepsilon}{16}>\frac5{64}.
\]
Apply Corollary~\ref{cor:arity-hardness}. If an arity-independent
$g(\varepsilon)\to0$ existed, fix a sufficiently small positive
$\varepsilon$ with $g(\varepsilon)<5/64$ and apply its algorithm to
these fixed-arity hard instances.
\end{proof}

\begin{remark}[Quantifiers and randomized guarantees]
\label{rem:hardness-quantifiers}
Each hardness assertion fixes $q,\varepsilon$ and its slack before
the input size grows. The exponential choice of arity in the last
corollary proves a barrier for the union of the languages; it does
not contradict robust approximation for any one fixed finite
language. The hardness statements are NP-hard promise distinctions
under UGC. Interpreting them as exclusions of randomized algorithms
uses the corresponding randomized hardness convention, or
$\mathrm{NP}\not\subseteq\mathrm{BPP}$ in addition to UGC.
An expected approximation strictly better than a fixed soundness
threshold gives a bounded-error distinguisher by independent
repetition and evaluation of the returned assignments. We do not
claim a deterministic derandomization of Theorem~\ref{thm:algorithm}.
\end{remark}

\section{The combined computational and approximation theorem}
\label{sec:combined}
\begin{theorem}[Primal--dual tractability and matching robust error]
\label{thm:combined}
Let the input be rational Majority-closed Boolean linear rows, with
the uncoupled Boolean activation formulation~\eqref{eq:soft}, normalized
nonnegative rational weights, and maximum arity at most $k\ge1$.
The following statements hold.
\begin{enumerate}[label=(\roman*)]
\item For every fixed integer $d\ge1$, $G_{2d}$ and a basis of all
identities through degree $2d$ are constructible in polynomial bit
time. Each $b\in G_{2d}$ has rational two-sided SoS derivations of
degree at most $2\deg b+10$.
Given rational $\alpha>0$ and rational $f$ admitting a real
degree-$2d$ proof of $f\ge0$, a rational proof of
$f+\alpha\ge0$ of degree at most $4d+10$ is computable in time and
output length $\poly(W_{4d+10},L,\bits f,\bits\alpha)$.
\item For rational $f$ of degree at most $2d$ and rational $\eta>0$,
an exactly feasible,
$\eta$-optimal rational point of $K_{2d}$ is computable in time and
output length $\poly(W_{2d},L,\bits f,\bits\eta)$, and
\[
 \pi_{2d}(S_{4d+10})\subseteq K_{2d}\subseteq S_{2d},\qquad
 \max_{s\in S}f(s)\le\max_{S_{4d+10}}f
                    \le\max_{K_{2d}}f\le\max_{S_{2d}}f.
\]
\item Optimizing $K_8$ and applying Gaussian threshold rounding gives
the randomized polynomial-time bounds
\eqref{eq:announced-approximation} and~\eqref{eq:uniform-upper} for
$\operatorname{OPT}_{\rm int}\ge1-\varepsilon$, with bit complexity
\eqref{eq:algorithm-complexity}. Satisfiable instances are solved
exactly. The upper bound is unconditional.
\item Under UGC, for $k\ge21$ and~\eqref{eq:epsilon-regime} with
$q=k-1$, it is NP-hard to distinguish value at least $1-\varepsilon$
from value at most $1-\frac1{16}\sqrt{\varepsilon\log(k-1)}$.
Thus the dependence $\sqrt{\varepsilon\log k}$ is optimal up to
universal factors in the stated regime; no equality of leading
constants is asserted. Assuming also
$\mathrm{NP}\not\subseteq\mathrm{BPP}$, the class of all arities has
no randomized polynomial-time arity-independent robust guarantee, as in
Corollary~\ref{cor:no-robustness}.
\end{enumerate}
Here $W_r=\binom{n+m+r}{r}$, and all polynomial bounds refer to
binary rational encodings; the degrees are fixed parameters.
\end{theorem}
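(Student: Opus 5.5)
This theorem collects results, so the proof is mainly assembly: each item is derived from a component established elsewhere, and the work is in matching parameters and degrees.

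\textbf{Item (i).} We take the complete truncated reduced Gr\"obner basis construction of Section~\ref{sec:softbasis}. It builds on the hard Majority structure of Section~\ref{sec:hard}, obtained via clause extraction and implication paths. This gives $G_{2d}$ together with a rational basis of $\I(S)_{\le 2d}$ in time $\poly(W_{2d},L)$. The two-sided SoS derivations of degree $2\deg b+10$ for each $b\in G_{2d}$ come from Section~\ref{sec:sos}. Feeding these derivations into the dual criterion and transfer theorem of Section~\ref{sec:blackboxes} turns a real degree-$2d$ proof of $f\ge0$ into a rational proof of $f+\alpha\ge0$. Its degree is at most $2\cdot 2d+10$ and its cost is $\poly(W_{4d+10},L,\bits f,\bits\alpha)$. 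The degree bookkeeping is the point to check: reducing a degree-$2d$ product modulo $\I(S)$ uses basis elements of degree at most $2d$.

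\textbf{Item (ii).} We apply the primal black box of Section~\ref{sec:primal} to $K_{2d}$. It needs an explicit affine description, which the identity basis of (i) supplies, and a rational optimization oracle for the spectrahedron. Together these give an exactly feasible $\eta$-optimal rational point in time $\poly(W_{2d},L,\bits f,\bits\eta)$. The inclusion $K_{2d}\subseteq S_{2d}$ holds by definition.

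For $\pi_{2d}(S_{4d+10})\subseteq K_{2d}$ we argue by duality. For each $h\in\I(S)_{\le 2d}$, the two-sided derivations of (i) give SoS proofs of $h\ge0$ and $-h\ge0$ of degree at most $4d+10$. Any functional in $S_{4d+10}$ is nonnegative on both, so it vanishes on $h$.

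The chain of maxima then follows. Point evaluations at $s\in S$ lie in $S_{4d+10}$, and each inclusion gives the corresponding inequality between maxima.

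\textbf{Items (iii) and (iv).} Item (iii) is Theorem~\ref{thm:rounding-input} (which is (ii) at $d=4$) combined with Theorem~\ref{thm:matching-rounding}, Proposition~\ref{prop:rounding-bits} and Theorem~\ref{thm:algorithm}. Exact solution of satisfiable instances comes from the 2-SAT step in the proof of Theorem~\ref{thm:algorithm}. The bound is unconditional because UGC is never used there.

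Item (iv) is Corollary~\ref{cor:arity-hardness} with $q=k-1\ge20$, followed by Corollary~\ref{cor:no-robustness} and Remark~\ref{rem:hardness-quantifiers} for the randomized exclusion.

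Optimality up to universal factors follows by comparing $160\sqrt{\varepsilon\log(ek)}$ with $\frac1{16}\sqrt{\varepsilon\log(k-1)}$. For $k\ge21$, $\log(ek)$ and $\log(k-1)$ are within a constant factor of each other.

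\textbf{Main obstacle.} The delicate step is item (i). First, the two-sided SoS derivability of the basis elements within degree $2\deg b+10$ must hold uniformly in arity. Second, its interface with the transfer theorem must give exactly degree $4d+10$. Everything else is citation and parameter matching.
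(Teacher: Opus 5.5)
Your proposal is correct and follows the paper's route: the paper's proof is the same assembly, citing Theorem~\ref{thm:main} for (i)--(ii), Theorems~\ref{thm:rounding-input} and~\ref{thm:algorithm} for (iii), and Corollaries~\ref{cor:arity-hardness} and~\ref{cor:no-robustness} for (iv), with the same comparison of $\log(ek)$ and $\log(k-1)$. The one step you compress is in (ii): passing from two-sided derivations of basis elements to an arbitrary identity $h=\sum_b a_b b$ with $\deg a_b\le 2d-\deg b$ requires the signed-product identity~\eqref{eq:signed} together with the per-element bound~\eqref{eq:blowup}, $D_b+2(2d-\deg b)\le 4d+10$, rather than a single uniform worst-case degree for every $b$.
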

\begin{proof}
Parts (i)--(ii) are Theorem~\ref{thm:main}; part (iii) is
Theorem~\ref{thm:rounding-input} followed by
Theorem~\ref{thm:algorithm}; part (iv) follows from the two arity
corollaries. Since $\log(ek)$ and $\log(k-1)$ differ by universal
factors for $k\ge21$, the bounds match in order. The construction of
$G_{2d}$ supplies primal identities, while its two-sided derivations
supply dual transfer; degree $26$ is a transfer bound, not the moment
degree rounded in part (iii).
\end{proof}

\clearpage
\part{Computational foundations: primal and dual SoS tractability}
\label{part:foundations}

This part proves the computational theorem used in
Section~\ref{sec:computational-preface}. The construction of the
truncated Gr\"obner basis supplies the complete affine identities
needed by the primal algorithm; the two-sided SoS derivations supply
the dual search and transfer arguments. Keeping these roles separate
also keeps distinct the optimized moment degree, the algebraic
truncation degree, and the degree of the transferred derivations.

\section{Hard Majority relations: a complete basis construction}
\label{sec:hard}
We include the proof to avoid importing any unverified structural
claim into the soft argument.

\begin{lemma}[Two-coordinate description; Baker--Pixley
  \cite{BakerPixley75}]\label{lem:2decomp}
A nonempty Majority-closed Boolean relation is determined by its unary
and binary projections. In particular it is the solution set of all
unary and binary clauses forbidding partial assignments with no extension
in the relation. Conversely, every relation defined by unary and binary
clauses is Majority-closed.
\end{lemma}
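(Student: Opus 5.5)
We prove the three assertions of Lemma~\ref{lem:2decomp} in turn, the central one being the Baker--Pixley 2-decomposition for Boolean relations. Throughout, let $R\subseteq\{0,1\}^n$ be nonempty and Majority-closed. Let $C(R)$ be the set of points $x\in\{0,1\}^n$ such that every unary projection $x_i$ and every binary projection $(x_i,x_{i'})$ of $x$ extends to some element of $R$. Then $R\subseteq C(R)$, and $C(R)$ is exactly the solution set of the unary and binary clauses forbidding non-extendable partial assignments. So the first two claims amount to $C(R)\subseteq R$.

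For the inclusion $C(R)\subseteq R$ I would prove, by induction on $|J|$, that for every $x\in C(R)$ and every coordinate set $J$ with $|J|\ge2$ there is $r\in R$ with $r|_J=x|_J$. The base case $|J|=2$ is the binary hypothesis, and sets with $|J|\le1$ are covered by the unary or binary hypothesis. For the inductive step take $|J|\ge3$ and three distinct coordinates $j_1,j_2,j_3\in J$. By induction there are $r^{(1)},r^{(2)},r^{(3)}\in R$ with $r^{(t)}$ agreeing with $x$ on $J\setminus\{j_t\}$. Put $r=\Maj(r^{(1)},r^{(2)},r^{(3)})\in R$. At every coordinate of $J$, at least two of the three vectors agree with $x$: on $J\setminus\{j_1,j_2,j_3\}$ all three agree, and at $j_t$ the other two agree. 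Hence $r|_J=x|_J$. Taking $J=[n]$ gives $x\in R$. When $n\le2$ the statement is immediate. This step is the heart of the lemma but is routine once the induction is set up; the only care needed is the case $|J|=3$, which the same argument covers since the sets $J\setminus\{j_t\}$ have size two.

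For the converse, let $R'$ be defined by a family of clauses, each involving at most two coordinates. It suffices to check that each single clause is preserved by $\Maj$, since intersections of Majority-closed relations are Majority-closed. Suppose a clause forbids a partial assignment $\alpha$ on a set $D$ with $|D|\le2$, and let $a,b,c$ all avoid $\alpha$ on $D$. If $\Maj(a,b,c)|_D=\alpha$, then for each $i\in D$ at least two of $a,b,c$ agree with $\alpha_i$ at $i$. With $|D|\le2$, two majorities of size at least two among three vectors must share some vector, by pigeonhole ($2+2>3$). That common vector equals $\alpha$ on all of $D$, a contradiction.

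The empty relation is trivially consistent with this convention, but the lemma as stated assumes nonemptiness, so nothing more is needed there. I expect no real obstacle; the main point to state carefully is that "binary projections" includes the unary information. This is needed because, when $n\ge2$, a unary obstruction is already visible in the binary projections.
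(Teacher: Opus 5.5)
Your proposal is correct and follows essentially the same argument as the paper. For the main claim, both argue by induction on the coordinate set, taking the majority of three tuples that each match the target off one of three chosen coordinates. For the converse, both use the fact that two majorities of size at least two among three tuples must intersect ($2+2>3$).
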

\begin{proof}
Suppose a tuple $a$ has every unary and binary restriction extendible
to a tuple in $R$. We show by induction that its restriction to any
$k$ coordinates extends to $R$. This is the assumption for $k\le2$.
For $k\ge3$, choose three different coordinates of the $k$-set.
Inductively choose three tuples matching $a$ on the $k$-set with,
respectively, one of those coordinates deleted. Their majority matches
$a$ everywhere on the $k$-set and belongs to $R$. Taking $k=n$ proves
the claim. For $n=1$ only the unary assertion is needed. For the converse,
each unary or binary clause is preserved by majority:
if a majority tuple falsifies a two-literal clause, at least two of the
three tuples falsify each literal, and these two pairs intersect.
\end{proof}

For one linear row, a partial assignment $x_T=a$ with $|T|\le2$ is
forbidden exactly when
\begin{equation}\label{eq:maxslack}
 T_j-\sum_{i\in T}a_{ji}a_i
      -\sum_{i\notin T}\min(a_{ji},0)<0.
\end{equation}
The same test at $T=\varnothing$ detects an empty row relation.
All these tests are exact rational calculations, at most
$1+2n+4\binom n2$ per row. Therefore a polynomial-size 2-CNF
description of every $R_j$, and hence of every $F_K$, is computable.
The row label is retained on every clause: all clauses extracted from
row $j$ share the \emph{same} soft indicator $y_j$.

\begin{lemma}[Effective 2-SAT operations; Krom \cite{Krom67}]\label{lem:2sat}
Feasibility of a 2-CNF formula, and extendibility of any prescribed
assignment to at most two variables, are decidable in polynomial time.
If the formula is infeasible and has no empty clause, its implication
graph contains a literal $u$ with paths $u\leadsto\neg u$ and
$\neg u\leadsto u$, each of length at most $2n-1$.
\end{lemma}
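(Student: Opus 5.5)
The plan is to follow the classical Aspvall--Plass--Tarjan treatment of 2-SAT, which is essentially Krom's argument. First I build the implication graph of the 2-CNF. Its vertices are the $2n$ literals. A clause $(u\vee v)$ contributes the arcs $\neg u\to v$ and $\neg v\to u$. A unary clause $(u)$ is read as $(u\vee u)$ and contributes the single arc $\neg u\to u$. This graph has polynomial size and satisfies the skew-symmetry property: $u\leadsto v$ holds if and only if $\neg v\leadsto\neg u$. Its strongly connected components can be computed in linear time.

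The key claim is that the formula is feasible if and only if no literal $u$ lies in the same component as $\neg u$. The forward direction is soundness. Every arc is a valid implication under any satisfying assignment, so $u\leadsto\neg u$ forces $u$ false and $\neg u\leadsto u$ forces $u$ true, which is a contradiction. For the converse I use the standard construction. Process the components in reverse topological order of the condensation. Set to true every literal of a component whose complementary component has not yet been assigned. Skew-symmetry maps components to components and reverses the topological order. From this one checks that no arc goes from a true literal to a false one, so every clause is satisfied.

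Extendibility of a prescribed partial assignment to at most two variables then reduces to feasibility. Add the corresponding unit clauses and rerun the test; this is at most one extra polynomial-time test per query.

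For the path statement, suppose the formula is infeasible and has no empty clause. By the equivalence above, some $u$ has both $u\leadsto\neg u$ and $\neg u\leadsto u$. Take a shortest path in each direction. A shortest path visits distinct vertices of a graph with $2n$ vertices, so it has at most $2n-1$ arcs.

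The only delicate step is the correctness of the assignment in the converse direction, namely that the reverse-topological greedy choice never sets a literal true while an implied literal is false. I will handle it with the order argument above: if $u$ is true and $u\to v$, then the component of $v$ comes after that of $u$ in topological order. I would then check by cases that $v$ was set true before the complement of $u$. The hypothesis excluding empty clauses only ensures that the implication graph represents the whole formula, since an empty clause has no arcs.
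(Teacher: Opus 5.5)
Your proposal is correct and follows essentially the same route as the paper: the implication graph with the single arc $\neg u\to u$ for a unary clause, the strongly-connected-component criterion, shortest simple paths for the $2n-1$ bound, and unit clauses for extendibility. Your reverse-topological greedy makes a literal true exactly when its component comes later in topological order than its complement's, which is the paper's rule. The case check you defer is the paper's one-line chain $c(\neg b)\le c(\neg a)<c(a)\le c(b)<c(\neg b)$.
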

\begin{proof}
The clause $a\vee b$ gives edges $\neg a\to b$ and $\neg b\to a$;
a unary clause $a$ gives $\neg a\to a$. Edges are closed under
contraposition. If a literal and its complement lie in one strongly
connected component, any truth assignment respecting implications is
impossible. Conversely, order the component DAG topologically and
declare a literal true exactly when its component comes later than
that of its complement. Exactly one of each complementary pair is true.
If $a\to b$, $a$ were true and $b$ false, the topological indices,
together with $\neg b\to\neg a$, would give
$c(\neg b)\le c(\neg a)<c(a)\le c(b)<c(\neg b)$, a contradiction.
This constructs a satisfying assignment. Strong components and paths
are computable by graph traversal. Shortest paths are simple, giving
the bound. Prescribed values are added as unary clauses.
\end{proof}

We use the following monic polynomial families:
\begin{align*}
 \B_X&=\{x_i^2-x_i:i\in[n]\},\\
 \mathcal Q&=\{(x_i-\alpha)(x_j-\beta):i<j,\ \alpha,\beta\in\{0,1\}\},\\
 \mathcal L&=\{x_i,x_i-1:i\in[n]\}
     \cup\{x_i-x_j,x_i+x_j-1:i<j\},\\
 \mathcal Z&=\{1\}.
\end{align*}
The explicit unary part of $\mathcal L$ also covers $n=1$.
Allowing signs and the larger linear family
$\pm((\delta-\beta)x_i+(\gamma-\alpha)x_j+\alpha\beta-\gamma\delta)$
also gives a redundant description. Only the monic forms
above are needed. Each has at most four monomials and coefficients
in $\{0,1,-1\}$.

\begin{proposition}[Hard reduced basis; Mastrolilli \cite{Mastrolilli2021}]
\label{prop:hard}
Let $R$ be the solution set of a 2-CNF formula on $n$ variables.
Its reduced graded lexicographic basis $H_R$ can be computed in
polynomial time and is contained in
$\B_X\cup\mathcal Q\cup\mathcal L\cup\mathcal Z$.
It has at most $2n+n^2+1$ elements.
\end{proposition}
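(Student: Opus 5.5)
We first dispose of the degenerate cases with Lemma~\ref{lem:2sat}. If the formula is infeasible then $\I(R)=\langle1\rangle$ and $H_R=\{1\}\subseteq\mathcal Z$. Otherwise $R\neq\varnothing$. In that case we describe a candidate set $H$ explicitly and show that it is the reduced graded lexicographic basis of $\I(R)$.

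To construct $H$, we run polynomially many 2-SAT extendibility queries to compute the unary and binary projections of $R$. For each variable we decide whether it is fixed to $0$, fixed to $1$, or free. For each pair of free variables $i<j$ we read off the set $\pi_{ij}(R)\subseteq\{0,1\}^2$. We then put into $H$ the following elements, in this order.
\begin{enumerate}
\item For each fixed variable, the element $x_i$ or $x_i-1$.
\item Among the free variables, partition them into classes under the relation ``$x_j$ equals $x_i$ or $1-x_i$ on $R$'', which is detected when $\pi_{ij}(R)$ has two elements forming a matching. In each class keep the $\prec$-smallest variable $x_r$ as representative. For every other member $x_j$ add $x_j-x_r$ or $x_j+x_r-1$; its leading monomial is $x_j$, since $x_j>x_r$ in the order.
\item For representative free variables, add $x_i^2-x_i$.
\item For each pair $i<j$ of representatives with $|\pi_{ij}(R)|=3$, add the quadratic $(x_i-\alpha)(x_j-\beta)$ that vanishes exactly off the missing point $(\alpha,\beta)$, so that its leading monomial is $x_ix_j$.
\end{enumerate}
The case $|\pi_{ij}|=4$ contributes nothing. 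The case $|\pi_{ij}|=2$ cannot occur for two distinct representatives with both variables free, because then they would have been merged in step 2. Every element lies in $\B_X\cup\mathcal Q\cup\mathcal L$, and the count is at most $n$ linear, $n$ Boolean and $\binom n2$ quadratic elements, giving $2n+n^2+1$.

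Each element of $H$ vanishes on $R$, so $\langle H\rangle\subseteq\I(R)$. For the converse, and simultaneously for the Gr\"obner property, we count standard monomials. The leading monomials of $H$ are the linear monomials of eliminated variables, $x_i^2$ for representatives, and $x_ix_j$ for the pairs in step 4. Hence the standard monomials are exactly $x_A=\prod_{i\in A}x_i$ over the sets $A$ of representatives that are independent in the graph $\Gamma$ whose edges are the step-4 pairs. Since $\I(R)$ is the vanishing ideal of a finite set, $\dim\Q[x]/\I(R)=|R|$. The standard monomials of $\langle H\rangle$ span the quotient, so it suffices to show that the $x_A$ for independent $A$ are linearly independent as functions on $R$. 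Then $\langle H\rangle=\I(R)$ and $H$ is a Gr\"obner basis.

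This is the main obstacle, and I would handle it by exhibiting points of $R$ on which the evaluation matrix is triangular. For each independent $A$ I want an $r_A\in R$ with $r_A$ equal to $1$ on $A$ and $0$ on every other representative; Möbius inversion on the subset lattice then gives independence. Since $R$ is defined by 2-clauses, Lemma~\ref{lem:2decomp} reduces the existence of $r_A$ to checking every pair of prescribed values on the representatives, together with the forced values of the dependent variables. Pairs inside $A$ are allowed because $A$ is independent, so the missing point of any step-4 pair is not $(1,1)$, or rather must be checked. Here the orientation matters: the missing point need not be $(1,1)$. I would therefore revise the setup by first complementing representatives so that every forbidden pair becomes a missing $(1,1)$ point relative to a chosen sign. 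If this is impossible, I would instead argue directly by the dimension count $|R|=\#\{\text{standard monomials}\}$ using the majority-closed structure, by induction on the number of representatives, and splitting on the value of the smallest one.

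Reducedness follows at the end. No leading monomial divides another, and every non-leading monomial of each element involves only representatives, or $1$, and is not a leading monomial. For instance, the tail $x_i+x_j-\dots$ of a step-4 quadratic has linear tails in representatives, which are standard. All the computations use polynomially many 2-SAT calls, which establishes polynomial time.
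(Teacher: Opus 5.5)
\textbf{There is a genuine gap: the key counting step is never proved.} Your set $H$ is exactly the paper's, and your reduction is sound. The monomials not divisible by any $\LM(h)$, $h\in H$, span $\Q[x]/\langle H\rangle$. So linear independence of the $x_A$ ($A$ independent in $\Gamma$) as functions on $R$ would give both $\langle H\rangle=\I(R)$ and the Gr\"obner property. But that independence, equivalently $|R'|=\#\{A: A\text{ independent in }\Gamma\}$, is the entire content of the proposition, and your proposal does not establish it. None of your three attempts works:
\begin{itemize}
\item The triangular family fails already for two free variables with the single clause $u\vee v$. The independent sets are $\varnothing,\{u\},\{v\}$, and $r_\varnothing$ would be the forbidden point $u=v=0$.
\item The complementation repair is impossible in general. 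Take $u\vee v$ and $\neg u\vee\neg w$ together with the entailed $v\vee\neg w$. Then $R'=\{010,011,100,110\}$ in the order $(u,v,w)$, nothing is forced, and $\Gamma$ is a triangle. Here $u$ must be complemented to turn the missing $(0,0)$ on $(u,v)$ into $(1,1)$, yet left alone to keep the missing $(1,1)$ on $(u,w)$.
\item The induction fallback is only announced, and it does not split termwise. In the same example the halves $u=0$ and $u=1$ of $R'$ have two points each. By contrast, $\Gamma$ has three independent sets avoiding $u$ and one containing it (name the variables so that $u$ is the one you split on). The mismatch arises because fixing $u$ forces the literals in conflict with it.
\end{itemize}

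\textbf{Missing idea.} Your sketch never uses that $\Gamma$ contains every entailed conflict, and the count depends on this. Dropping the entailed edge $vw$ above leaves five independent sets for four points, since $vw=w$ on $R'$. The paper avoids counting altogether:
\begin{itemize}
\item It obtains $\langle H\rangle=\I(R)$ from Lemma~\ref{lem:2decomp} (the common zeros are $R$), together with the fact that an ideal containing the Boolean equations is determined by its Boolean zeros.
\item It proves the Gr\"obner property by Buchberger's criterion. Linear elements, Boolean pairs and conflicts on disjoint variables have coprime leading monomials. Two conflicts sharing $u$ are handled by identity~\eqref{eq:Spair}, whose extra term $(c-a)(v-b)(w-e)$ either vanishes or is itself a listed conflict because the resolvent clause is entailed; this is exactly where completeness of the conflict list enters. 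A Boolean element and a conflict are handled by~\eqref{eq:Sbool}.
\end{itemize}
You should either adopt that route or give an actual proof of the count that uses this closure.

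\textbf{Minor slip.} $(x_i-\alpha)(x_j-\beta)$ is nonzero only at $(1-\alpha,1-\beta)$. So for the missing point $(\alpha,\beta)$ the conflict polynomial is $(x_i-1+\alpha)(x_j-1+\beta)$.
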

\begin{proof}
If $R=\varnothing$, output $\{1\}$. Otherwise determine all forced
values and all identities $x_i=x_j$ or $x_i=1-x_j$ by the unary/binary
extendibility tests of Lemma~\ref{lem:2sat}. Among the unforced
variables, equivalence up to complementation is an equivalence relation.
Choose the variable of smallest priority in each class as representative.
For each forced variable include $x_i-c_i$. For each nonrepresentative
unforced variable include $x_i-x_j$ or $x_i+x_j-1$, where $x_j$ is its
representative. The leading monomial of this linear polynomial is $x_i$.

Let $V$ be the set of representatives, and let $R'$ be the projection
onto $V$. It is Majority-closed. It has no forced variable and no two
distinct variables equal or complementary on all its tuples. Every
pair of coordinates in $R'$ therefore forbids at most one of its four
assignments: two forbidden assignments in a common row or column would
force a variable, and two diagonally placed forbidden assignments would
force equality or complementation. Include $u^2-u$ for each $u\in V$,
and include $(u-a)(v-b)$ whenever $u=1-a,v=1-b$ is the (unique, if any)
forbidden assignment for that pair. Lemma~\ref{lem:2decomp} says these
polynomials describe $R'$ exactly.

These polynomials form a Gr\"obner basis. The leading variables of the
linear polynomials are distinct and occur in no other leading monomial,
so their pairs satisfy the relatively-prime leading-monomial criterion.
The same criterion handles polynomials in disjoint variables. For two
conflicts sharing $u$, write
$q=(u-a)(v-b)$ and $r=(u-c)(w-e)$, where $u,v,w$ are distinct.
Their S-polynomial has the exact representation
\begin{equation}\label{eq:Spair}
 wq-vr=e q-b r+(c-a)(v-b)(w-e).
\end{equation}
If $a=c$, the last term vanishes. Otherwise the last conflict is
entailed: setting $v=1-b,w=1-e$ would force both $u=a$ and $u=c$.
Hence it is in the list. Every product on the right has degree at
most two, strictly smaller than the degree-three least common
multiple $uvw$. This is a standard representation sufficient for
Buchberger's criterion. There cannot be two distinct conflicts on the
same pair, by the preceding projection argument. Finally,
\begin{equation}\label{eq:Sbool}
 v(u^2-u)-u(u-a)(v-b)
      =b(u^2-u)+(a-1)(u-a)(v-b),
\end{equation}
using $a(a-1)=0$; this handles a Boolean polynomial and a conflict
sharing a variable. Pairs of Boolean polynomials have coprime leading
monomials. This exhausts all pairs.

The generated ideal contains all original Boolean equations: substitute
the linear equations into $x_i^2-x_i$ to obtain either zero or a
representative Boolean equation. Its zeros are precisely $R$.
An ideal containing the Boolean ideal is determined by its Boolean
zeros: the Boolean quotient is the product of copies of $\Q$ indexed
by the cube, and an ideal in that product is zero or the full field
in each component. Thus our ideal is $\I(R)$, not just an ideal with
the same real zeros without a radicality argument.

The list is monic and reduced. A linear tail uses only a representative
and a constant. A quadratic tail is linear, and all quadratic leading
monomials are distinct. No tail is divisible by another leading
monomial. The construction uses $O(n^2)$ 2-SAT tests. Its coefficients
are only $0,1,-1$, proving all remaining assertions.
\end{proof}

\section{The soft basis: structure, completeness, and truncation}
\label{sec:softbasis}
Set $H_K=H_{F_K}$. Intersections and projections of Majority-closed
relations are Majority-closed, so the preceding construction applies.

\begin{lemma}[Activation coefficient decomposition]\label{lem:decomp}
For a Boolean multilinear polynomial write uniquely
$f(x,y)=\sum_{K\subseteq[m]}Y_K f_K(x)$, with every $f_K$ multilinear
in $x$. Then
\begin{equation}\label{eq:decomp}
 f\in\I(S)\quad\Longleftrightarrow\quad
 f_K\in\I(F_K)\text{ for every }K.
\end{equation}
\end{lemma}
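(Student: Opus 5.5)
The plan is to reduce to Möbius inversion over the Boolean lattice of activation sets. Since $f$ is Boolean multilinear, membership $f\in\I(S)$ means $f(x,\mathbf 1_H)=0$ for every $H\subseteq[m]$ and every $x\in F_H$, by the description~\eqref{eq:S} of $S$. Evaluating the decomposition at $y=\mathbf 1_H$ gives $Y_K(\mathbf 1_H)=\mathbf 1_{\{K\subseteq H\}}$. Hence
\[
 f(x,\mathbf 1_H)=\sum_{K\subseteq H}f_K(x).
\]
Two facts will be used throughout. The $F_K$ are antitone in $K$: $K\subseteq H$ implies $F_H\subseteq F_K$. And for multilinear $g$ in $x$, membership $g\in\I(F_K)$ is equivalent to $g$ vanishing on the finite set $F_K$.

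For the direction ($\Leftarrow$), suppose each $f_K$ vanishes on $F_K$. Fix $H$ and $x\in F_H$. Every $K\subseteq H$ has $x\in F_H\subseteq F_K$, so each $f_K(x)=0$, and the sum above vanishes.

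For the direction ($\Rightarrow$), I would use induction on $|K|$. The claim is that $f_K(x)=0$ for all $x\in F_K$. Fix $K$ and $x\in F_K$, and take $H=K$ in the identity, which is allowed since $(x,\mathbf 1_K)\in S$. This gives
\[
 0=f_K(x)+\sum_{K'\subsetneq K}f_{K'}(x).
\]
For each proper subset $K'$ we have $x\in F_K\subseteq F_{K'}$, so the induction hypothesis gives $f_{K'}(x)=0$, and therefore $f_K(x)=0$. The base case is $K=\varnothing$: here $F_\varnothing=\{0,1\}^n$ and $(x,0)\in S$, so $f_\varnothing(x)=f(x,0)=0$.

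One point needs care. The statement is about $f$ itself, but $\I(S)$ is evaluated pointwise, and $S$ lies in the Boolean cube. So for an arbitrary polynomial, reduction modulo the Boolean ideal does not change membership, and the lemma should be read for the unique multilinear representative. Uniqueness of the coefficients $f_K$ follows from uniqueness of multilinear expansions in the $y$ variables.

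I expect no real obstacle. The only point to state explicitly is that the uncoupled formulation lets every $H$ occur with the full fibre $F_H$; this is exactly why the text warns against coupling constraints on the indicators. The induction relies on both this and the antitonicity $F_K\subseteq F_{K'}$ for $K'\subseteq K$.
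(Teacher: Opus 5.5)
Your proof is correct and follows the paper's argument essentially step for step. The paper also evaluates at $y=0$ for $K=\varnothing$ and inducts using evaluation at $y=\mathbf 1_K$ together with $F_K\subseteq F_H$ for $H\subsetneq K$. It proves the converse exactly as you do, by observing that only the terms with $K\subseteq H$ survive at $(x,\mathbf 1_H)$.
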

\begin{proof}
Suppose $f$ vanishes on $S$. Evaluate at $y=0$ to obtain the assertion
for $K=\varnothing$. Inductively, for $x\in F_K$, evaluation at
$y=\mathbf1_K$ gives $0=\sum_{H\subseteq K}f_H(x)$.
For $H\subsetneq K$, the induction hypothesis and $F_K\subseteq F_H$
give $f_H(x)=0$, hence $f_K(x)=0$. Empty $F_K$ cause no difficulty.
Conversely, at $(x,\mathbf1_H)\in S$ only terms with $K\subseteq H$
survive, and each vanishes because $x\in F_H\subseteq F_K$.
\end{proof}

Define the finite, possibly exponentially large list
\begin{equation}\label{eq:U}
 U=\{y_j^2-y_j:j\in[m]\}
      \cup\bigcup_{K\subseteq[m]}\{Y_K h:h\in H_K\}.
\end{equation}
The corresponding ambient structured families are
\[
 \B_Y=\{y_j^2-y_j\}_j\cup\{Y_K b:b\in\B_X\},\qquad
 \mathcal Q_Y=\{Y_Kq:q\in\mathcal Q\},
\]
and analogously $\mathcal L_Y$ and $\mathcal Z_Y$.
These are \emph{ambient candidate families}; their arbitrary members
need not vanish. The list \eqref{eq:U} selects only valid members.

\begin{theorem}[Constructible reduced soft basis]\label{thm:structure}
The reduced basis $G$ of $\I(S)$ is obtained from $U$ by retaining
exactly the elements whose leading monomial is minimal under
divisibility among the leading monomials of $U$.
Consequently, except for $y_j^2-y_j$, every element has form
\begin{equation}\label{eq:structure}
 p=Y_K\tau(x),\qquad
 \tau\in\B_X\cup\mathcal Q\cup\mathcal L\cup\mathcal Z,
 \qquad\deg p=|K|+\deg\tau.
\end{equation}
For every integer $t\ge0$, the correct truncation $G_t$ is computable
from the rational rows in time $\poly(W_t,L)$ (with the convention
that $n,m$ are included in $L$). It has at most $W_t$ elements,
coefficients in $\{0,1,-1\}$, and at most four terms per element.
\end{theorem}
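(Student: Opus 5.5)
The plan is to show in turn that $U\subseteq\I(S)$, that $U$ is a Gr\"obner basis of $\I(S)$, and that the divisibility-minimal elements of $U$ are exactly the reduced basis. Truncation is then handled by enumeration.

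\emph{Membership.} Each $y_j^2-y_j$ vanishes on $S$. For $h\in H_K$ and $(x,\mathbf1_H)\in S$, the product $Y_Kh$ is zero unless $K\subseteq H$. In that case $x\in F_H\subseteq F_K$, so $h(x)=0$. Hence $U\subseteq\I(S)$. Note also that $H_\varnothing=\B_X$ by Proposition~\ref{prop:hard}, since $F_\varnothing$ is the whole cube. So all Boolean equations lie in $U$.

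\emph{Gr\"obner property.} I will show that every nonzero $f\in\I(S)$ has $\LM(f)$ divisible by some leading monomial of $U$.
\begin{enumerate}[label=(\alph*)]
\item If $\LM(f)$ is not multilinear, it is divisible by $x_i^2$ or $y_j^2$, which are leading monomials in $U$.
\item Otherwise, reduce by the Boolean elements of $U$ to reach a multilinear $f'\in\I(S)$ whose leading monomial is unchanged. Reduction only replaces a divisible monomial by strictly smaller ones, and $\LM(f)$ is already multilinear.
\item Write $f'=\sum_KY_Kf_K$. The monomials $Y_Km$ are distinct for distinct $(K,m)$. Graded lex is multiplicative, and for fixed $K$ we have $Y_Km'>Y_Km$ if and only if $m'>m$.
\item Therefore $\LM(f')=Y_K\LM(f_K)$ for the component $K$ that attains the maximum.
\item By Lemma~\ref{lem:decomp}, $f_K\in\I(F_K)$, so $\LM(f_K)$ is divisible by $\LM(h)$ for some $h\in H_K$. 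Then $Y_K\LM(h)=\LM(Y_Kh)$ divides $\LM(f)$.
\end{enumerate}
Since $U\subseteq\I(S)$, $U$ generates $\I(S)$ and is a Gr\"obner basis.

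\emph{Reducedness.} Keeping only the divisibility-minimal leading monomials leaves a minimal Gr\"obner basis. Two elements of $U$ with the same leading monomial must share $K$, because the $y$-part of the monomial determines $K$. They must then share $h$, because $H_K$ is reduced. So there are no duplicates. Every element is monic. It remains to check that tails are standard monomials for $\I(S)$; this is the step I expect to need the most care.

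The key claim is: for multilinear $m$, $Y_Km\in\LM(\I(S))$ if and only if $m\in\LM(\I(F_K))$. One direction is the product argument above. For the other, take $f\in\I(S)$ with $\LM(f)=Y_Km$ and multilinearize as in (b). The component $f_K$ then has leading monomial $m$ by the order compatibility in (c), and $f_K\in\I(F_K)$ by Lemma~\ref{lem:decomp}. The tail monomials of $Y_Kh$ have the form $Y_Km'$ with $m'$ a tail monomial of $h$. Since $H_K$ is reduced, each such $m'$ is standard for $\I(F_K)$, so $Y_Km'$ is standard for $\I(S)$. The tails of $y_j^2-y_j$ are trivially standard. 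By uniqueness of the reduced Gr\"obner basis, the retained set equals $G$. The form~\eqref{eq:structure}, the coefficients in $\{0,1,-1\}$, and the bound of four terms per element all follow from Proposition~\ref{prop:hard}.

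\emph{Truncation.} An element $Y_Kh$ of degree at most $t$ has $|K|\le t$. The algorithm therefore does the following.
\begin{enumerate}[label=(\roman*)]
\item Enumerate all $K$ with $|K|\le t$; there are $E_t(m)\le W_t$ of them.
\item For each such $K$, build the 2-CNF description of $F_K$ from the clause tests~\eqref{eq:maxslack}.
\item Compute $H_K$ by Proposition~\ref{prop:hard} in polynomial time.
\item Keep the products $Y_Kh$ of degree at most $t$.
\item Test minimality among these candidates only.
\end{enumerate}
Restricting the minimality test in (v) is valid because any divisor of a leading monomial of degree at most $t$ also has degree at most $t$, so it arises from a candidate already enumerated. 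Distinct elements have distinct leading monomials of degree at most $t$, which gives at most $W_t$ elements. The total time is $\poly(W_t,L)$.
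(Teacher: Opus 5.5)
Your proposal is correct and follows the paper's proof essentially step for step: membership, leading monomials via the activation decomposition of Lemma~\ref{lem:decomp}, the divisibility filter plus standardness of tails, and truncation by enumerating $|K|\le t$ with the graded-degree argument, where your ``if and only if'' description of which $Y_Km$ lie in the initial ideal is just a repackaging of the paper's check that no $Y_H\LM(g)$ with $H\subseteq K$ divides a tail. Two small slips need repair: the tail $y_j$ of $y_j^2-y_j$ is \emph{not} standard when $R_j=\varnothing$ (then $y_j\in\I(S)$), so instead of ``trivially'' you should note that your key claim with $K=\{j\}$, $m=1$ makes $y_j$ standard exactly when $R_j\neq\varnothing$, while otherwise $y_j=\LM(Y_{\{j\}}\cdot1)$ properly divides $y_j^2$ and that element is never retained; and your truncated candidate list must also contain the polynomials $y_j^2-y_j$ when $t\ge2$, as the paper's does.
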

\begin{proof}
Membership of every element of $U$ follows from \eqref{eq:S}, or
directly from Lemma~\ref{lem:decomp}; the Boolean elements vanish too.
For the Gr\"obner property, take a nonzero $f\in\I(S)$.
If $\LM(f)$ has a repeated variable, it is divisible by a Boolean
leading monomial in $U$ (use $H_\varnothing=\B_X$).
Otherwise Boolean multilinearization preserves its leading monomial:
replacing powers by first powers only lowers monomials in a graded
order. Write its multilinearization as in Lemma~\ref{lem:decomp}.
If its leading monomial has $y$-support $K$, it is
$Y_K\LM(f_K)$. Since $f_K\in\I(F_K)$, some $h\in H_K$ has
$\LM(h)\mid\LM(f_K)$. Then $Y_Kh\in U$ supplies the required
leading divisor. Thus the leading ideal of $U$ equals that of
$\I(S)$; division also shows $U$ generates $\I(S)$.

It remains to verify \emph{reducedness}, not just generation.
There are no duplicate leading monomials in $U$: $y$-support
determines $K$, and $H_K$ is reduced. Discard every element whose
leading monomial is properly divisible by another. Call the retained
list $V$. It is still a Gr\"obner basis. Consider a tail monomial
$Y_Kv$ of a retained $Y_Kh$. It is squarefree, except possibly when
$h$ is Boolean, whose tail is also squarefree. A Boolean leading
monomial cannot divide it. A divisor $Y_H\LM(g)$ from the rest of
$U$ would require $H\subseteq K$ and $\LM(g)\mid v$.
But $F_K\subseteq F_H$ gives $\I(F_H)\subseteq\I(F_K)$, hence
$\LM(g)$ belongs to the initial ideal of $\I(F_K)$. This contradicts
the fact that $v$ is a standard monomial for the reduced basis $H_K$.
For a retained $y_j^2-y_j$, its only tail $y_j$ could have a divisor
only if $y_j\in U$; that would also divide its leading monomial,
so it would not have been retained. Thus $V$ is monic and reduced,
and $V=G$. In fact activated $x$-Boolean polynomials with $K\ne\varnothing$
are always discarded, being divisible by their unactivated versions.

For truncation, enumerate only $|K|\le t$, construct $H_K$, and
keep $Y_Kh$ only if $|K|+\deg h\le t$; include the $y$-Boolean
polynomials if $t\ge2$. Call this list $U_t$. Any leading divisor
of a member of $U_t$ has degree at most $t$, since the order is
graded, and its activation set has size at most $t$. It is therefore
already in $U_t$. The same divisibility filter consequently returns
\emph{exactly} $G_t$, even if deriving an element by an unrelated
generating-set computation would require intermediate degree above $t$.

There are at most
\begin{equation}\label{eq:candidates}
 m+(2n+n^2+1)E_t(m)
\end{equation}
candidates before filtering, and $E_t(m)\le W_t$.
Each hard construction uses polynomially many 2-SAT tests on the
precomputed row clauses. A pairwise divisibility filter is polynomial
in this candidate count and $N+t$. Its arithmetic is on exponent
vectors and coefficients of constant magnitude. Finally, distinct
reduced leading monomials of degree at most $t$ give $|G_t|\le W_t$.
In sparse encoding the total length is
$O(W_t(t+2)\log(N+t+2))$, apart from fixed formatting conventions.
\end{proof}

For completeness, the connection with a clause-generated soft ideal
is exact. Let $c_{j,a}$ be the unary/binary forbidden-assignment
indicator for row $j$, or $1$ for an empty row relation. Then
\begin{equation}\label{eq:clauseideal}
 \I(S)=I_{\mathrm{bool}}+
             \langle y_jc_{j,a}(x):j,a\rangle.
\end{equation}
The zeros of the right side are exactly \eqref{eq:S} by
Lemma~\ref{lem:2decomp}; equality of ideals follows from the Boolean
product-of-fields argument. Equation~\eqref{eq:clauseideal} explains
why replacing a row by several clauses does not introduce independent
activation choices.

\begin{remark}[Why Boolean-only closure does not suffice]
The Boolean normal form of an arbitrary S-polynomial of hard family
members need not remain in those families. For example, with four
distinct variables, $q=uv$ and $r=(w-1)(v'-1)$ give
$S(q,r)=uvw+uvv'-uv$, a cubic already multilinear.
Both $q,r$ are quadratic conflicts. Their S-polynomial reduces to zero
using $q$, but Boolean reduction alone does not make it a family member.
Thus a closure assertion using Boolean reduction alone would be
false. Nor does an arbitrary cutoff of Buchberger intermediates prove
the correct reduced-basis truncation. Theorem~\ref{thm:structure}
instead works directly with the full vanishing ideal and proves the
structural conclusion, including shared row indicators.
\end{remark}

\section{Exact two-sided SoS derivations}
\label{sec:sos}
Here and below $\equiv$ denotes congruence modulo $I_{\mathrm{bool}}$.
We give explicit ideal multipliers at the end of the construction.

\subsection{Why scalar Farkas multipliers do not suffice}
The four Majority-closed rows
\[
 P_1=x+z-1,\quad P_2=x-z,\quad P_3=z-x,\quad P_4=1-x-z
\]
are Boolean-infeasible: the middle two force $x=z$, and the other two
then force $2x=1$. Their LP relaxation contains $(1/2,1/2)$, where
every row is zero. A nonnegative scalar combination of the rows cannot
be strictly negative on every Boolean point: its average on the four
points is its value at the centre, namely zero. Hence a uniform
scalar-conic refutation on the Boolean cube is impossible. Nevertheless,
\begin{equation}\label{eq:counterref}
 -1=(1-x)^2(P_1+P_2)+x^2(P_3+P_4)+4(x^2-x)
\end{equation}
is an exact degree-three SoS refutation. Polynomial square multipliers
and the Boolean equations are essential to this repair.

\subsection{Extracting clauses from a rational linear row}
Let $Q=b-\sum_i a_ix_i\ge0$ and let a forbidden assignment to distinct
variables $u,v$ have literal indicators $I_u,I_v$ (each $x_i$ or
$1-x_i$). Set $I=I_uI_v$, $J=(1-I_u)(1-I_v)$ and let
$-\delta<0$ be the maximum restricted slack, computed by
\eqref{eq:maxslack}. On the unfixed variables,
\[
 Q|_{u,v}=-\delta-\sum_{i\notin\{u,v\}}w_iL_i,
 \qquad w_i=|a_i|,\qquad
 L_i=\begin{cases}x_i&a_i\ge0,\\1-x_i&a_i<0.\end{cases}
\]
The clause inequality $C=1-I_u-I_v\ge0$ has the certificate
\begin{equation}\label{eq:extract}
 C\equiv J^2+\frac{I^2}{\delta}Q+
                 \sum_{i\notin\{u,v\}}\frac{w_i}{\delta}(IL_i)^2.
\end{equation}
Indeed, $I^2Q\equiv-\delta I-\sum_i w_iIL_i$, and
$(IL_i)^2\equiv IL_i$, so the right side reduces to $J-I=C$.
Every weight is a nonnegative rational; every multiplier displayed
as a square is a genuine square before reduction. The degrees of the
three types of terms are respectively $4,5,6$.
For a unary forbidden assignment with indicator $I$, the same calculation
gives
\begin{equation}\label{eq:unary}
 -I\equiv\frac{I^2}{\delta}Q+
          \sum_{i\text{ unfixed}}\frac{w_i}{\delta}(IL_i)^2
\end{equation}
in degree at most four. With no fixed variables and an infeasible row,
take $I=1$ to obtain a degree-two refutation. These formulas retain
the nonnegative slack terms missing from the scalar extraction claims.

\subsection{Implication paths and a degree-eight hard refutation}
\begin{lemma}\label{lem:hardref}
Every infeasible system of rational Majority-closed Boolean linear
rows has a constructible rational degree-eight SoS refutation. In
congruence form it is
\begin{equation}\label{eq:refute}
 -1\equiv R+\sum_j S_jP_j,
 \qquad\deg R\le8,\quad\deg S_j\le6,
\end{equation}
with $R,S_j$ sums of squares and polynomial bit complexity.
\end{lemma}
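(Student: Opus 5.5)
We split into two cases according to whether some row is empty as a relation. If some row $P_j$ has an infeasible empty restriction (the test \eqref{eq:maxslack} fails at $T=\varnothing$), we take $I=1$ in \eqref{eq:unary}. This gives $-1\equiv\delta^{-1}Q+\sum_i(w_i/\delta)L_i^2$, a degree-two refutation, and we are done. Otherwise every row is nonempty. By Lemma~\ref{lem:linear-conflicts} and Lemma~\ref{lem:2decomp}, the conjunction is then equivalent to the 2-CNF made of all unary and binary clauses extracted from the rows, and this 2-CNF is infeasible without an empty clause. Lemma~\ref{lem:2sat} supplies a literal $u$ with implication paths $u\leadsto\neg u$ and $\neg u\leadsto u$, each of length at most $2n-1$. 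Every edge $a\to b$ of the implication graph comes from a clause $\neg a\vee b$ extracted from some row. Hence, by \eqref{eq:extract} (or \eqref{eq:unary} for unit clauses), the edge has a certificate of the clause inequality, with terms of degree at most six and multipliers $S_j$ of degree at most four on the rows.

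Next we convert each clause certificate into an implication inequality in indicator form. For an edge $a\to b$ the clause reads $1-a+ab-\ldots$; concretely, the clause $\neg a\vee b$ is $C=1-a(1-b)\ge0$ modulo Booleans, so it certifies $a(1-b)\le0$ up to the $J^2$ term. A degree-bounded square, however, is what we need. Rather than chaining linear inequalities along a path, which would make degrees grow with path length, we aim for a degree-independent argument. Along the path $u=\ell_0\to\ell_1\to\cdots\to\ell_r=\neg u$ we use the telescoping identity
\[
 \ell_0-\ell_r=\sum_{t}(\ell_{t-1}-\ell_t)
 \equiv\sum_t\bigl(\ell_{t-1}(1-\ell_t)-\ell_t(1-\ell_{t-1})\bigr),
\]
so it suffices to certify $\ell_{t-1}(1-\ell_t)\le0$ for each edge. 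The negative terms $-\ell_t(1-\ell_{t-1})$ are squares modulo Booleans and can be discarded. Each $\ell_{t-1}(1-\ell_t)$ is the indicator $I$ of a forbidden binary assignment. Multiplying the extraction identity \eqref{eq:extract} by $I$, or equivalently using $I^2Q\equiv-\delta I-\sum w_iIL_i$ directly, gives $-I\equiv\delta^{-1}I^2Q+\sum(w_i/\delta)(IL_i)^2$, which has degree at most six with $S_j=I^2/\delta$. Summing along the first path yields $-(u-\neg u)=-(2u-1)\equiv$ SoS $+\sum S_jP_j$. Summing along the second path yields $-(2\neg u-1)=2u-1\equiv$ SoS $+\sum S_jP_j$. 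Adding the two gives $0\equiv$ a sum of these certificates plus positive contributions. This is not yet $-1$, and closing that gap is the main obstacle.

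To obtain $-1$ we keep the discarded terms, or, better, we weight the sums by squares so as to localize. We multiply the first path sum by $u^2$ and the second by $(1-u)^2$. Then $u^2(u-\neg u)\equiv u$ and $(1-u)^2(\neg u-u)\equiv 1-u$, and the total of the left sides is $-1$. On the right, each edge term becomes $u^2\cdot(\text{SoS}+S_jP_j)$. The product of a square with a certificate of degree at most six is of degree at most eight, with $S_j$ replaced by $u^2I^2/\delta$ of degree at most six, matching \eqref{eq:refute}. The discarded terms $u^2\ell_t(1-\ell_{t-1})$ are again squares modulo Booleans, $(u\ell_t(1-\ell_{t-1}))^2$, of degree at most six, so they go into $R$. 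This mirrors the repair in \eqref{eq:counterref}. The path lengths are polynomial and the coefficients are $1/\delta$ and $w_i/\delta$ with $\delta$ computed exactly, so the certificate has polynomially many terms of polynomial bit length. Finally, we make the Boolean ideal multipliers explicit by multilinearizing each square, which is routine. The step to check carefully is the degree count when the literal $u$ itself occurs in an edge indicator: the Boolean reductions only lower degree, so the bounds $\deg R\le8$ and $\deg S_j\le6$ still hold.
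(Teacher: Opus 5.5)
Your proposal is correct and is essentially the paper's proof. It uses the same telescoping along the two implication paths of Lemma~\ref{lem:2sat}, the same per-edge certificates from \eqref{eq:extract} or \eqref{eq:unary}, and the same localization by $u^2$ and $(1-u)^2$ as in \eqref{eq:pathidentity}. Your split $\ell_t-\ell_{t-1}=J-I$, with $J=\ell_t(1-\ell_{t-1})$ kept as a square, is exactly \eqref{eq:extract} for binary edges; for unary edges it absorbs the constant that the paper tracks through $(\theta_e,c_e)=(2,1)$. The formula $C=1-a(1-b)\ge0$ in your second paragraph is misstated, since that inequality holds on the whole cube, but your argument never uses it.
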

\begin{proof}
Extract the row clauses. An empty clause gives the preceding
degree-two refutation. Otherwise Lemma~\ref{lem:2sat} gives paths
$\neg x_v\leadsto x_v$ and $x_v\leadsto\neg x_v$. Interpret a
literal as its affine $0/1$ indicator. An edge $\ell\to\ell'$
has difference $D_e=\ell'-\ell$. For a binary clause this equals
its inequality $C_e$. For a unary clause with desired literal $\ell'$,
the inequality is $C_e=\ell'-1$ and $D_e=2C_e+1$. Thus
\[
 D_e=\theta_e C_e+c_e,
 \qquad(\theta_e,c_e)\in\{(1,0),(2,1)\}.
\]
This unary constant must not be omitted. Along the two paths the
differences telescope to $2x_v-1$ and $1-2x_v$, respectively.
The exact identity
\begin{equation}\label{eq:pathidentity}
 -1=(1-x_v)^2(2x_v-1)+x_v^2(1-2x_v)+4(x_v^2-x_v)
\end{equation}
then supplies the Boolean refutation.
Write \eqref{eq:extract} or \eqref{eq:unary} as
$C_e\equiv U_e+V_eP_{j(e)}$, where $U_e,V_e$ are SoS,
$\deg U_e\le6$, $\deg V_e\le4$. For each path occurrence let
$t_e=1-x_v$ on the first path and $t_e=x_v$ on the second. Set
\begin{equation}\label{eq:RS}
 R=\sum_e t_e^2(\theta_e U_e+c_e),\qquad
 S_j=\sum_{e:j(e)=j}\theta_e t_e^2V_e.
\end{equation}
These are SoS: products with $t_e^2$ multiply the individual square
factors. Substituting in \eqref{eq:pathidentity} proves
\eqref{eq:refute}; the row terms have degree at most seven and the
free SoS term at most eight. There are at most $4n-2$ edge occurrences
and $O(n)$ squares per extracted clause. Rational lengths are bounded
in Section~\ref{sec:bits}.
\end{proof}

\subsection{Conditioning and lifting by squared activation indicators}
\begin{theorem}[Two-sided simulation]\label{thm:simulation}
Suppose $p=Y_K\tau\in\I(S)$ with
$\tau\in\mathcal Q\cup\mathcal L\cup\mathcal Z$.
Both $p\ge0$ and $-p\ge0$ have explicit rational certificates from
the original soft system of degree at most
\begin{equation}\label{eq:Dexact}
 D_p=2|K|+2r+8,\qquad r=|\supp(\tau)|\le2.
\end{equation}
In particular, $D_p\le2\deg p+10$. Boolean-family elements have
immediate ideal certificates of their own degree. All these
certificates are constructible with polynomial bit complexity.
\end{theorem}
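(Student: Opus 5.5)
We will reduce the soft statement to a hard statement about the relation $F_K$ and then lift it by multiplying with the squared activation indicator $Y_K^2$. Write $\tau$ as a polynomial in the variables of its support. Since $p\in G$ is a reduced basis element, $\tau\in H_K$. By Proposition~\ref{prop:hard}, $\tau$ is one of four things: $1$ (when $F_K=\varnothing$); a forced value $x_i-c$; an equality or complementation $x_i-x_j$, $x_i+x_j-1$; or a conflict $(x_u-a)(x_v-b)$.

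\textbf{Hard certificates.} In each case I would produce, from the hard rows $P_j$, $j\in K$, certificates of both signs of $\tau$ of the form $\pm\tau\equiv R+\sum_{j\in K}S_jP_j$, with small degree. I would obtain these by conditioning: for each Boolean assignment $\beta$ to $\supp(\tau)$ (at most $2^r\le4$ of them), let $I_\beta$ be its indicator, of degree $r$.
\begin{itemize}
\item If $\tau(\beta)=0$, the term $I_\beta\tau$ vanishes modulo $I_{\mathrm{bool}}$.
\item If $\tau(\beta)\neq0$, the assignment $\beta$ is inconsistent with $F_K$, because $\tau$ vanishes on $F_K$. The system $\{P_j\ge0:j\in K\}$ restricted by $x_{\supp\tau}=\beta$ is then infeasible. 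Lemma~\ref{lem:hardref} applies to the restricted rows, which are still rational Majority-closed linear rows, since restriction preserves Majority-closedness. It gives $-1\equiv R_\beta+\sum_j S_{\beta,j}P_j|_\beta$.
\end{itemize}
Multiplying by $I_\beta^2\equiv I_\beta$ lifts this to $-I_\beta\equiv I_\beta^2R_\beta+\sum_jI_\beta^2S_{\beta,j}P_j$, because $I_\beta P_j\equiv I_\beta P_j|_\beta$. Since $\pm\tau\equiv\sum_\beta \pm\tau(\beta)I_\beta$, the positive coefficients $|\tau(\beta)|\le2$ multiplying $-I_\beta$ give the negative parts. The positive parts $+I_\beta$ are squares, $I_\beta\equiv I_\beta^2$. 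The resulting degrees are $8+2r$ for the free SoS term and $6+2r+1$ for the row terms.

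For $\tau=1$ the same argument with $\beta$ empty is just Lemma~\ref{lem:hardref}. That case only needs the sign $-1\ge0$; the sign $1\ge0$ is trivial.

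\textbf{Soft lifting.} Next I would multiply the hard certificate by $Y_K^2$. The key identity is
\[
Y_K^2P_j\equiv Y_K^2\widehat P_j\qquad(j\in K),
\]
because $y_j(1-y_j)\equiv0$ kills the big-$M$ term. Hence $\pm Y_K\tau\equiv Y_K^2R+\sum_{j\in K}(Y_K^2S_j)\widehat P_j$. Each multiplier $Y_K^2S_j$ is SoS, since $Y_K$ is multiplied into every square. This adds $2|K|$ to the degree, giving $D_p=2|K|+2r+8$. Using $\deg p=|K|+\deg\tau$ and $\deg\tau\ge r/2$, and checking the linear and constant cases separately, we get $D_p\le2\deg p+10$. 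For a forced value, for instance, $r=1$ and $\deg p=|K|+1$. The Boolean family elements are themselves generators of the ideal, so they have immediate certificates.

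\textbf{Main obstacle.} The delicate point is the bookkeeping of exact congruence forms: we must check that the Boolean multilinearization $I_\beta P_j\to I_\beta P_j|_\beta$ is performed with explicit ideal multipliers of bounded degree, and that squares remain genuine squares before reduction. I would also have to confirm that restricting a row preserves Majority-closedness and keeps its encoding polynomial, so that Lemma~\ref{lem:hardref} applies with polynomial bit complexity. The bit bounds then follow from the at most four conditioning branches and the polynomial-size refutations already constructed.
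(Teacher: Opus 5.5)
Your proposal is correct and is essentially the paper's proof. You condition on the at most four assignments to $\supp(\tau)$ where $\tau\neq0$; these are excluded from $F_K$ directly by Lemma~\ref{lem:decomp} applied to $p\in\I(S)$, so you never need $p\in G$ or $\tau\in H_K$. You then refute each restricted Majority-closed system by Lemma~\ref{lem:hardref}, multiply by the genuine square of $Y_K$ times the assignment indicator, and reassemble both signs by Boolean interpolation, where the paper merely performs your two lifting steps at once with $H_a=Y_KA_a$. The bookkeeping you flag as the main obstacle is handled by the explicit reduction \eqref{eq:idealreduce}, which writes any $F$ as $\ml(F)$ plus Boolean ideal multiples of degree at most $\deg F$, so the final congruence becomes an exact identity of degree at most $D_p$.
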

\begin{proof}
Lemma~\ref{lem:decomp} (or activating exactly $K$) gives
$\tau=0$ on $F_K$. Write $T=\supp(\tau)$, $|T|=r$. For each of
the at most four assignments $a\in\{0,1\}^T$ with
$v_a=\tau(a)\ne0$, the restricted system
$P_j^a=P_j|_{x_T=a}\ge0$, $j\in K$, is infeasible.
Fixing coordinates preserves Majority closure: the majority of three
extensions with the same fixed values still has those values.
Lemma~\ref{lem:hardref} therefore provides
\[
 -1\equiv R_a+\sum_{j\in K}S_{a,j}P_j^a,
 \qquad\deg R_a\le8,\quad\deg S_{a,j}\le6.
\]
These polynomials involve only the unfixed variables. Define
\[
 A_a=\prod_{i\in T}x_i^{a_i}(1-x_i)^{1-a_i},\qquad H_a=Y_K A_a.
\]
Multiply the refutation by the \emph{actual square} $H_a^2$.
Modulo Boolean equations, $H_a^2P_j^a\equiv H_a^2P_j$ and
$H_a^2(1-y_j)\equiv0$ for $j\in K$. Hence
\begin{equation}\label{eq:lift}
 -H_a\equiv H_a^2R_a+
                   \sum_{j\in K}H_a^2S_{a,j}\widehat P_j.
\end{equation}
For example the penalty cancellation is the exact ideal identity
\[
 Y_K^2(1-y_j)=-(y_j^2-y_j)y_j
                   \prod_{\ell\in K\setminus\{j\}}y_\ell^2.
\]
The fixing cancellation follows from $A_a^2(x_i-a_i)\equiv0$.
The explicit ideal-reduction formula below supplies both cancellations
as ordinary polynomial identities, without assuming any auxiliary
fixing equation in the final proof.

Boolean interpolation gives $p\equiv\sum_a v_aH_a$. For a chosen
sign $s\in\{1,-1\}$, set
\begin{align}
 \sigma_{s,0}&=\sum_{a:sv_a>0}sv_aH_a^2
       +\sum_{a:sv_a<0}(-sv_a)H_a^2R_a,\label{eq:sigma0}\\
 \sigma_{s,j}&=\sum_{a:sv_a<0}(-sv_a)H_a^2S_{a,j}
       \quad(j\in K),\label{eq:sigmaj}
\end{align}
and set all other row multipliers to zero. Every weight in these
expressions is positive. By \eqref{eq:lift},
\begin{equation}\label{eq:signcong}
 sp\equiv\sigma_{s,0}+\sum_{j\in K}\sigma_{s,j}\widehat P_j.
\end{equation}
Thus both signs have certificates using only the original inequalities.

The free term has degree at most $2|K|+2r+8$; each row term has
degree at most $2|K|+2r+7$. The positive interpolation terms have
degree at most $2|K|+2r$. Boolean ideal corrections do not increase
the largest of these degrees. Writing $q=\deg\tau$, the possibilities
are recorded in Table~\ref{tab:degrees}. Finally $Y_K(x_i^2-x_i)$,
and its negative, are direct Boolean ideal multiples; the same applies
to $y_j^2-y_j$.
\end{proof}

\begin{table}[htbp]
\centering
\begin{tabular}{@{}lccc@{}}
\toprule
Factor $\tau$ & $q$ & $r$ & Two-sided degree bound\\
\midrule
Quadratic conflict & 2 & 2 & $2\deg p+8$\\
Two-variable linear identity & 1 & 2 & $2\deg p+10$\\
Unary identity & 1 & 1 & $2\deg p+8$\\
Constant $1$ & 0 & 0 & $2\deg p+8$\\
Boolean factor & 2 & 1 & $\deg p$\\
\bottomrule
\end{tabular}
\caption{Formal degrees before Boolean reduction.}\label{tab:degrees}
\end{table}

\subsection{Explicit ideal multipliers and binary encoding}
\label{sec:bits}
For any ordinary polynomial $F=\sum_\alpha c_\alpha z^\alpha$ define
\begin{equation}\label{eq:Di}
 \mathcal D_i(F)=\sum_{\alpha:\alpha_i\ge2}c_\alpha
 \left(\prod_{h<i}z_h^{\min(\alpha_h,1)}\right)
 \left(\sum_{q=0}^{\alpha_i-2}z_i^q\right)
 \left(\prod_{h>i}z_h^{\alpha_h}\right).
\end{equation}
The telescoping identity
$z_i^a-z_i=(z_i^2-z_i)\sum_{q=0}^{a-2}z_i^q$ implies
\begin{equation}\label{eq:idealreduce}
 F=\ml(F)+\sum_i(z_i^2-z_i)\mathcal D_i(F),
 \qquad\deg((z_i^2-z_i)\mathcal D_i(F))\le\deg F.
\end{equation}
In particular put
$E_s=sp-\sigma_{s,0}-\sum_j\sigma_{s,j}\widehat P_j$.
Its multilinearization is zero by \eqref{eq:signcong}, so
\begin{equation}\label{eq:completeproof}
 sp=\sigma_{s,0}+\sum_j\sigma_{s,j}\widehat P_j
          +\sum_i(z_i^2-z_i)\mathcal D_i(E_s)
\end{equation}
is the exact certificate. This also makes every earlier congruence an
identity of the asserted degree. There are no residual auxiliary
clause axioms, hard-row axioms, or fixing axioms.

Here is a bit-complexity justification for the whole construction.
Clearing all input denominators by their product uses at most $L$
bits for that denominator. Each extraction gap $\delta$ is a sum
of signed input coefficients after fixing at most four coordinates
(two for $a$ and at most two for extraction). Its numerator and
denominator have $O(L+\log(N+2))$ bits; so does $1/\delta$ whenever
the gap is positive. A very small positive gap is therefore harmless
for bit complexity. The weights $w_i/\delta$ have the same bound.
There are at most four restricted refutations, each using at most
$4n-2$ path occurrences and $O(n)$ weighted squares per occurrence.
All other weights are $1,2$, the bounded values $|v_a|$, and the
original row coefficients, including $M_j$. A common denominator for
all expanded coefficients can be formed from polynomially many such
rationals, with polynomially many bits.

Multiplication by $Y_K^2$ shifts exponents and creates no new terms.
The $x$-part of each square in the constructed certificate is a product
of at most six literals, including the assignment indicator. Its
expansion has a uniformly bounded number of terms. Row multiplication
adds at most $n+2$ terms per product. Formula~\eqref{eq:Di} uses only
addition and monomial shifts, with at most $ND$ output contributions
per degree-$D$ monomial. Thus all operations and all coefficient bit
lengths are polynomial in $N+t+L$ for an individual structured
$p$ of degree at most $t$, in sparse encoding. In a dense Gram
encoding a safe uniform bound is $\poly(W_{2t+10},L)$ for construction,
total output length, and hence each coefficient's bit length.
The same bound covers simultaneous certification of $G_t$, since
$|G_t|\le W_t$. An effective magnitude bound $B$ is obtained by
constructing these rational certificates and taking the maximum of
one and the absolute values of all their entries. This gives a
computable $\log B=\poly(W_{2t+10},L)$, not an assumed coefficient bound.

Positive rational weights give rational diagonal Gram matrices on the
displayed square factors, hence rational PSD matrices on monomials.
If literal rational squares are desired, write a weight as $A/B>0$
with integers $A,B$. Expand $AB$ in binary. An even power $2^{2h}$
is one integer square and an odd power $2^{2h+1}$ is two such squares.
Dividing all square roots so obtained by $B$ expresses $A/B$ as a sum
of $O(\bits A+\bits B)$ rational squares. Multiplying these constants
by the displayed polynomial factors preserves degree and polynomial
encoding length.

\section{The general criteria in the form used here}
\label{sec:blackboxes}
The two black boxes below are the results entitled \emph{Dual criterion}
and \emph{Transfer from added identities to the original system} in
\cite[Theorems~3.2 and~3.3, respectively]{Mastrolilli2026}.
These numbers refer to the public version identified in the
bibliography. We specialize their notation and state the transfer
bound in the form needed here; we do not reproduce their proofs.
The signed-product identity used in the transfer is displayed
explicitly in~\eqref{eq:signed} below.
For any rational Boolean system let
$G_r$ denote the reduced graded vanishing basis through even degree $r$.

\begin{theorem}[Dual criterion, black box]\label{bb:dual}
Let $r=2d\ge2$. Suppose the coefficients of $G_r$ have magnitude
at most $C\ge1$, and both signs of each $b\in G_r$ have SoS
certificates of degree $D_b\le c d$, for a fixed constant $c$, with
Gram entries and equality coefficients bounded in magnitude by
$B\ge1$. Define
\begin{equation}\label{eq:Delta}
 \Delta=2\left\lceil\frac12\max\left\{r,
          \max_{b\in G_r}\bigl[D_b+2(r-\deg b)\bigr]\right\}\right\rceil.
\end{equation}
The maximum over an empty basis is omitted. If a rational polynomial
$f$ has a real degree-$r$ SoS proof from the original system, it has
a real degree-$\Delta$ Gram proof whose entries and equality
coefficients have magnitude
$2^{\poly(W_r,L+\bits f,\log C,\log B)}$.
Given computable bounds $B,C$ and an even search degree $O(d)$ at
least $\Delta$, a rational Gram certificate of $f+\eta\ge0$
is computable in polynomial time in the monomial count at that search
degree, $L$, $\bits f$, $\log B$, $\log C$, and $\bits\eta$,
for every rational $\eta>0$, under the stated existence promise.
The basis and its derivations need not be supplied to the search
algorithm. The perturbed certificate is an exact polynomial identity.
\end{theorem}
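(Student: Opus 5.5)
The proof has two parts: a degree bound on a transferred certificate, and a polynomial-time search for a perturbed certificate. The strategy throughout is to reduce everything to the given proof modulo the reduced basis, and then pay for the reduction using the two-sided derivations of the basis elements.

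\textbf{Transfer.} Start from a real degree-$r$ proof
\[
 f=\sigma_0+\sum_j\sigma_j\widehat P_j+\sum_i q_i(z_i^2-z_i).
\]
The key observation is that a degree-$r$ proof only ever uses its ideal part through identities of degree at most $r$. We would therefore rewrite the proof by dividing its degree-$\le r$ ideal remainder by $G_r$. Since $G_r$ is a truncated Gr\"obner basis, this division gives a standard representation
\[
 h=\sum_{b\in G_r}c_b\,b,\qquad \deg(c_b b)\le r.
\]
Then split each multiplier as $c_b=c_b^+-c_b^-$, a difference of sums of squares of degree $2(r-\deg b)$; one concrete choice is $c=\bigl((c+1)/2\bigr)^2-\bigl((c-1)/2\bigr)^2$. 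Substituting the two-sided certificates of $\pm b$ into each $c_b^\pm(\pm b)$ gives a proof of degree at most $D_b+2(r-\deg b)$. This is exactly the signed-product identity and explains the bound~\eqref{eq:Delta}.

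\textbf{Bit bounds.} For the magnitude bounds, argue that the division has rational coefficients bounded in terms of $C$ and $W_r$. The original Gram matrix can be replaced by one lying on a face of the feasible spectrahedron with entries of magnitude $2^{\poly}$. The standard argument is that a minimal face of a rational spectrahedron contains a point of bounded norm, and that the linear-algebra solves involved have polynomial size.

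\textbf{Search.} Given the existence promise, the set of degree-$\Delta$ Gram certificates is nonempty and contains a point in a ball of radius $2^{\poly}$. Adding $\eta$ lets us shift the identity matrix into the free Gram block, which yields a full-dimensional feasible ball. The ellipsoid method then finds a rational, nearly feasible point. Finally, round it and repair the equality residuals exactly through the Boolean ideal multipliers, in the style of~\eqref{eq:idealreduce}, absorbing the residual error into the $\eta$ slack on the constant term.

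\textbf{Main obstacle.} The hardest step is the magnitude bound on the transferred real proof, since the original real proof may have unbounded entries. The first thing to try is to project onto the minimal face containing a feasible point and bound its entries via Cramer's rule on the face's defining system.
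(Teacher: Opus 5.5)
The paper imports this statement from \cite{Mastrolilli2026} as a black box and only records that the degree count comes from~\eqref{eq:signed}. Your transfer paragraph reproduces that count correctly. The gap is in the magnitude bound, which is the actual content of the statement.

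Your bit-bound step rests on the claim that a nonempty rational spectrahedron, or a minimal face of one, contains a point of norm $2^{\poly}$, obtained via Cramer's rule. That is false. Cramer's rule bounds vertices of rational polyhedra, not points of spectrahedra. For example, the nonempty spectrahedron $x_1\ge2$, $\bigl(\begin{smallmatrix}1&x_i\\x_i&x_{i+1}\end{smallmatrix}\bigr)\succeq0$ has only points with $x_n\ge2^{2^{n-1}}$. For SoS proofs this is exactly the phenomenon of \cite{ODonnell17}. Moreover, \cite{raghavendra_weitz2017} give a Boolean system with degree-two proofs but no small-coefficient proof below degree $\Omega(\sqrt n)$. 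A quick sanity check exposes the problem: your bound never uses the hypothesis on $G_r$, so it would apply to every Boolean system.

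Relatedly, your transfer step acts on the wrong object. The ideal part $\sum_i q_i(z_i^2-z_i)$ of the given proof already lies in $I_{\rm bool}$, so it is already part of a proof from the original system. Rewriting it over $G_r$ and transferring it back gains nothing.

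The missing idea is to first replace the possibly huge SoS multipliers by representatives modulo the full vanishing ideal, and bound those using a quantitatively positive moment point. Only then do you use the hypotheses on $G_r$, to pay for the resulting remainder. That remainder is now a genuinely non-Boolean element of $\I(S)_{\le r}$, which is why the transfer is needed at all.

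Here is one way to do it. Rewrite $\sigma_0$ over the standard monomials of degree at most $d$; graded normal forms do not raise degree. Evaluating the proof on $S$ gives $\mathbb E_\mu[\sigma_0]\le\mathbb E_\mu[f]$ for the uniform distribution $\mu$ on $S$. The moment matrix of $\mu$ on those standard monomials has smallest eigenvalue at least $2^{-\poly(W_r)}$, since $|S|$ times it is a positive definite integer matrix and $|S|\le2^N$. This bounds the new Gram matrix of $\sigma_0$.

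For the localized terms $\sigma_j\widehat P_j$, reduce $\sigma_j$ modulo the vanishing ideal of $\{s\in S:\widehat P_j(s)>0\}$. This leaves $\sigma_j\widehat P_j$ unchanged modulo $\I(S)$, and on those points $\widehat P_j(s)\ge2^{-L}$, so the same evaluation argument applies.

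Graded division of the bounded remainder by the monic $G_r$ then costs at most a factor $(1+C)^{W_r}$. The signed-product substitution with certificate entries at most $B$ yields the degree-$\Delta$ proof with the stated bound.

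Your search step is standard once this is in place, with two corrections. Adding $\eta$ to the constant term does not by itself let you add $\epsilon I$ to a Gram block. You must pay for $\epsilon\sum_\alpha z^{2\alpha}$ with a certificate such as $\sum_\alpha(1-z^\alpha)^2$, and give the localizing blocks a margin in the same way. Also, the rounding residual must be absorbed into Gram entries using that margin, not into the constant term.
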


\begin{theorem}[Transfer, black box]\label{bb:transfer}
Let $\mathcal E$ be a finite list of nonzero rational polynomials of
degree at most even $r\ge2$, and suppose both signs of $b\in\mathcal E$
have original-system proofs of degree $D_b$. Replace $G_r$ by
$\mathcal E$ in \eqref{eq:Delta} to define $\Delta_{\mathcal E}$.
A degree-$r$ proof with the additional equality axioms $b=0$
transfers to the original system in degree $\Delta_{\mathcal E}$.
If its entries are bounded by $A\ge1$ and the two-sided proof entries
by $B\ge1$, the transferred entries are bounded by
\[
 A+2|\mathcal E|W_{\Delta_{\mathcal E}}^2(A+1)^2B.
\]
The substitution preserves rationality and has polynomial bit cost
in the encoded certificates and $W_{\Delta_{\mathcal E}}$.
On the moment side,
$\pi_r(S_{\Delta_{\mathcal E}})\subseteq S_r^{\mathcal E}$, where
the latter body imposes the added equalities with every allowed
multiplier. If their allowed multiples span all degree-$r$ vanishing
identities, then $S_r^{\mathcal E}=K_r$.
\end{theorem}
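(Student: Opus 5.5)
The plan is to prove the transfer statement by direct substitution in the certificate, and then to read off the moment-side inclusion by duality from the same substitution. A degree-$r$ proof with the added axioms has the form
\[
 h=\sigma_0+\sum_j\sigma_j\widehat P_j+\sum_{b\in\mathcal E}\lambda_b\,b+\sum_i q_i(z_i^2-z_i),
\]
with $\deg(\lambda_b b)\le r$. So each multiplier $\lambda_b$ has degree at most $r-\deg b$, but it is an arbitrary rational polynomial, not a square.

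The first step is to rewrite each term $\lambda_b b$ as a signed combination of squares times $b$, via the signed-product identity
\[
 \lambda b=\Bigl(\tfrac{1+\lambda}{2}\Bigr)^2 b+\Bigl(\tfrac{1-\lambda}{2}\Bigr)^2(-b),
\]
applied monomial-by-monomial, or through the Gram coefficients of $\lambda_b$. I expect this to be the identity the paper displays as~\eqref{eq:signed}. Each square has degree at most $2(r-\deg b)$.

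The second step substitutes for $b$ and $-b$ their two-sided original-system proofs of degree $D_b$. A square multiplied by an SoS is again an SoS. A square multiplied by $\sigma\widehat P_j$ is an SoS multiplier of $\widehat P_j$. A square multiplied by an ideal multiple of a Boolean generator stays an ideal multiple. Hence every substituted term has degree at most $D_b+2(r-\deg b)$, and the original $\sigma_0,\sigma_j$ have degree at most $r$. Rounding up to an even degree gives exactly $\Delta_{\mathcal E}$.

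For the entry bound, write each square factor in Gram form with entries at most $A+1$. The tensor product of two Gram matrices of dimension at most $W_{\Delta_{\mathcal E}}$ has entries bounded by the product of the entries. Summing the contributions, two per element of $\mathcal E$, over at most $W^2$ index collisions gives the additive term $2|\mathcal E|W_{\Delta_{\mathcal E}}^2(A+1)^2B$. Rationality and polynomial bit cost follow because everything is explicit polynomial multiplication of encoded certificates.

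For the moment side, take $u\in S_{\Delta_{\mathcal E}}$. For every admissible multiplier $q$ with $\deg(qb)\le r$, I would show $L_u(q^2b)=0$ and then polarize. The two-sided certificates multiplied by $q^2$ have degree at most $\Delta_{\mathcal E}$. Positivity of $L_u$ on these products gives both $L_u(q^2b)\ge0$ and $L_u(-q^2b)\ge0$. The signed-product identity then extends this to $L_u(\lambda b)=0$ for arbitrary $\lambda$ of degree at most $r-\deg b$. Hence $\pi_r(u)\in S_r^{\mathcal E}$. If these multiples span $\I(S)_{\le r}$, the added equalities are exactly the defining constraints of $K_r$ in~\eqref{eq:rounding-K8}, so $S_r^{\mathcal E}=K_r$.

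The main obstacle is the bookkeeping in the entry bound. The Gram matrix of $\lambda_b$ is not given directly by the input proof, which only supplies the polynomial $\lambda_b$. I would first bound the coefficients of $\lambda_b$ by $A$ and build a rank-one Gram representation of $\bigl((1\pm\lambda_b)/2\bigr)^2$. That representation has entries at most $(A+1)^2$, which accounts for the squared factor in the stated bound.
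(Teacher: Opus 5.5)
Your proposal is correct and follows essentially the argument the paper relies on. The signed-product identity~\eqref{eq:signed} turns each multiple $\lambda_b b$ into squares times $\pm b$, substituting the two-sided derivations gives degree $D_b+2(r-\deg b)$, and SoS soundness applied to both signs gives the moment-side inclusion, exactly as in the proof of Theorem~\ref{thm:main}(iv). The paper itself only cites the entry bound from the companion work without reproducing it, and your rank-one Gram accounting (entries at most $(A+1)^2/4$, at most $W_{\Delta_{\mathcal E}}^2$ index collisions, two signs per $b$) is consistent with the stated bound.
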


The degree calculation in this black box comes from the exact identity
\begin{equation}\label{eq:signed}
 ab=\left(\frac{a+1}{2}\right)^2b+
        \left(\frac{a-1}{2}\right)^2(-b).
\end{equation}
In our application the sharper per-element bound, rather than a single
worst-case degree substituted for every element, gives
\begin{equation}\label{eq:blowup}
 D_b+2(2d-\deg b)
       \le(2\deg b+10)+2(2d-\deg b)=4d+10.
\end{equation}
Boolean elements satisfy this bound as well. Since $4d+10$ is even,
both proof search and transfer may use exactly that degree as an upper
bound. Also $2\deg b+10\le4d+10\le14d$ for $d\ge1$, verifying the
constant-$c$ hypothesis. We do not assert a rational unperturbed proof
of $f$, or proof search at the original degree $2d$.

\section{Primal optimization with the computed basis}
\label{sec:primal}
We define the moment body explicitly to identify which relaxation is
optimized. For even $r\ge2$, put
$\mathcal J_r=\{A\subseteq[N]:|A|\le r\}$ and let
$u=(u_A)_{A\in\mathcal J_r}$ represent the functional
$L_u(z_A)=u_A$, extended using Boolean multilinearization.
Let $S_r$ be defined by normalization $u_\varnothing=1$ and
\begin{align}
 (u_{A\cup B})_{A,B\in\mathcal J_{r/2}}&\succeq0,
       \label{eq:moment}\\
 (L_u(\widehat P_jz_Az_B))_{A,B\in\mathcal J_{a_j}}&\succeq0,
 \qquad a_j=\left\lfloor\frac{r-\deg\widehat P_j}{2}\right\rfloor,
       \label{eq:localizing}
\end{align}
for all nonzero input rows. Boolean equality multiples hold by the
definition of $L_u$. The ordinary degree of the row is used in
\eqref{eq:localizing}; normally it is one, but constant rows are handled
by the same formula. Define
\begin{equation}\label{eq:K}
 K_r=\{u\in S_r:L_u(h)=0\text{ for all }
                  h\in\I(S)\cap\Q[z]_{\le r}\}.
\end{equation}
This is the strengthened body, independently of which complete
identity list represents it.

The following is the \emph{Supplied-Gr\"obner-basis primal optimization}
theorem of~\cite[Theorem~4.3]{Mastrolilli2026}, specialized to the
notation of the present paper.
\begin{theorem}[Supplied-Gr\"obner-basis primal optimization, black box]
\label{bb:primal}
For a rational Boolean system with nonempty feasible set $S$, suppose
the correct reduced graded lexicographic basis truncation $G_D$ of
$\I(S)$ is supplied, where $D\ge r=2d\ge2$, with rational encoding
length $L_G$. Filter it to $G_r$. The list
\[
 Q_r^{\mathrm{mult}}=\{z^\alpha b:b\in G_r,\
                              |\alpha|+\deg b\le r\}
\]
spans $\I(S)\cap\Q[z]_{\le r}$. Graded normal forms alternatively
give a complete list and a rational affine parametrization by standard
monomial moments. For every rational $f$ of degree at most $r$ and
rational $\eta>0$, a rational $u^*\in K_r$ with
\[
 L_{u^*}(f)\ge\max_{u\in K_r}L_u(f)-\eta
\]
is computable in time and output length polynomial in
$W_r,L,L_G,\bits f,\bits\eta$. All normalization, equality,
and PSD constraints are satisfied exactly. The moment degree remains
$r$ even when $D>r$.
\end{theorem}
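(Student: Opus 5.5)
The plan is to reduce optimization over $K_r$ to a semidefinite program in a lower-dimensional rational affine space where every equality constraint is built in, and then use a perturbation argument to turn an approximately optimal solution into an exactly feasible one.

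\emph{Filtering and spanning.} First I filter $G_D$ to $G_r$ by keeping the elements of degree at most $r$. Because the graded order compares degree first, any leading divisor of a monomial of degree at most $r$ has degree at most $r$; so the kept elements are exactly the reduced basis elements of degree at most $r$, and they still suffice to divide every leading monomial of $\I(S)\cap\Q[z]_{\le r}$. Spanning then follows from the division algorithm. For $h\in\I(S)$ with $\deg h\le r$, each division step subtracts some $z^\alpha b$ with $\LM(z^\alpha b)\preceq\LM(h)$. Since the order is graded, $|\alpha|+\deg b\le r$. The remainder lies in $\I(S)$ and consists only of standard monomials, so it is zero, and $h$ is a rational combination of $Q_r^{\mathrm{mult}}$.

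\emph{Affine parametrization.} Next I parametrize $K_r$ by the standard multilinear monomials $\mathcal S_r$ of degree at most $r$. Every $u\in K_r$ satisfies $L_u(z^\beta)=L_u(\NF(z^\beta))$, so $u$ is the image of $(u_\gamma)_{\gamma\in\mathcal S_r}$ under a fixed rational linear map. Conversely, any $u$ of this form annihilates all of $\I(S)_{\le r}$. To keep bit sizes polynomial, I would not iterate reductions, since the coefficients of repeated reductions can grow. Instead I would compute the normal forms by Gaussian elimination on the $W_r\times W_r$ system given by the spanning set $Q_r^{\mathrm{mult}}$, which has polynomial bit length in $W_r,L,L_G$. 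After substitution, normalization holds identically, as do the Boolean and vanishing identities. The moment matrix \eqref{eq:moment} and the localizing matrices \eqref{eq:localizing} become rational affine matrix pencils in the free coordinates, and the objective $L_u(f)$ becomes a rational linear form.

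\emph{Exact feasibility.} The remaining task is to optimize a linear function over a spectrahedron with an exactly feasible rational output. I would run the ellipsoid method or an interior-point method to obtain a rational $\tilde u$ whose objective is within $\eta/2$ of the optimum and whose matrices have eigenvalues at least $-\epsilon$. I would then repair feasibility by taking $u^*=(1-\theta)\tilde u+\theta u_0$, where $u_0$ is a rational point of $K_r$ all of whose matrices are positive definite once each matrix is restricted to the subspace where it is not identically singular on $K_r$. With $\epsilon$ small relative to $\theta\lambda_{\min}(u_0)$, the point $u^*$ is exactly PSD, and the objective loss is at most $\theta\cdot 2^{\poly}$. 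The natural candidate for $u_0$ is the moment functional of a measure on $S$: every degree-$\le r/2$ polynomial vanishing on $S$ is in the ideal, so the moment matrix on standard monomials is positive definite for a measure of full support on $S$.

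\emph{Main obstacle.} I expect the main difficulty to lie here, for two reasons. First, $S$ may be exponentially large, so $u_0$ must be computed from polynomially many points, or from the product structure of the normal-form coordinates. Second, a localizing matrix may have a kernel shared by all of $K_r$, for instance when a row is tight on all of $S$. My first attempt would be to detect these implicit kernels: a vector $c$ lies in such a kernel exactly when $\widehat P_j\,(\sum_A c_A z_A)^2$ vanishes on $S$. I would compute these kernels rationally by linear algebra modulo $G_r$, and then restrict each localizing matrix to a complementary rational subspace. On that subspace a rational average of polynomially many evaluations at points of $S$, found by 2-SAT extension queries, should be positive definite, and that average supplies $u_0$. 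Finally, choosing $\theta$ and $\epsilon$ from bit-length bounds on $\lambda_{\min}(u_0)$ and on the radius of $K_r$ yields $\eta$-optimality in the stated time. The moment degree stays $r$ throughout, because $D>r$ enters only through the filtering step.
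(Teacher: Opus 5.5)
Your first two steps are sound and match what the paper itself proves for this statement. The division argument for spanning is the one used in the proof of Theorem~\ref{thm:main}(iv). The normal-form parametrization is Proposition~\ref{prop:coords}; computing normal forms by elimination over $Q_r^{\mathrm{mult}}$ instead of direct division is a fine alternative, and more robust for a general supplied $G_D$ of length $L_G$.

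The genuine gap is the interior point $u_0$ in your exact-feasibility step. The statement concerns an arbitrary rational Boolean system of which only $G_D$ is known. There are no 2-SAT extension queries, and the basis determines $\mathcal H_r$ but gives you no points of $S$; for a general system, finding points of $S$ is itself a hard search problem. So a $u_0$ obtained by averaging evaluations at points of $S$ is not available under the hypotheses. Even in the Majority application, ``should be positive definite'' is not yet an argument. You would need a procedure that provably returns points whose vectors $e_r(s)$ affinely span $\mathcal H_r$. One way is greedy: take a nonzero standard-monomial polynomial $q$ vanishing at the points chosen so far, and find $s\in S$ with $q(s)\ne0$. That search needs Lemma~\ref{lem:decomp}, variable-by-variable self-reduction, and normal forms modulo restricted hard bases, not just tests on at most two prescribed variables.

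The missing idea is that full-dimensionality makes an explicit $u_0$ unnecessary. Proposition~\ref{prop:coords} proves $\aff K_r=\mathcal H_r$ from $\conv\{e_r(s):s\in S\}\subseteq K_r\subseteq\mathcal H_r$. In the standard-monomial coordinates $w\in\R^p$, this says $K_r$ contains the full-dimensional $0/1$ polytope of standard-monomial evaluations. That polytope contains a $0/1$ simplex of volume at least $1/p!$ and diameter at most $\sqrt p$, hence a ball of radius $2^{-\poly(W_r)}$. On the other side, $2\times2$ minors place $K_r$ inside $[-1,1]^p$. Exact rational $LDL^\top$ factorization then gives a separation oracle at rational points. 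A well-bounded full-dimensional body with such an oracle is what ellipsoid machinery of Gr\"otschel--Lov\'asz--Schrijver type (e.g.\ a weak L\"owner--John ellipsoid) handles. It returns a rational point together with an explicit ball around it contained in $K_r$, which can replace your $u_0$ in the shrinking step without any reference to points of $S$.

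The paper does not carry out this geometric part itself. It invokes \cite[Theorem~4.3]{Mastrolilli2026}, and its own contribution is the affine-hull identification above, which is the input that theorem requires.

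The same fact removes your second obstacle. Suppose the localizing matrix $M_j(u)$ of \eqref{eq:localizing} satisfies $M_j(u)c=0$ for every $u\in K_r$. Since $M_j$ is affine and $K_r$ spans $\mathcal H_r$, then $M_j(u)c=0$ for every $u\in\mathcal H_r$. So these kernels are fixed zero blocks of the pencil, computable by linear algebra on its coefficients, and your approximate point $\tilde u$ annihilates them automatically.
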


Here the promise of a correct supplied basis is discharged by
Theorem~\ref{thm:structure}: we compute $G_r$ and take $D=r$.
We describe the algebraic input and affine coordinates explicitly.

\begin{proposition}[Complete identities and affine coordinates]
\label{prop:coords}
For the soft system, a complete rational basis of the vector space
$\I(S)\cap\Q[z]_{\le r}$ and its normal-form table are computable
in $\poly(W_r,L)$ bit time. If the standard monomials through degree
$r$ are $\{1,v_1,\ldots,v_p\}$ and
\begin{equation}\label{eq:NF}
 \NF(z_A)=a_{A,0}+\sum_{i=1}^p a_{A,i}v_i,
\end{equation}
then
\begin{equation}\label{eq:Phi}
 \Phi(w)_A=a_{A,0}+\sum_{i=1}^p a_{A,i}w_i
\end{equation}
is a rational affine bijection from $\R^p$ onto
\[
 \mathcal H_r=\aff\{e_r(s):s\in S\},\qquad
 e_r(s)=(z_A(s))_{A\in\mathcal J_r}.
\]
Its free coordinates are original moments: if $z_A=v_i$, then
$\Phi(w)_A=w_i$. Moreover $\aff K_r=\mathcal H_r$.
\end{proposition}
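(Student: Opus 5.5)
We compute the truncation $G_r$ of the reduced soft basis with Theorem~\ref{thm:structure}, in $\poly(W_r,L)$ bit time, with at most $W_r$ elements of at most four terms and coefficients in $\{0,1,-1\}$. The standard monomials through degree $r$ are the multilinear monomials of degree at most $r$ whose leading monomial is not divisible by any element of $G_r$. Because the order is graded, every leading divisor of a degree-$\le r$ monomial has degree $\le r$ and so already lies in $G_r$. Thus we can list $\{1,v_1,\ldots,v_p\}$ by a divisibility test against $G_r$.

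For each $A\in\mathcal J_r$ we compute $\NF(z_A)$ by reducing modulo $G_r$. Graded reduction never raises degree, so every intermediate polynomial stays in the span of the $W_r$ monomials of degree $\le r$. For the bit bound, all coefficients of $G_r$ lie in $\{0,\pm1\}$. We reduce in order of decreasing monomial, where each step eliminates the current leading term. This is Gaussian elimination against a triangular system with $\pm1$ pivots. The normal forms are the unique solutions of a unitriangular linear system with integer entries, so they are integral with polynomially bounded bit length; Cramer or Hadamard bounds give this. The differences $z_A-\NF(z_A)$, for $z_A$ non-standard, together with $G_r$-multiples, span $\I(S)\cap\Q[z]_{\le r}$. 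This is the standard Gröbner fact that $f\in\I(S)$ of degree $\le r$ has normal form zero, with all reductions staying in degree $\le r$. The nonstandard differences are linearly independent, since each has a distinct leading monomial, so they form the required rational basis.

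For the affine bijection, $\Phi$ is injective because its coordinates at the standard monomials are the free variables $w_i$, and $\Phi(w)_\varnothing=1$. For $s\in S$ we have $z_A(s)=\NF(z_A)(s)$, so $e_r(s)=\Phi((v_i(s))_i)$ and $\mathcal H_r\subseteq\operatorname{im}\Phi$. For the reverse inclusion I plan to show that the vectors $(1,v_1(s),\ldots,v_p(s))_{s\in S}$ span $\R^{p+1}$. This holds because standard monomials are linearly independent modulo $\I(S)$, hence linearly independent as functions on the finite set $S$. So the evaluation matrix has full row rank, the affine hull of the points $(v_i(s))$ is all of $\R^p$, and its $\Phi$-image is $\mathcal H_r$. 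I expect this step to be the main subtle point, since it uses that $\I(S)$ is exactly the vanishing ideal; Theorem~\ref{thm:structure} certifies this.

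Finally, I will show $\aff K_r=\mathcal H_r$. One inclusion holds because $K_r$ imposes $L_u(h)=0$ for all identities $h$ of degree $\le r$. These are exactly the linear conditions $u_A=\Phi(w)_A$ cutting out $\operatorname{im}\Phi$ together with $u_\varnothing=1$, so $K_r\subseteq\mathcal H_r$. Conversely, each $e_r(s)$ with $s\in S$ is a genuine moment vector of a point of $S$: it satisfies the PSD and localizing constraints and annihilates $\I(S)$. Hence $e_r(S)\subseteq K_r$, and the affine hulls coincide.
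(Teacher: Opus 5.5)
Your proof is correct and follows essentially the paper's route. You compute $G_r$ by Theorem~\ref{thm:structure} and reduce monomials to an integral normal-form table of $O(W_r)$ bits. You get independence of the identity basis from distinct leading monomials, and read off $\Phi$ with free coordinates at the squarefree standard monomials. You then identify $\operatorname{im}\Phi$ with $\mathcal H_r$ through the rank of the Boolean evaluation matrix on $S$, and finish with $e_r(S)\subseteq K_r\subseteq\mathcal H_r$. Your full-row-rank statement for the standard monomials is the dual of the paper's kernel/annihilator phrasing, and, like the paper, it needs that the rational and real ranks of that matrix agree.

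One small correction concerns the basis of all of $\I(S)\cap\Q[z]_{\le r}$. Let the differences $z^\alpha-\NF(z^\alpha)$ range over all nonstandard ordinary monomials of degree at most $r$, as the paper does. The squarefree differences $z_A-\NF(z_A)$ alone span only the multilinear identities; for example, they miss $x_1^2-x_1$. With the full range, these differences form the basis by themselves and no extra $G_r$-multiples are needed.
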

\begin{proof}
Divide each ordinary monomial of degree at most $r$ by $G_r$.
Graded division never increases degree and uses at most $W_r$
cancellation steps, because each step removes the largest remaining
nonstandard monomial. The remainder is the unique reduced normal
form. Coefficients of $G_r$ have magnitude at most one and at most
four terms, so a coarse coefficient $\ell_1$ bound during division
is $4^{W_r}$ for a unit monomial input. In particular the rational
table has polynomial bit length; in fact its coefficients are integers.

The polynomials $z^\alpha-\NF(z^\alpha)$ for nonstandard monomials
of degree at most $r$ form a \emph{linearly independent} complete
identity list: each has its own nonstandard monomial with coefficient
one and otherwise only standard monomials. For any identity $h$,
linearity and $\NF(h)=0$ express $h$ as the corresponding linear
combination. This proves both completeness and the claimed vector-space
basis, including ordinary Boolean ideal multiples.

Every standard monomial is squarefree, since Boolean square leading
monomials lie in the initial ideal. The constant is standard since
$S\ne\varnothing$. Equations~\eqref{eq:NF} therefore give
\eqref{eq:Phi}, with free original coordinates. Any assignment of
values to the standard monomials, and value one to the constant,
extends by normal forms to a functional annihilating all truncated
identities. Conversely those identities force \eqref{eq:Phi}.
To identify this space with the affine hull of evaluations, consider
the Boolean evaluation matrix on $S$ for the proof only. Its kernel
is precisely the truncated multilinear identity space; its rational
and real ranks agree. Annihilating that space places the moment
vector in the real span of the evaluations. Normalization makes the
coefficients sum to one, so the span becomes their affine hull.
Finally,
\[
 \conv\{e_r(s):s\in S\}\subseteq K_r\subseteq\mathcal H_r
\]
implies $\aff K_r=\mathcal H_r$. No feasible-set enumeration is used
by the algorithm.
\end{proof}

Exact rational feasibility is the output guarantee of
Theorem~4.3 of~\cite{Mastrolilli2026}, invoked here as a black box; it
is not inferred merely from approximate floating-point SDP feasibility.
The normal-form coordinates in \eqref{eq:Phi} identify the affine hull
required by that theorem. We do not reproduce its geometric optimization
argument.

\subsection{Main primal-dual tractability theorem}
\label{sec:main}
The structural and proof-theoretic results above combine with the
general criteria to give the following computational theorem. The basis
construction supplies the primal algorithm, while the two-sided
derivations supply the dual hypothesis and the comparison with the
higher level of the original hierarchy.
\begin{theorem}[Primal-dual tractability of soft Majority systems]
\label{thm:main}
Let \eqref{eq:soft} be a rational soft system with each original row
Majority-closed and $M_j$ satisfying \eqref{eq:M}. Set $N=n+m$ and
let $L$ be its input length. For every integer $d\ge1$, put
\[
 r=2d,\qquad \kappa(d)=4d+10,\qquad
 W_s=\binom{N+s}{s}.
\]
The following assertions hold without a supplied structural promise
beyond the rowwise Majority assumption.
\begin{enumerate}[label=\textup{(\roman*)}]
\item The reduced graded lexicographic truncation $G_r$ is computable
in $\poly(W_r,L)$ bit time. It has at most $W_r$ members, at most
four terms per member, and coefficients in $\{0,1,-1\}$. Its
non-Boolean members are $Y_K\tau$, with $\tau$ on at most two
original variables. Both signs of each member $b$ have rational
SoS certificates of degree at most $2\deg b+10$. Their total dense
encoding and construction time are $\poly(W_{\kappa(d)},L)$.
\item Given rational $f$ of degree at most $2d$ and rational
$\eta>0$, under the promise that $f\ge0$ has a real
degree-$2d$ SoS proof from the original soft system, a rational
Gram certificate of $f+\eta\ge0$ from that same system is
computable in degree at most $4d+10$. Its construction time and
total bit length are
\[
 \poly(W_{4d+10},L,\bits f,\bits\eta).
\]
\item For every rational objective $f$ of degree at most $2d$ and
rational $\eta>0$, one can compute a rational
pseudoexpectation $L_{u^*}$ through degree $2d$ that is exactly
feasible in $K_{2d}$ and satisfies
\begin{equation}\label{eq:optimal}
 \max_{u\in K_{2d}}L_u(f)-\eta
       \le L_{u^*}(f)\le\max_{u\in K_{2d}}L_u(f).
\end{equation}
Its computation time and total output length are
$\poly(W_{2d},L,\bits f,\bits\eta)$.
\item The original hierarchy and the strengthened body satisfy
\begin{equation}\label{eq:sandwich}
 \pi_{2d}(S_{4d+10})\subseteq K_{2d}\subseteq S_{2d}.
\end{equation}
For maximization this gives
\begin{equation}\label{eq:values}
 \max_{s\in S}f(s)
 \le\max_{u\in S_{4d+10}}L_u(f)
 \le\max_{u\in K_{2d}}L_u(f)
 \le\max_{u\in S_{2d}}L_u(f).
\end{equation}
In particular $L_{u^*}(f)\ge\max_{u\in S_{4d+10}}L_u(f)-\eta$.
For minimization, the inclusions give the reversed chain of minima,
and applying \textup{(iii)} to $-f$ returns an $\eta$-optimal
feasible minimizer.
\end{enumerate}
For fixed $d$ these are polynomial-time Turing algorithms. For
$\eta=2^{-s}$, all precision dependence is polynomial in $s+1$.
Numerical coefficients of the input need not be polynomially bounded;
only their binary encoding length enters the bounds.
\end{theorem}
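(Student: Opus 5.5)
The proof assembles Theorem~\ref{thm:main} from the structural results of Sections~\ref{sec:softbasis}--\ref{sec:sos} and the three black boxes. I will prove the parts in the order (i), (iii), (iv), (ii), since the dual statement needs the degree bookkeeping fixed in (iv).

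For (i), Theorem~\ref{thm:structure} with $t=r$ gives $G_r$ in $\poly(W_r,L)$ time. It also gives the bound $|G_r|\le W_r$, the coefficient set $\{0,1,-1\}$, at most four terms per member, and the form $Y_K\tau$ with $\tau\in\mathcal Q\cup\mathcal L\cup\mathcal Z$. For each non-Boolean member, Theorem~\ref{thm:simulation} supplies two-sided certificates of degree $2|K|+2r'+8\le2\deg b+10$. Boolean members are handled by direct ideal multiples of their own degree. The dense-encoding bound $\poly(W_{\kappa(d)},L)$ follows from Section~\ref{sec:bits} applied with $t=2d$, because $2t+10=\kappa(d)$.

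For (iii), I will invoke Theorem~\ref{bb:primal} with $D=r$, supplying the $G_r$ just computed. Its encoding length $L_G$ is polynomial in $W_r$, and the nonemptiness hypothesis holds because $y=0$ is feasible. Proposition~\ref{prop:coords} confirms that the multiples of $G_r$ span $\I(S)_{\le r}$, so the body optimized by the black box is exactly $K_{2d}$. The upper inequality in \eqref{eq:optimal} is then just feasibility of the output.

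For (iv), the inclusion $K_{2d}\subseteq S_{2d}$ holds by definition. For the other inclusion I will apply Theorem~\ref{bb:transfer} with $\mathcal E=G_r$ and the per-element degrees $D_b\le2\deg b+10$. By \eqref{eq:blowup}, $\Delta_{\mathcal E}\le 4d+10$, so $\pi_{2d}(S_{4d+10})\subseteq S^{\mathcal E}_{2d}$. Completeness of the multiples from Proposition~\ref{prop:coords} gives $S^{\mathcal E}_{2d}=K_{2d}$. The value chain \eqref{eq:values} then follows by monotonicity of maxima under inclusion, using that every Boolean evaluation $e(s)$ for $s\in S$ lies in $S_{4d+10}$. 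Minimization follows by applying the same argument to $-f$.

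For (ii), I will apply Theorem~\ref{bb:dual} with $C=1$ and $B$ the effective magnitude bound from Section~\ref{sec:bits}, for which $\log B=\poly(W_{4d+10},L)$. The constant-$c$ hypothesis holds with $c=14$, and \eqref{eq:blowup} gives $\Delta\le4d+10$, so we may search at the even degree $4d+10$. The main obstacle is ensuring that $B$ is bounded in the required form and that the search cost remains polynomial when $\log B$ enters it. The black box only needs $B$ to be computable, so I will take $B$ from the explicitly constructed certificates. Finally, the precision dependence on $s=\bits\eta$ for $\eta=2^{-s}$ is polynomial because $\bits\eta$ enters every bound only polynomially.
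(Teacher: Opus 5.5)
Your proposal is correct and follows essentially the paper's own assembly. It uses Theorems~\ref{thm:structure} and~\ref{thm:simulation} with $t=2d$ for (i), the supplied-basis primal black box with $D=r$ for (iii), the dual black box with $C=1$, the effective $B$ of Section~\ref{sec:bits} and \eqref{eq:blowup} for (ii), and the transfer with $\mathcal E=G_{2d}$ for (iv). The only differences are cosmetic: for the first inclusion in \eqref{eq:sandwich}, the paper spells out the direct argument (dividing an identity by $G_{2d}$, using the signed-product identity \eqref{eq:signed}, then SoS soundness) and notes that this coincides with the transfer black box; it also records compactness of the bodies, so that the maxima in \eqref{eq:values} exist, a point your sketch leaves implicit.
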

\begin{proof}
Part (i) combines Theorems~\ref{thm:structure} and
\ref{thm:simulation}, using $2r+10=4d+10$ for simultaneous
certificate construction. Their explicitly computed magnitudes give
the bound $B$ in Theorem~\ref{bb:dual}, and $C=1$.
Equation~\eqref{eq:blowup} and that black box prove (ii).
For (iii), compute $G_{2d}$ and supply it to
Theorem~\ref{bb:primal}; its encoding length is already polynomial
in $W_{2d},L$. Nonemptiness is guaranteed by $y=0$.

For (iv), any identity $h$ of degree at most $2d$ divides as
$h=\sum_{b\in G_{2d}}a_bb$ with
$\deg a_b\le2d-\deg b$. By \eqref{eq:signed} and the verified
per-element degrees, both signs of each $a_bb$, and hence of $h$,
have proofs of degree at most $4d+10$ from the original system.
SoS soundness forces $L_u(h)=0$ for every $u\in S_{4d+10}$.
Restriction preserves the lower-degree PSD constraints and normalization,
giving the first inclusion in \eqref{eq:sandwich}; the second follows
from \eqref{eq:K}. This is also the application of
Theorem~\ref{bb:transfer} with $\mathcal E=G_{2d}$ and its complete
list of allowed multiples. Every feasible evaluation belongs to every
level. Boolean moment bounds make these bodies compact, so their
optima exist and inclusions yield \eqref{eq:values}. Combining it
with \eqref{eq:optimal} gives the last assertion.
\end{proof}

\begin{remark}[Degrees and limits of the conclusion]
The algebraic degree is $D=2d$; a correct supplied $G_D$ with $D\ge2d$
can instead be filtered, with cost charged to its encoding.
The optimized moments have degree $2d$ (matrix order $d$); basis
derivations have degree at most $2\deg b+10$; eliminating arbitrary
permitted equality multipliers costs at most $4d+10$.
Neither $K_{2d}=S_{2d}$ nor extendibility of $u^*$ to $S_{4d+10}$ is
asserted. Its objective need not approximate the higher-level optimum;
only the one-sided comparison in part (iv) is guaranteed.
Proof search perturbs $f$ by the scalar $\eta$, while primal
feasibility is exact.
\end{remark}

\clearpage
\phantomsection
\addcontentsline{toc}{section}{References}
\begingroup
\small
\bibliographystyle{plain}
\bibliography{robust_majority}
\endgroup
\end{document}